\def\article@logo{%
  \set@logo{}%
}
\def \mmedskip {\vskip 4pt }
\def\rmd {\mkern 2mu {\rm d}}
\def\rme {\mkern 2mu {\rm e}}
\def\rmi {\mkern 2mu {\rm i}}
\def \virg		{\,\raise 2pt\hbox{\rm,}\, }
\renewcommand{\N}{\mathbb{N}}
\renewcommand{\Q}{\mathbb{Q}}
\renewcommand{\C}{\mathbb{C}} 
\renewcommand{\R}{\mathbb{R}}
\renewcommand{\Z}{\mathbb{Z}}
\newcommand{\sB}{\mathcal{B}}
\newcommand{\sD}{\mathcal{D}}
\newcommand{\sH}{\mathcal{H}}
\newcommand{\sI}{\mathcal{I}}
\newcommand{\sL}{\mathcal{L}}
\newcommand{\sK}{\mathcal{K}}
\newcommand{\sN}{\mathcal{N}}
\newcommand{\sP}{\mathcal{P}}
\newcommand{\pL}{\mathfrak{L}}
\newcommand{\ugo}{{\underline{\omega}}}
\newcommand{\abs}[1]{\mathopen|#1\mathclose|}
\newbox \cadre
\newdimen\htcadre
\newdimen\dpcadre
\def \shortboxit (#1,#2,#3,#4)#5%
\def\longboxit [#1](#2,#3,#4,#5)#6%
\def \choixboxit
\ifx \token [  
\let \next = \longboxit 
\let \next = \shortboxit \fi  
\def\boxit {\futurelet \token \choixboxit }
\def \ensemble #1{\lbrace #1\rbrace}
\def\pv 		{\hskip 1.5pt plus .5pt minus 0.3pt
   		\string;\hskip 1pt plus .5pt minus 0.8pt}
\newcommand{\resp}{resp. }
\newcommand{\cf}{{\it cf.} }
\newcommand{\ie}{{\it i.e.}, }
\newcommand{\diag}{{\rm diag} }
\newcommand{\ga }{\alpha }
\newcommand{\gb }{\beta }
\newcommand{\gc }{\gamma }
\newcommand{\gd }{\delta }
\newcommand{\gve }{\varepsilon }
\newcommand{\gvf }{\varphi }
\newcommand{\gz }{\zeta }
\newcommand{\gl}{\lambda}
\newcommand{\go }{\omega }
\newcommand{\gt}{\theta}
\newcommand{\gC }{\Gamma }
\newcommand{\gD }{\Delta }
\newcommand{\gL}{\Lambda}
\newcommand{\hY}{\widehat Y}
\newcommand{\hy}{\widehat y}
\newcommand{\hH}{\widehat H}
\newcommand{\hK}{\widehat K}
\newcommand{\hh}{\widehat h}
\newcommand{\tY}{\widetilde Y}
\newcommand{\tH}{\widetilde{\sl H}}
\newcommand{\tK}{\widetilde{\sl K}}
\newcommand{\ty}{\widetilde y}
\newcommand{\tf}{\widetilde f}
\newcommand{\tg}{\widetilde g}
\newcommand{\tth}{\widetilde h}
\newcommand{\tq}{{\widetilde q}}
\def\tildebrack #1#2%
\def\hatbrack #1#2%
\newtheorem{lemma}{Lemma}[section]
\newtheorem{cor}[lemma]{Corollary}
\newtheorem{pro}[lemma]{Proposition}
\newtheorem{dfn}[lemma]{Definition}
\newtheorem{hyp}[lemma]{Hypothesis}
\theoremstyle{remark}
\newtheorem{rem}[lemma]{Remark}
\newtheorem{exa}[lemma]{Example}
\newtheoremstyle{algorithm}{\smallskipamount}{\smallskipamount}{\normalfont}{\parindent}{\normalfont\scshape}{\pointir}{0pt}{}
\theoremstyle{algorithm}
\newtheorem{algorithm}{Algorithm}
\newcommand\omicron{\mathrm o}
\newcommand{\mathd}{\mathrm{d}}
\newcommand\codestar{\texttt}
\newcommand{\ddx}[1]{\frac{\mathd}{\mathd #1}}
\newcommand{\diff}[2]{\frac{\mathd^{#2}}{\mathd \, #1^{#2}}}
\providecommand{\tmat}[2]{\mathbf{T}_{#1, #2}}
\newcommand{\tmatstar}[2]{\mathbf{T}^{\ast}_{#1, #2}}
\newcommand{\bmat}[1]{\mathbf{B}_{#1}}
\newcommand{\bfmat}[1]{\mathbf{L}_{#1}}
\newcommand{\mmat}[1]{\mathbf{M}_{#1}}
\newcommand{\QQbar}{\bar{\mathbb{Q}}}
\newcommand{\localbasis}[1]{\hat{\mathcal{Y}}_{#1}}
\newcommand{\dir}{\underline{{\omega}}}
\newcommand{\Balls}{\mathbb{C}_{\bullet}}
\newcommand{\CComp}{\mathbb{C}_{\circlearrowleft}}
\newcommand{\bttlt}{<^{\uparrow}}
\newenvironment{enumerate*}{\begin{enumerate}}{\end{enumerate}}
\title[Rigorous computation of Stokes multipliers]{Rigorous numerical computation of the Stokes multipliers for linear differential equations\\ with single level one}
\author{\firstname{Michèle} \lastname{Loday-Richaud}}
\address{Université Paris-Saclay, CNRS, Laboratoire de mathématiques d'Orsay, 91405, Orsay, France.}
\email{michele.loday@universite-paris-saclay.fr}
\author{\firstname{Marc} \lastname{Mezzarobba}}
\address{
CNRS-LIX
1 rue Honoré d'Estienne d'Orves -
Bâtiment Alan Turing
CS35003 -
91120 Palaiseau, France}
\email{marc@mezzarobba.net}
\thanks{MM's work was supported in part by ANR grants ANR-20-CE48-0014-02 (NuSCAP) and ANR-22-CE48-0016 (NODE)}
\author{\firstname{Pascal} \lastname{Remy}}
\address{Laboratoire de Mathématiques de Versailles, UVSQ (Paris-Saclay) \& CNRS (UMR 8100), 45 avenue des Etats-Unis, 78035 Versailles cedex, France}
\email{pascal.remy@uvsq.fr}
\keywords{linear differential equations, Stokes multipliers, summability, algorithms, software, rigorous computing}
\subjclass{34M03, 34M30, 34M35, 34M40, 65G20}
\begin{document}

\begin{abstract}
  We describe a practical algorithm for computing the Stokes multipliers of a linear differential equation with polynomial coefficients at an irregular singular point of single level one. The algorithm follows a classical approach based on Borel summation and numerical ODE solving, but avoids a large amount of redundant work compared to a direct implementation. It applies to differential equations of arbitrary order, with no genericity assumption, and is suited to high-precision computations. In addition, we present an open-source implementation of this algorithm in the SageMath computer algebra system and illustrate its use with several examples. Our implementation supports arbitrary-precision computations and automatically provides rigorous error bounds. The article assumes minimal prior knowledge of the asymptotic theory of meromorphic differential equations and provides an elementary introduction to the linear Stokes phenomenon that may be of independent interest.
\end{abstract}

\maketitle

\section*{Introduction}\label{intro}

The aim of the present article is to describe a practical algorithm for the numerical computation of the \emph{Stokes matrices} of a linear differential equation~$Dy (x) = 0$ with coefficients in~$\mathbb{C} [x]$, under the assumption that~$D$ has a \emph{single level equal to one} (Definition~\ref{defpure1}, p.~\pageref{defpure1}) at the singular point of interest.

The Stokes phenomenon is a discontinuity phenomenon, long observed by physicists interested in numerical computations, in the asymptotic behaviour of analytic functions as the variable approaches a singularity from a varying direction. In the case of solutions of linear differential equations, the phenomenon originates in the occurrence in the asymptotic expansions of exponential factors that alternate between being flat in some directions and exponentially large in others. At the boundaries of a sector of flatness (\emph{Stokes directions}), the asymptotic expansion of a given solution can suddenly change albeit the solution itself shows no discontinuity. For the analysis of the Stokes phenomenon, though, it is more appropriate to look at the bisectors of the sectors of flatness, called \emph{anti-Stokes} or \emph{singular} directions. Understanding the Stokes phenomenon amounts to characterising the gaps between analytic solutions having the same asymptotic expansion on sectors that overlap around an anti-Stokes direction. Once appropriately normalised\footnote{In older literature, the name \emph{Stokes matrix} is used for matrices connecting arbitrary pairs of analytic fundamental solutions asymptotic to the same formal fundamental solution. Here, in accordance with modern usage, we reserve it for those matrices representing the components of the Stokes cocycle mentioned below.}, these gaps are described by a finite number of constant matrices called Stokes matrices.

Starting with Stokes' own work on the Airy equation and until the 1970's, the study of the Stokes phenomenon focused on the calculation of Stokes multipliers (nontrivial entries of the Stokes matrices) for differential equations with explicit solutions, typically given by integral formulae. This mainly concerns the family of hypergeometric equations, including the generalised hypergeometric equations, whose solutions are expressible in term of Barnes-Mellin integrals~\cite{DM89}. Such an approach is possible thanks to the fact that the Mellin transformation translates these equations into first-order difference equations.

From a theoretical perspective, the linear Stokes phenomenon was fully understood with the cohomological approach initiated by Sibuya and Malgrange \cite{Ma79}, which resulted in its characterisation by the Stokes cocycle~\cite{L-R94,L-R16,BV83}. In this line of work, the Stokes matrices are viewed primarily as the \emph{transcendental local invariants} in the classification of linear differential equations by meromorphic gauge transformations. The introduction of the Stokes cocycle makes it clear how to choose a non-redundant family of Stokes matrices---one for each anti-Stokes direction---in full generality. The cohomological framework also led to a new approach to the determination of Stokes multipliers, resulting essentially from the comparison between the isomorphism theorem of Malgrange-Sibuya and its infinitesimal version based on the Cauchy-Heine integral. The Stokes multipliers are then obtained as limits of rapidly converging linearly recurrent sequences. This method was only developed for linear equations with a single level, either of small order or satisfying genericity assumptions~\cite{JLP76II,MartR82,LS97,L-R90}. (See~\cite{LS97} for additional references.)

In parallel, the development in the 1980's of several theories of summability and of the theory of resurgence~\cite{Ecalle1981,Ecalle1981a,Ec85} provided another access to Stokes multipliers. In this approach each Stokes matrix results from the comparison of the sums of a given formal solution on both sides of an anti-Stokes direction. At a singular point of a single level one, the classic Borel-Laplace summation method suffices to obtain integral formulae for the sums that can be used to compute the Stokes multipliers. An explicit formula for the Stokes multipliers of a linear differential system with the single level~1 is given in~\cite[Thm.~4.3]{LR11} (see also~\cite{Guzzetti2016}). In the case of multiple levels, a more powerful theory of summation is required. Écalle's acceleration method~\cite{Ec85,Braaksma1991,Ecalle1992} was the first such theory amenable to effective computations. A variant that has also been used in practice is Balser's method of iterated Laplace integrals~\cite{Balser1992}.

Starting in the early 1990's, Thomann and several coauthors explored the application of summation methods in numerical computations with divergent solutions of linear differential equations. They showed how the Borel-Laplace method and its extensions can be implemented by combining techniques from symbolic computation with classical numerical methods and used to evaluate the sums of divergent solutions \cite{Thomann1990,Naegele1995,Thomann1995,JungNaegeleThomann1996,FauvetThomann2005} or compute the Stokes matrices~\cite{FJT09}. Their work is, to the best of our knowledge, the only one that went beyond an ad-hoc implementation dedicated to a specific example. However, due in part to the choice of numerical methods, it is mostly limited to machine-precision computations on equations of small order with sufficiently generic Stokes values and local exponents. In addition, unfortunately, their software is not publicly available.

Van~der~Hoeven~\cite{vdH2007,vdH16} gave a complete algorithm for computing sums of solutions of linear differential equations with polynomial coefficients, based on Écalle's acceleration method. His algorithm is optimised for high-precision computations; in particular, it can compute the Stokes multipliers of a fixed equation with an error of at most~$2^{- p}$ in time $\mathrm{O} (p \log^{3 + \omicron (1)} p)$. In addition, it yields rigorous bounds on the difference between the numerical result and the mathematical value\footnote{This is in contrast with typical numerical methods, which guarantee at best that the computed result converges to the true value at a certain rate when parameters such as a step size or a numerical working precision tend to zero or infinity. Nevertheless, most of the other algorithms discussed above could in principle be adapted to produce rigorous error bounds as well. Richard-Jung~\cite{Richard-Jung2011} presented an example of the computation of Stokes multipliers with rigorous error bounds.}. This algorithm is the most complete algorithmic treatment to date of the summation of divergent solutions of linear differential equations, but it has not been fully implemented\footnote{A~prototype implementation limited to the case of singular points of a single level~$1$ and due to the second author is available as part of the \codestar{ore\_algebra} package mentioned below.}.

In the present work, we focus on the computation of Stokes multipliers, as opposed to the more general problem of evaluating the sums, and we restrict ourselves to singular points with a single level equal to one. In this setting, we give a detailed description and a complete implementation of an algorithm for computing the Stokes matrices of a linear differential equation with coefficients in~$\bar{\mathbb{Q}} [x]$. There is no additional restriction; in particular, the algorithm applies to equations of arbitrary order and makes no assumption on the geometry of Stokes values or the shape of indicial polynomials.

The method is very similar to that of Fauvet, Richard-Jung and Tho\-mann~\cite{FJT09}, but our algorithm is more complete in that it can handle arbitrary degeneracies with no user intervention, and produces rigorous error bounds. It differs from van der Hoeven's method in that the numerical computation of integral transforms of holonomic functions is replaced by expressions in terms of special functions. This leads to a marginally better computational complexity estimate at high precision for the special case we consider. A more important difference in practice, compared to all existing approaches, is that we pay attention to avoiding redundancies that occur when computing several Stokes multipliers of the same equation.

Like virtually all algorithms for the problem of computing Stokes multipliers, the one we develop mixes exact and approximate computations. Roughly speaking, the geometry of Stokes values and the structure of local exponents at singular points in the Borel plane (see Sec.\ \ref{BorelSol}~and~\ref{sec:Stokes-Borel}) must be determined exactly because one needs to decide when Stokes values are aligned or local exponents differ by integers. The rest of the computation is done numerically but, as already mentioned, with rigorous error bounds. While, in principle, one could labor to determine a priori how accurately to perform each step of the computation in order for the final result to satisfy a given error tolerance, it is difficult to obtain sharp error bounds this way while keeping the calculations manageable. Instead, and following standard practice, we content ourselves with keeping track of the accumulated error over the course of the computation and returning an error bound along with the result. This also allows us to refer to existing work~\cite{Johansson2023,Mezzarobba2019} for bounding the error resulting from the key analytic steps of the algorithm (namely numerical integration of differential equations and the evaluation of special functions). In case the error bound is insufficient for the application, it is always possible to run the algorithm again at higher precision, and doing so eventually produces arbitrarily precise results.

Returning error bounds increases confidence in the computed results but also has algorithmic applications based on differential Galois theory, via the Ramis density theorem.
For instance, checking that one of the Stokes multipliers of one of the variational equations of a Hamiltonian system is nontrivial suffices to prove that the system is not completely integrable with meromorphic first integrals~\cite{Morales-RuizRamis2010,BoucherWeil2007,Ramis2024}.
Criteria of this kind are available for a number of other problems such as certifying that the differential operator~$D$ is irreducible or that it is the minimal-order operator annihilating a certain function.
Importantly, though, such local criteria typically translate into sufficient conditions only.
Full characterizations---of irreducibility, for instance---in the same style, and complete algorithms based on these characterizations~\cite{vdH2007,Goyer2025}, are possible given a representation of the \emph{global} differential Galois group through monodromy and Stokes automorphisms at all singular points \emph{expressed in the same basis}.
Moving the Stokes automorphisms computed by our algorithm (which are expressed in a basis that depends on the singular point) to the same point is a connection problem that effective summation methods can serve to solve, and thus our algorithm, even when it applies, does not eliminate the need for such methods.

We have implemented our algorithm using the SageMath computer algebra system. Our implementation is part of the \codestar{ore\_algebra} package, available at \codestar{\href{https://www.github.com/mkauers/ore\_algebra/}{https://www.github.com/mkauers/ore\_algebra/}}. It heavily relies on the \textsc{Flint} library\footnote{See \href{https://flintlib.org}{https://flintlib.org/}. More specifically, we mainly use \textsc{Flint}'s modules for real and complex ball arithmetic, formerly distributed as a separate library called \textsc{Arb}~\cite{Johansson2017}.}, and on pre-existing code by the second author for the computation of connection matrices between regular singular points~\cite{Mezzarobba2016}.

\paragraph*{Outline}

This article is organised as follows.

In Part~\ref{part:theory}, we develop a first, `theoretical' algorithm that reduces the numerical computation of Stokes matrices to the numerical solution of connection problems between regular singular points. The algorithm amounts to comparing the 1-sums of a formal fundamental solution to on either side of each singular direction. For the convenience of the non-expert reader, and also because we could not find a fully adequate reference, we recall in that part most of the necessary background, including the notion of equation of single level~1, the Borel--Laplace summation method, the definition of Stokes matrices based on lateral sums, and the expression of the Stokes multipliers in terms of the connection constants of the Borel transform of the differential operator~$D$.

The regular singular connection problems we are left with are much easier than the general ones alluded to above. Solving them numerically reduces to computing the analytic continuation of solutions of differential equations given by convergent series expansions at singular points, or in other words to solving generalised differential initial value problems. Even so, it is the most computationally expensive stage of the method in practice.
In Part~\ref{part:algorithm} of the article, we present a more practical variant of the algorithm from Part~\ref{part:theory}, designed to minimize the use of numerical analytic continuation and be relatively easy to implement. We give a detailed description of the algorithm in terms of operations readily available in today's computer algebra libraries. We also introduce our implementation and illustrate its use on several examples.

\paragraph*{Acknowledgments}

We are grateful to Bruno Salvy for enabling this collaboration by initiating the contact between us.
We also thank him, as well as Fredrik Johansson, Jean-Pierre Ramis, and Anne Vaugon, for valuable comments and discussions.

\part{Underlying theory}
\label{part:theory}

All along this paper we consider a linear differential equation
\begin{equation}\label{Eqinitiale}
 Dy(x)=0
\end{equation}
with polynomial coefficients in the complex variable~$x$, which we study at an irregular singular point that we position at~$x=0$.
For reasons that will become clear later (\cf Sec.~\ref{eqBorel}, p.~\pageref{eqBorel}), we write the operator~$D$ in terms of the derivation~$\partial=x^2 \rmd/\rmd x$ and, dividing by a convenient power of~$x$ if necessary, we assume that~$D$ is of the form
\begin{equation}\label{Opinitial}
D\equiv  D\Big(\frac{1}{x}, \partial\Big) = a_n\Big(\frac{1}{x}\Big) \partial^n +a_{n-1}\Big(\frac{1}{x}\Big) \partial^{n-1} + \cdots + a_1\Big(\frac{1}{x}\Big) \partial +a_0\Big(\frac{1}{x}\Big)
\end{equation}
where the coefficients
\begin{equation}\label{coef}
 a_\ell(X) = \sum_{j=0}^\nu a_{\ell,j}\, X^j
\end{equation}
for~${\ell= 0,\dots, n}$ are polynomials of degree~$\leq \nu$.
We assume that $D$~has order~$n$, that is,
$a_n(X)\not\equiv 0$,
and at least one of the coefficients~$a_\ell$ has degree~$\nu$.

Starting from Sec.~\ref{preparedsol} we additionally assume the restrictive condition that the singular point at the origin is of single level one (\cf Definition~\ref{defpure1}, p.~\pageref{defpure1}, Hypothesis~\ref{hyp2}, p.~\pageref{hyp2}).

\section{The equation in the Laplace plane} \label{eqLaplace} 

The plane~$\C$ with coordinate~$x$ where  the initial equation~(\ref{Eqinitiale}) is considered is commonly called  the \emph{Laplace plane}, recalling thus that it is the image by a Laplace transformation of the \emph{Borel plane} to be considered in the next section.
It is well-known that the equation $D\,y = 0$ admits $n$~linearly independent formal solutions of the form
\begin{equation} \label{eq:individual-sol}
  \Bigl( \sum_{j=0}^{s-1} \tilde g_{i,j}(x) \ln^j x \Bigr) x^{\mathcal L_i} \rme^{q_i(1/x)}.
\end{equation}
Here the term \emph{formal} refers to the fact that the series~$\tg_{i,j}(x)$ may be divergent.
Such a family is collectively called a \emph{formal fundamental solution} and one can choose it in the form
\begin{equation}\label{ffs}
\begin{bmatrix}
 \tf_1(x) &\tf_2(x)& \cdots & \tf_n(x)
\end{bmatrix}
x^{\frak{L}} \, \rme^{Q(1/x)}
\end{equation}
where the~$\hspace{-.5mm}\tf_j(x)$'s are (usually divergent) power series of~$x$, $\frak{L}$ is a {constant}~ma\-trix in Jordan form and~$Q(1/x)=\diag(q_1(1/x),\dots,q_n(1/x))$ with~${q_j(1/x)}$'s that are polynomials in~$1/x$ or in  fractional powers of~$1/x$ with no constant term. These polynomials are called the \emph{determining polynomials of}~$D$. The matrix~$\frak{L}$ is called the matrix of \emph{exponents of formal monodromy}.

\subsection{Newton polygons and single level~1} \label{NewtonLevel1}

The Newton polygon of the operator~$D$ at~0 is defined as follows  \cite[Sec.~3.3.3.1]{L-R16}: one defines the weight~$w$ of a monomial by~$w(x^{-j} \partial^\ell)= \ell-j$ and the weight of~$D$ as the smallest weight of a non-zero monomial in~$D$. With these non-zero monomials one associates marked points of coordinates~${(\ell, \ell-j)}$ in the half-plane~$\R_{\geq 0} \times \R$ and second quadrants based at these marked points. The Newton polygon~$\sN(D)$ of~$D$ is the convex hull of this family of quadrants. The Newton polygon may contain a horizontal side, one or more sides with positive slopes, and always ends with a vertical line of abscissa~$n$ which we do not consider as a side. One can prove that the slopes of the sides (including~$0$) are the degrees of the determining polynomials (including the trivial polynomial~$q\equiv0$). Moreover, if the side of slope~$p$ has a horizontal length equal to~$r$ then there are~$r$ determining polynomials of degree~$p$.

We now explain how to calculate the exponents (\ie the diagonal terms of~$\frak{L}$) and the determining polynomials~$q_j$.
For simplicity---since we are interested in this case only---\emph{we suppose from now  that the operator is  of level~one}, which means that the largest slope of its Newton polygon is equal to~1.
This is a weaker condition than being of single level~1 (Hypothesis~\ref{hyp2}, p.~\pageref{hyp2}). One sees on the Newton polygon~$\sN(D)$ that $D$ is of  level~1 if and only if there exists~$j, 0\leq j<n$ such that
$a_{n,\nu} \, a_{j,\nu} \neq 0$.

Like for differential equations with constant coefficients the exponents of the solutions \eqref{eq:individual-sol} without exponential factor are the roots of an algebraic equation called the \emph{indicial equation} and built as follows. Suppose that the horizontal side of~$\sN(D)$ has length~$k_0$ and lies at ordinate~$h_0$; then, the operator restricted to this horizontal side reads~${ \sum_{\ell=0}^{k_0} a_{\ell, \ell-h_0} (1/x)^{\ell-h_0} \partial^\ell  }$. The indicial equation is obtained by writing that the monomial~$x^\sL$ is a solution of this restricted operator and it reads
\begin{equation}\label{indicialeq}
\sP(\sL)\equiv \sum_{\ell=0}^{k_0} a_{\ell, \ell-h_0} [\sL]_{\ell^+} =0
\end{equation}
with the following notation.

\begin{nota}\label{notalambdaj}
 For~$\gl\in\C$ and~$j\in\Z_{>0}$ we denote
\begin{equation*}
\begin{aligned}
 [\gl]_{j^-} =& \gl(\gl-1)\cdots(\gl-j+1) \quad \text{($j$ factors decreasing by~1 from $\gl$)};\\
 [\gl]_{j^+} =& \gl(\gl+1)\cdots(\gl+j-1)  \quad \text{($j$ factors increasing by~1 from $\gl$)}.\\
\end{aligned}
\end{equation*}
\end{nota}

Saying that the operator is of level~$1$ is equivalent to saying that the highest degree of its determining polynomials is equal to~$1$. However, even determining polynomials of degree~$1$ might contain fractional powers of~$1/x$. If the horizontal length  of the side of slope~1 is~$k_1$, there are~$k_1$ determining polynomials of the form~$q_j(1/x) =-\ga_j/x +o(1/x)$, the other~$n-k_1$ having no term of degree~1 (\ie $\ga_j=0$).
The coefficients~$\ga_j$, zero or not, are the (possibly multiple) roots of the \emph{characteristic polynomial} associated with the side of slope~1 and obtained as follows. One considers the operator~$D$ restricted to  the side of slope~1, that is,  the operator~${\sum_{j=0}^{k_1} a_{n-j, r} (1/x)^{r} \, \partial^{n-j}}$ for a suitable~$r$. When the side of slope~1 of the Newton polygon~$\sN(D)$ is on the first bisector one has~$r=0$, and in general $\sN(x^rD)$ has a side of slope~1 on the first bisector of the axes.
The \emph{characteristic equation} (of level~1) then reads
\begin{equation}\label{charpol}
 {P(X)\equiv\sum_{j=0}^{k_1} a_{n-j, r} X^{n-j}=0}.
\end{equation}
It admits~$X=0$ as a root of multiplicity~$n-k_1$. The roots of~$P(X)$ are called~\emph{Stokes values} (of level 1) of~$D$; the non-zero ones are called \emph{characteristic roots}.
In what follows, when we refer to `the' characteristic polynomial, characteristic roots, or Stokes values of~$D$, we always mean the characteristic polynomial of level~$1$ and its roots.

\begin{dfn}\label{defpure1}
The operator~$D$ is said to be \emph{of single level~1} (or of \emph{pure} level~1) if all its nonzero determining polynomials~$q_j(1/x)$ are monomials of degree~1 in~$1/x$.
\end{dfn}

Equivalently, $D$ is of single level~$1$ if \emph{all non-zero differences~$q_\ell-q_j$ among the determining polynomials of~$D$ are  of degree~$1$}. Indeed, suppose that there is a determining polynomial~${q(1/x)=\ga/x+\gb/{x^{r/s}}+\cdots}$ with~$r/s\in\Q$, $0<r/s<1$ and~$\gb\neq 0$. Since the change of variable~${x \leftarrow x\rme^{2\pi \rmi}}$ leaves the operator~$D$ invariant, there must also be a determining polynomial of the form ${\tq(1/x)=\ga/x+ \gb\rme^{-2r\pi\rmi/s}/{x^{r/s}}+\cdots}\cdot$ The difference~$q(1/x)-\tq(1/x)$ is a polynomial of degree~$r/s$ with~$0<r/s<1$, hence the operator~$D$ is not of single level. The converse assertion is straightforward.

As the terminology suggests, if $D$~is of single level~$1$, its Newton polygon~$\sN(D)$ has a unique side of slope~$1$ in addition to a possible horizontal side, however this condition is not sufficient.

\subsection{The operators~$D_{[\ga]}$}

To characterise operators of single level~1 it is useful to introduce the operators~$D_{[\ga]}$ deduced from~$D$ by the change of variable~$y=\rme^{-\ga/x} \, z$ with~$\ga$ a Stokes value (of level~1). The operator~$D_{[\ga]}$ has the elementary following properties.\label{propertiesDalpha}

\begin{enumerate}
 \item \label{item:shift-dop} One has $D_{[\ga]}(1/x, \partial) =D(1/x, \partial+\ga)$, and hence
  \begin{equation*}
\begin{aligned}
 D_{[\ga]}& = \textstyle a_n \big(\frac{1}{x}\big)(\partial+\ga)^n+a_{n-1}\big(\frac{1}{x}\big)(\partial+\ga)^{n-1}+\cdots+ a_0\big(\frac{1}{x}\big) \\
&= \textstyle A_n\big(\frac{1}{x}\big)\partial^n + A_{n-1}\big(\frac{1}{x}\big)\partial^{n-1} +\cdots +  A_1\big(\frac{1}{x}\big) \partial + A_0\big(\frac{1}{x}\big)\\
\end{aligned}
\end{equation*}
where, for $\ell=0,1,\dots, n$ and denoting by~$ \binom{j}{\ell}$
 the binomial coefficients,  the new coefficients~${A_\ell \big(\frac{1}{x}\big)= \sum_{j=0}^\nu A_{\ell,j} \frac{1}{x^j}}$ read

 \begin{equation}\label{Aell}
 \begin{split}
 \textstyle
 A_\ell \big(\frac{1}{x}\big)= a_n \big(\frac{1}{x}\big)  \binom{n}{\ell}\ga^{n-\ell} + a_{n-1}\big(\frac{1}{x}\big)  \binom{n-1}{\ell}  \ga^{n-1-\ell} +\cdots\\
 \textstyle
 \dots+ a_{\ell+1}\big(\frac{1}{x}\big)  \binom{\ell+1}{\ell} \ga^1+ a_\ell\big(\frac{1}{x}\big)  \binom{\ell}{\ell}.
 \end{split}
 \end{equation}
 In particular, $A_n=a_n$.

 \smallskip
\item For~$\ell=0,\dots, n$,  one has 
$
 A_{\ell,\nu} = \sum_{j=\ell}^n a_{j,\nu}  \binom{j}{\ell} \ga^{j-\ell} =  P^{(\ell)}(\ga)\,/\,(\ell\,!)
$
where~$P^{(\ell)}$ is the~$\ell^{th}$ derivative of the characteristic polynomial~$P$ of~$D$ (\cf equation~\eqref{charpol}, p.~\pageref{charpol}). 
\smallskip
\item \label{item:mult}
If~$\ga$ has multiplicity~$k$ as a Stokes value, then~$A_{k,\nu}\neq 0$.
\smallskip
\item
 Therefore, $D_{[\ga]}$ has order~$n$ and degree~$\nu$.
\smallskip
\item If~$D$ is of pure level~1 so is~$D_{[\ga]}$ for any~$\ga$.

 \smallskip
 \item If~$P(X)$ is the characteristic polynomial  of~$D$ then that of $D_{[\ga]}$ is
 $$P_{[\ga]}(X)= P(\ga+X).$$ And indeed, if the polynomials~$q_j(1/x)$'s are the determining polynomials of~$D$, the  determining polynomials of~$D_{[\ga]}$ are the polynomials~$q_j(1/x)+\ga/x$.
\smallskip
\item The exponents~$\sL$ of the solutions of~$D$ with exponential part~$\rme^{-\ga/x}$ are the solutions of the indicial equation (\cf \eqref{indicialeq}) of~$D_{[\ga]}$.
\end{enumerate}

\begin{pro}\label{CNSsingle}
 Let~$D$ be an operator of order~$n$ and level~1 at~0. We exclude the trivial case\footnote{With a unique characteristic root~$\ga$ of multiplicity~$n$ the change of variable~${y=\rme^{-\ga/x}z}$ changes~$D$ into an operator with a regular singular point at~0 if~$D$ is of single level~1 or an operator of level~$<1$ if~$D$ is of level~$1$ but not of single level~1.}
 where~$D$ has a unique characteristic root of multiplicity~$n$.
 The operator~$D$ is of pure level~1 if and only if the following equivalent properties hold:
\begin{enumerate}
 \item \label{item:CNSsingle:1} For any Stokes value~$\ga$ $($including~$\ga=0)$ of~$D$  of multiplicity~$k$, the Newton polygon~$\sN(D_{[\ga]})$ has a horizontal side of length~$k$ and a side of slope~$1$ with horizontal length~$n-k$. After a vertical translation of height~$\nu$ it looks as depicted on Figure~\ref{fig-dessin1bis} below.
 \begin{figure}[!ht]
 \begin{center}
 \includegraphics[width=0.5\textwidth]{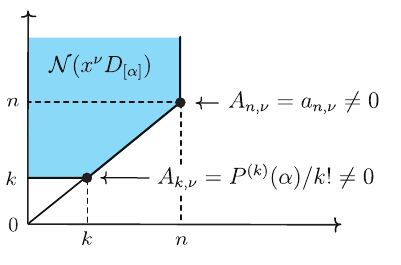}
 \vspace{-5mm}\caption{Newton polygon of $x^\nu D_{[\ga]}$\quad}
 \label{fig-dessin1bis}
 \end{center}
 \end{figure}

 \item \label{item:CNSsingle:2} For any Stokes value~$\ga$ $($including~$\ga=0)$ of~$D$  of multiplicity~$k$, the coefficients of~$D_{[\ga]}$ satisfy the relations
 \begin{equation*}
\begin{array}{l}
 A_{\ell,j}=0 \quad\textrm{for all}\quad \ell, j \quad\textrm{satisfying}\quad    \ell -  j<k -\nu, \\
 \noalign{\smallskip}
 A_{n,\nu}\, A_{k,\nu}\neq 0. \\
\end{array}
\end{equation*}
\end{enumerate}
\end{pro}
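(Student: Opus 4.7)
My plan is to handle the proposition in three pieces. First, the equivalence (1) $\Leftrightarrow$ (2) is a direct geometric translation. Unpacking the definition of the Newton polygon via marked points at $(\ell, \ell - j)$ for nonzero coefficients $A_{\ell, j}$ of $D_{[\ga]}$, condition (2) asserts that the minimum weight is $k - \nu$, attained at the vertices $(k, k - \nu)$ (by $A_{k, \nu} \neq 0$) and $(n, n - \nu)$ (by $A_{n, \nu} \neq 0$). The convex hull construction then produces a polygon whose lower boundary consists of a horizontal segment at height $k - \nu$ from $x = 0$ to $x = k$ followed by a segment of slope $1$ from $(k, k - \nu)$ to $(n, n - \nu)$; after upward translation by $\nu$ this matches Figure~\ref{fig-dessin1bis}. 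Conversely, the shape in Figure~\ref{fig-dessin1bis} forces the marked point at $\ell = n$ to come from $A_{n, \nu}$ (since $j \leq \nu$ always) and the horizontal height to coincide with the minimum weight, so (1) and (2) are equivalent.

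For the forward implication pure level 1 $\Rightarrow$ (1), I would use property~(6) recorded before the proposition: the determining polynomials of $D_{[\ga]}$ are the $q_j(1/x) + \ga/x$. If $D$ is of pure level 1, each nonzero $q_j$ is a degree-1 monomial $-\ga_j/x$. Given a Stokes value $\ga$ of multiplicity $k$, exactly $k$ of the $q_j$'s equal $-\ga/x$ and their shifts become trivial, while the remaining $n - k$ shifts become $(\ga - \ga_j)/x$, again degree-1 monomials with $\ga_j \neq \ga$. By the slope/degree correspondence recalled in Sec.~\ref{NewtonLevel1}, $\sN(D_{[\ga]})$ has a horizontal side of length $k$ and a slope-1 side of length $n - k$, which is condition (1).

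For the backward implication, I would apply (1) successively at every Stokes value of $D$. Applying it at $\ga = 0$ first forces $\sN(D)$ to have only slopes $0$ and $1$, so each nonzero $q_j$ has degree exactly $1$, but a priori may contain fractional sub-leading terms. To rule these out, suppose some $q_j = -\ga_j/x + \gb/x^{r/s} + \cdots$ with $\gb \neq 0$ and $0 < r/s < 1$. By property~(6), $D_{[\ga_j]}$ then has the determining polynomial $q_j + \ga_j/x = \gb/x^{r/s} + \cdots$ of degree $r/s$, which would create a side of slope $r/s \in (0, 1)$ in $\sN(D_{[\ga_j]})$, contradicting condition (1) at $\ga = \ga_j$. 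Hence every $q_j$ is a pure monomial and $D$ is of pure level 1. The main subtlety is precisely this last step: condition (1) at $\ga = 0$ alone is insufficient because $\sN(D)$ cannot see fractional sub-leading terms inside a degree-1 determining polynomial, and it is only by shifting to each nonzero Stokes value that these hidden sub-levels become visible.
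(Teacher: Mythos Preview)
Your proof is correct and follows essentially the same approach as the paper: the equivalence $(1)\Leftrightarrow(2)$ is a direct translation between the Newton polygon shape and the coefficient conditions, the forward implication uses property~(6) on determining polynomials of~$D_{[\ga]}$, and the backward implication argues by contradiction that a hidden fractional sub-leading term in some $q_j$ would produce a forbidden slope in $\sN(D_{[\ga_j]})$. The paper states the forward direction and $(1)\Leftrightarrow(2)$ more tersely (``the converse assertion is straightforward'', ``the conditions translate the form of each Newton polygon''), but the logical content is the same.
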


\begin{proof}
(\ref{item:CNSsingle:1}) The condition for~$\ga=0$ means that the Newton polygon~$\sN(D)$ of~$D$ has no slope other than $0$~and~$1$. Hence, the determining polynomials of~$D$ are all of degree~1. We have to prove that they  are monomials. Reasoning by contradiction, suppose one of these determining polynomials reads~$q(1/x)=-\ga/x +\gb/x^r+\cdots$ with~$\gb\neq 0$    and~$r<1$ (necessarily rational).
Then, the polynomial~$q(1/x)+\ga/x=\gb/x^r+\cdots$ is a determining polynomial of~$D_{[\ga]}$ of degree~$r$.
This implies that~$\sN(D_{[\ga]})$ has a  side with slope~$r<1$. Hence  a contradiction with  assumption~{(\ref{item:CNSsingle:1})} asserting that the only possible slopes for~$\sN(D_{[\ga]})$ are $0$~and~$1$. Thus the operator~$D$ is of single level~1.  The converse assertion is straightforward.

(\ref{item:CNSsingle:2}) The conditions translate the form of each Newton polygon~$\sN(D_{[\ga]})$ on the coefficients of~$D_{[\ga]}$.
\end{proof}

\begin{hyp}\label{hyp2}
 From here on, we assume that~$D$ \rm{is of single level~1}. 
\end{hyp}

\subsection{Prepared fundamental solution}\label{preparedsol} 

In full generality, the matrices~$\frak{L}$ and~$Q$ appearing in the formal fundamental solution~\eqref{ffs} of ${D\,y=0}$ do not commute \cite{BJL79}, \cite[Cor.~2.5]{L-R01}.
Under Hypothesis~\ref{hyp2}, however, the determining polynomials~$q_j$ are unramified, so that the matrices $\mathfrak{L}$~and~$Q$ commute.
In this case one can choose the formal fundamental solution in the form
\begin{equation}\label{fundsol}
\begin{aligned}
Y(x) &=
\begin{bmatrix}
\ty_1(x)&\ty_2(x)&\cdots&\ty_n(x)\\
\end{bmatrix} \\
& =  \begin{bmatrix} \tf_1(x) & \tf_2(x)&\cdots &\tf_n(x)\end{bmatrix} \, x^{\pL}\, \rme^{Q(1/x)} 
\end{aligned}
\end{equation}
 where
\begin{itemize}
 \item $Q(1/x) =\diag\big(q_1(1/x),\,  \dots,\, q_n(1/x)\big)$ with determining polynomials~$q_j$ of the form $q_j(1/x)=-\ga_j/x$;
 
\item the matrix $\pL$ of the exponents of formal monodromy is a constant matrix in Jordan form commuting with~$Q$; 
 
\item  the~${\tf_j(x)}$'s are power series in~$x$;

\item the regular part~$\tth(x)=\begin{bmatrix} \tf_1(x) & \tf_2(x)&\cdots &\tf_n(x)\end{bmatrix} \, x^{\pL}$ has valuation~$>0$, in the sense that for all monomials~$x^\beta \log^\rho x$ appearing in the entries of~$\tth(x)$ one has $\Re(\beta) > 0$.
\end{itemize}

The general form is classical and we stick to it for the theoretical discussion that follows.
Note, though, that in the second part of the article we will switch to a slightly different choice of fundamental solution (see Rem.~\ref{remprepared} and Sec.~\ref{sec:local-bases}) for consistency with the implementation.

The valuation condition can always be satisfied with a change of variable of type~$y=x^{-r} \, z$ for a convenient integer~$r$.  Such a transformation changes equation~(\ref{Eqinitiale}), p.~\pageref{Eqinitiale}, into an equation of the same form and one can check that it preserves the Stokes matrices we want to compute (\cf \eqref{StokesFormula1}, p.~\pageref{StokesFormula1}). 

In terms of the classification of differential equations, the matrices $\mathfrak{L}$~and~$Q$ provide a complete set of \emph{formal invariants}. These invariants  are obtained by solving algebraic equations and they determine the anti-Stokes directions (Definition~\ref{aSd}, p.~\pageref{aSd}).
In contrast, the Stokes phenomenon (\emph{i.e.}, the \emph{meromorphic invariants}) is determined from  the divergent series~${\tf_j(x)}$ and, except in very simple cases, it results from transcendental calculations.

We additionally suppose that the solutions are so arranged  that the matrix~$Q(1/x)$ splits into diagonal  blocks
\begin{equation*}
  {Q(1/x) = q_1(1/x) I_{k_1} \oplus q_2(1/x) I_{k_2} \oplus \cdots \oplus q_N(1/x) I_{k_N}}
\end{equation*}
with pairwise distinct $q_\ell$'s and $q_\ell=-\ga_\ell/x$.
We denote by
\begin{equation} \label{blocQ}
\left\lbrace
\begin{array}{ccl}
\sI_1&=&\{1,2,\dots k_1\}\\
\sI_2&=&\{k_1+1, k_1+2,\dots,k_1+k_2\}\\
  \cdots& &\\
 \sI_N&=&\ensemble{k_1+\cdots+k_{N-1}+1, \ \dots \ , k_1+k_2+\dots +k_N=n}\\ 
\end{array}
\right.
\end{equation}
the sets of row indices corresponding to each block of~$Q(1/x)$:
the solutions with index~$i\in\sI_\ell$ are those with exponential part~$\rme^{q_\ell}$ (one also says that they are associated with~$q_\ell$ or with~$\ga_\ell$).
We further denote
\begin{equation*}
 {\sK_0=0,\ \ \sK_1=k_1, \ \ \sK_2=\sK_1+k_2, \ \ \dots, \ \ \sK_N=\sK_{N-1}+k_N=n}
\end{equation*}
and split the fundamental solution~$Y(x)$ according to the splitting of~$Q$, obtaining thus the block decomposition
\begin{equation*}
 Y(x) = 
\begin{bmatrix}
 Y_{i}(x)
\end{bmatrix}_{1\leq i\leq N}
\quad \textrm{with}\quad Y_{i}(x)= 
\begin{bmatrix}
\ty_{ u}(x)
\end{bmatrix}_{u\in \sI_i}=
\begin{bmatrix}
\ty_{ u}(x)
\end{bmatrix}_{\sK_{i-1}+1\leq u\leq \sK_i}.
\end{equation*}
We can now arrange the solutions with same determining polynomial~$q$
so that  the matrix~$\pL$ splits into a sum of Jordan blocks (\cf Notation~\ref{Jordan} below)
\begin{equation*}
  {\pL=(\sL_1 I_{m_1} +J_{m_1}) \oplus (\sL_2 I_{m_2} +J_{m_2}) \oplus\cdots
  \oplus (\sL_{m_M} I_{m_M}+ J_{m_M}})
\end{equation*}
which refines the splitting of~$Q$ into blocks. 

\begin{nota}\label{Jordan}
 We denote by~$I_r$ the identity matrix of dimension~$r$ and by~$J_r$ the upper nilpotent Jordan matrix of dimension~$r$ (with a super-diagonal of ones and zeroes elsewhere).
\end{nota}

\begin{rem}[generic case] \label{gencase}
 Quite often the situation is not as general as the~one considered above. A common case---classically called the generic case---is that all Stokes values~$\ga$ are simple ($k=1$). Then, the matrix~$Q$ splits into diagonal blocks of size~1:
 \[ {Q(1/x)=[q_1(1/x)]\oplus [q_2(1/x)] \oplus \cdots \oplus[q_n(1/x)]}, \]
 the matrix~$\pL$ is diagonal, and hence the formal solutions in the expanded form~\eqref{eq:individual-sol} involve no logarithms.
The characteristic equation of~$D_{[\ga]}$ reads (\cf equation~\eqref{charpol}, p.~\pageref{charpol})
\begin{equation*}
P_{[\ga]}(X)=P(\ga+X)= A_{n,\nu}X^n +A_{n-1,\nu}X^{n-1} +\cdots + A_{1,\nu}X^1
\end{equation*}
and the conditions in Proposition~\ref{CNSsingle} reduce to~$A_{0,\nu}=0$ and~${A_{1,\nu} A_{n,\nu}\neq 0}$.
The indicial equation associated with the Stokes value~$\ga$ reads
\[ \sP(\sL)\equiv  A_{0,\nu-1} + A_{1,\nu} \sL= 0, \]
corresponding to a unique exponent
$\sL=-{A_{0,\nu-1}}/{A_{1,\nu}}$.
\end{rem}

A fundamental property for the algorithm to be developed below is that under Hypothesis~\ref{hyp2}
the series~$\tf_j(x)$ appearing in the fundamental solution~$(\ref{fundsol})$ are all {1-\emph{summable}}~\cite[Thm.~5.2.5]{L-R16}.
One obtains their sums by applying the so-called Borel--Laplace method, as we discuss in the next two sections.

\section{The equation in the Borel plane} \label{eqBorel} 

The Borel plane is a second copy of~$\C$, with coordinate~$\xi$, to which the equation~\eqref{Eqinitiale} is  sent by means of a Borel transformation. A Laplace transformation sends the Borel plane back to the Laplace plane.

As much as possible we use Latin letters in the Laplace plane and Greek letters in the Borel plane.

\subsection{Borel transformations}\label{PrelimBorel}

We will consider two kinds of Borel transformations:

--- The \emph{Borel transformation $\sB$ for operators} is the morphism from the ring of linear differential operators~${\sum_{j,\ell}c_{\ell,j}\frac{1}{x^j}\, \partial^\ell}$ `in the Laplace plane' to
that of differential operators~${\sum_{j,\ell} \gc_{\ell,j} \xi^j \frac{\rmd}{\rmd \xi^\ell}}$ `in the Borel plane' defined by
\begin{equation} \label{BorelOper}
 \sB\Big(\frac{1}{x}\Big)= \frac{\rmd}{\rmd \xi} \quad\textrm{and}\quad \sB(\partial)= \xi \quad \textrm{(operator of multiplication by } \xi\textrm{) } 
\end{equation}
Observe that an operator $D=D\big(\frac{1}{x}, \partial\big)$ with polynomial coefficients in~$1/x$
is changed into an operator~${\gD=D( \frac{\rmd}{\rmd \xi}, \xi)}$ with polynomial coefficients in~$\xi$.
This property is the main reason why we chose to write our operators in terms of $1/x$~and~$\partial$ (positive powers of~$x$ would lead to integro-differential operators).
If~$D$ has order~$n$ and degree~$\nu$ in~$1/x$ then~$\gD$ has order~$\nu$ and degree~$n$ in~$\xi$.

--- The \emph{(functional) Borel transformation}~$\sB_\gt$  depends on a direction~$\gt$ and applies to  actual functions. Written at~0 (as opposed to infinity) it reads 
\begin{equation}\label{anaBorel}
 \sB_\gt(f(x))(\xi) = \frac{1}{2 \pi \rmi} \int_{\gc_\gt} f(x) \rme^{\xi/x} \frac{\rmd x}{x^2}
\end{equation}
where the contour~$\gc_\gt$ is a `Hankel contour at~$0$', \cf Figure~\ref{fig-dessin-Hankel}.
Equivalently, one has $\sB_\gt\big(f(x)\big)(\xi) = \frac{1}{2\pi\rmi} \int_{\gC_{(-\gt+\pi)}}
f(1/u)\rme^{\xi u} \mathrm du$
where the integral is now along the classical Hankel contour $\gC_{ (- \gt+\pi)}$ around the half-line~$d_{(-\gt + \pi)}$ from $\arg u=-\gt-\pi$ to $\arg u = -\gt+\pi$.\\
We require that the integral converge in the Lebesgue sense.
\begin{figure}[th]
\begin{center}
\includegraphics[width=0.2\textwidth]{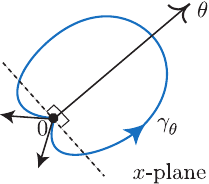}
\vspace{-5mm}\caption{A Hankel contour at~0}
\label{fig-dessin-Hankel}
\end{center}
\end{figure}

The following elementary lemmas are useful in the theory that follows.
For more details on Borel transformations we refer to~\cite[Chap.~5]{MSau16},~\cite[Sec.~5.3]{L-R16} and  \cite{Cost09} among many references.

\begin{lemma} \label{dico}
  For any~$\gl\in\C$ and $\xi \in \C \setminus \{0\}$, the Borel transform of monomials~$x^\lambda \log^m x$ is given by the following formulae in which the Borel integrals converge on the open half-plane~$\Re( \xi \rme^{-\rmi \gt} )>0$ bisected by~$d_\gt$.
 \begin{enumerate} 
 \item\label{item:dico:simple} If~$\gl\in \C \setminus \Z_{\leq 0}$, then  $\displaystyle \sB_\gt(x^\gl)(\xi) =\frac{ \xi^{\gl-1}}{\gC(\gl)}\not\equiv 0$ for any direction~$\gt\in\C$.
 If~$\gl\in\Z_{\leq 0}$ then%
\footnote{However, the definition of the Borel transform is classically extended by introducing Dirac masses in this case, see Remark~\ref{Dirac} below.}
 $\displaystyle \sB_\gt(x^\gl)(\xi) =\frac{ \xi^{\gl-1}}{\gC(\gl)}\equiv 0$.

 \item\label{item:dico:main}
 For  any ${m\in\Z_{\geq 1}}$ and any direction~$\gt$, one has
\begin{equation} \label{eq:dico-main}
  \sB_\gt \big(x^\gl\,\ln^{m} x\big)(\xi) = \frac{\rmd^{m}}{\rmd \gl^{m}} \sB_\gt(x^{\gl})(\xi).
\end{equation}
 
 \item\label{item:dico:special}
 The case $m=1$ is worth stating explicitly.
 For any direction~$\gt$, one has
\begin{equation*}
 \displaystyle \sB_\gt (x^\gl\,\ln x)(\xi) =
\xi^{\gl-1}  \Big( \frac{\ln\xi}{\gC(\gl)} + \Big(\frac{1}{\gC}\Big)'(\lambda) \Big).
\end{equation*}
In particular, when~$\gl=p$ is a positive integer the formula reads
\[
\sB_\gt(x^p\ln x)(\xi) = \frac{ \xi^{p-1}}{(p-1)!} \Big( \ln \xi  +\gc-\sH_{p-1} \Big)
\]
 where~$\gc$ is the Euler constant and $\sH_n = 1 + \dots + 1/n$ is the harmonic sum of order~$n$.
 For nonpositive integer~$\lambda = -p$ one has
 \[ \sB_\gt\Big(\frac{1}{x^p} \ln x\Big) =(-1)^p \frac{p!}{\xi^{p+1}}, \]
and in particular $\displaystyle \sB_\gt(\ln x)(\xi)= \frac{1}{\xi}$.
\end{enumerate}
\end{lemma}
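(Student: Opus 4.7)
The plan is to reduce item~\ref{item:dico:simple} to the classical Hankel integral representation of~$1/\gC$, and to obtain items~\ref{item:dico:main} and~\ref{item:dico:special} by differentiating with respect to~$\gl$ under the integral sign.

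For item~\ref{item:dico:simple}, I would start from the equivalent form of~$\sB_\gt$ in the variable~$u=1/x$,
\[
 \sB_\gt(x^\gl)(\xi) = \frac{1}{2\pi\rmi}\int_{\gC_{-\gt+\pi}} u^{-\gl}\rme^{\xi u}\,\rmd u,
\]
and perform the substitution~$v=\xi u$. This produces $\xi^{\gl-1}/(2\pi\rmi)$ times $\int v^{-\gl}\rme^{v}\,\rmd v$, with a branch of~$v^{-\gl}$ inherited from the change of variable. The new contour is a Hankel contour around the ray of argument $\arg\xi-\gt+\pi$, which under the hypothesis $\Re(\xi\rme^{-\rmi\gt})>0$ lies in the open left half-plane, so that~$\rme^{v}$ decays along its arms and the integral converges. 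A contour deformation within the analyticity domain of $v^{-\gl}\rme^{v}$ then identifies the integral with Hankel's formula for~$1/\gC(\gl)$, producing the announced expression. The case $\gl\in\Z_{\leq0}$ follows either from the vanishing of~$1/\gC$ at nonpositive integers or, more directly, from a residue computation with the Hankel contour collapsed to a small loop around~$0$.

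Items~\ref{item:dico:main} and~\ref{item:dico:special} will follow from the identity $\frac{\rmd^m}{\rmd\gl^m}\,x^\gl = x^\gl\ln^m x$ by differentiating~$m$ times under the integral sign in the definition~\eqref{anaBorel} of~$\sB_\gt$. Justification reduces to local uniform convergence of the differentiated integrands on compact subsets of the half-plane $\Re(\xi\rme^{-\rmi\gt})>0$, which follows from the exponential decay of~$\rme^{\xi/x}$ along the arms of~$\gc_\gt$ and the boundedness of the integrand on the loop part. Specialising to $m=1$ and applying the product rule to~$\xi^{\gl-1}/\gC(\gl)$ yields the generic formula in item~\ref{item:dico:special}. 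The explicit specialisations reduce to evaluations of~$(1/\gC)'$ at integer arguments: for $\gl=p\in\Z_{\geq 1}$, I would write $(1/\gC)'(p) = -\psi(p)/\gC(p)$ and use the classical identity $\psi(p) = -\gc + \sH_{p-1}$; for $\gl=-p\in\Z_{\leq 0}$, I would read off $(1/\gC)'(-p) = (-1)^p\,p!$ from the Laurent expansion $\gC(\gl) = \frac{(-1)^p}{p!}\frac{1}{\gl+p}+\cdots$ at the pole~$-p$, the~$\ln\xi$ term dropping out because $1/\gC(-p)=0$.

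The main technical point lies in the contour deformation in item~\ref{item:dico:simple}: one has to track the branch of~$v^{-\gl}$ across the substitution and verify that the hypothesis $\Re(\xi\rme^{-\rmi\gt})>0$ is precisely what permits the deformed contour to be pulled onto a standard Hankel contour around~$\R_{\leq0}$ without crossing a branch cut of the integrand. Once this geometric check is made, everything else consists of elementary calculus and classical identities for the Gamma function.
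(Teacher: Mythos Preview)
Your proposal is correct and follows essentially the same approach as the paper: reduction to Hankel's integral for~$1/\gC$ via the substitution $u=1/x$ (the paper then sets $v=\xi u$ just as you do, first for real $\xi>0$ and $\gt=0$ before extending by Cauchy's theorem), differentiation under the integral sign for item~(\ref{item:dico:main}), and evaluation of~$(1/\gC)'$ at integer arguments for item~(\ref{item:dico:special}). The only cosmetic difference is that the paper carries out the last step via the series representation of~$\gC'/\gC$ rather than the digamma identity and Laurent expansion you invoke, but these are the same facts.
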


\begin{proof}
(\ref{item:dico:simple}) By definition, $2\pi\hspace{-1pt} \rmi \sB_\gt(x^\gl)(\xi) \hspace{-1pt} =\int_{\gc_\gt} x^\gl \rme^{\xi/x} \frac{\rmd x}{x^2}$.

Setting~${u=1/x}$ and denoting as before by $\gC_{(-\gt +\pi)}$ the Hankel contour around the half-line~$d_{(-\gt + \pi)}$ run  from~$\arg u=-\gt - \pi$ to~$\arg u=-\gt +\pi$ we obtain
\begin{equation*}
\sB_\gt(x^\gl)(\xi) \hspace{-1pt} = \frac{1}{ {2\pi\hspace{-1pt} \rmi }}\int_{\gC_{(-\gt +\pi)}} u^{-\gl}\rme^{\xi u}\rmd u.
\end{equation*}
 This expression is valid for all~$\xi$ in the open half-plane~${\Re( \xi \rme^{-\rmi\gt})\hspace{-1pt} >\hspace{-2pt}0}$, that is, the open  half-plane bisected by~$d_{\gt}$. 

Suppose first that~$\gt=0$. The contour is then~$\gC_\pi$ run from $\arg(u)=-\pi$ to~$\arg(u)=+\pi$. The function  $u^{-\gl}$ is taken with its principal determination~$u^\gl=\rme^{\gl \ln u}$ on $-\pi<\arg (u)<\pi$ and, for~$\xi>0$ (\ie $\arg(\xi)=0$), we can apply~\cite[(4.8.1), p.~296]{Dieu} to obtain  the relation 
\begin{equation*}
\sB_0(x^\gl)(\xi)= \frac{1}{2\pi\rmi}\int_{\gC_{\pi}} u^{-\gl} 
\rme^{\xi u}\rmd u = \frac{{\xi^{\gl-1}}}{2\pi\rmi}
\int_{\gC_{\pi}} v^{-\gl} \rme^v \rmd v= 
{\xi^{\gl-1}}\frac{1}{\gC(\gl)}\cdot
\end{equation*}
If $\arg(\xi)=\gt'\in (-\pi/2, \pi/2)$ then 
$\int_{\gC_{\pi+\gt'}} v^{-\gl} \rme^v \rmd v = \int_{\gC_\pi}v^{-\gl} \rme^v \rmd v$ by Cauchy's residues theorem, hence the formula is valid for all $\xi$ satisfying~$\Re(\xi)>0$.

Suppose now~$\gt\neq \pi \mod 2\pi$, so that the half-planes bisected by~$d_{\gt}$ and by~$d_0$ have a non-empty intersection (if $\gt=\pi$, one has to proceed in two steps). Choose $\xi $ in the intersection of these two half-planes so that both the integral defining $\sB_\gt(x^\gl)(\xi)$ with contour~$\gC_{(-\gt+\pi)}$ and the same integral with contour~$\gC_\pi$  are convergent.
Cauchy's residues theorem shows again that  they  are equal.
Hence, $\sB_0(x^\gl)$ and $\sB_\gt(x^\gl)$ are analytic continuations of each other and we can say that $\sB_\gt(x^\gl)(\xi) =\frac{1}{\gC(\gl) }\xi^{\gl-1}$ for all~$\gt, -\pi<\arg(\gt)\leq +\pi$ and all~$\xi$ satisfying~$\Re(\xi \rme^{-\rmi\gt})>0$.
In particular, $\sB_\gt(x^\gl)(\xi) $ is identically~0 if and only if 
$\gl\in \Z_{\leq 0}$.

(\ref{item:dico:main}) The formula is a direct application of Lebesgue's derivation theorem.

(\ref{item:dico:special}) In  the general formula above, in order to calculate~$(1/\gC)' = -\gC'/\gC^2$ we use  the identity (\cf~\cite[(4.5.2), p.~294]{Dieu}):
\begin{equation}\label{Gamma'/Gamma}
\frac{\gC'(z)}{\gC(z)} =-\gc-\frac{1}{z} + \sum_{s\geq 1}\frac{z}{s(z+s)} \quad \textrm{for all }\ z\in\C\setminus\Z_{\leq0}.
\end{equation}

If~$\gl=p\in\Z_{>0}$ the partial fraction decomposition~$ \displaystyle \frac{z}{s(z+s)} = \frac{1}{s}-\frac{1}{z+s} $ gives the~result.

If now~${\gl=-p\in\Z_{\leq 0}}$, then~$1/\gC(-p)=0$ and
the first term~${\xi^{\gl-1} \ln \xi /{\gC(\gl)}}$ of the general formula vanishes. To obtain the value of the second term we write, using~\eqref{Gamma'/Gamma},
\begin{equation*}
 \frac{\gC'(z-p)}{\gC(z-p)} = -\gc -\frac{1}{z-p} + \sum_{s\geq 1} \frac{z-p}{s(z-p+s)}
 = \frac{1}{z} \big(-1+O(z)\big).
\end{equation*}
One has
\begin{equation*}
 \frac{1}{\gC(z-p)} =(-1)^p \frac{[p-z]_{p^-}}{\gC(z)} 
 =(-1)^p \gC(1+p)z\big(1 +O(z)\big),
\end{equation*}
hence $\displaystyle -\frac{\gC'(z-p)}{\gC^2(z-p)} = (-1)^{p} \gC(1+p) +O(z)$ and the given formula follows.
\end{proof}

As the lemma shows, the image of monomials
\begin{equation} \label{eq:formal-Borel}
  \sB_\gt \big(x^\gl\,\ln^{m} x\big)(\xi) = \frac{\rmd^{m}}{\rmd \gl^{m}} \frac{ \xi^{\gl-1}}{\gC(\gl)}
  \qquad \text{($\lambda \notin \Z_{\leq 0}$ or $m \geq 1$)}
\end{equation}
is independent of the direction~$\gt$.
This formula extends naturally to polynomials and formal series with no constant or polar part.
The resulting map
\[ \sB : x^{1+\lambda} \, \mathbb C[[x]][\log x] \to \xi^{\lambda} \, \mathbb C[[\xi]][\log \xi] \qquad (\lambda \in \C \setminus \Z_{< 0}) \]
is called the
\emph{formal Borel transformation}.
It is classically extended to constant and polar terms as explained in Remark~\ref{Dirac} below.
Later in the article we will also use the `naive' extension~$\tilde{\mathcal B}$ obtained by dropping the condition $\lambda \notin \Z_{\leq 0}$ from~\eqref{eq:formal-Borel}, that is, by putting $\tilde {\mathcal B}(x^{-p}) = 0$ for $p \in \N$.

 \begin{lemma}\label{idBorel} Given a series~$f(x)$ with no constant or polar terms, but possibly with non integer powers of~$x$, and a linear differential operator~$D$ as before, we consider~$\sB_\gt$ as defined in formula~$(\ref{anaBorel})$. Then,
  \begin{equation}\label{B(opf)}
 \sB_\gt\big(D(f)\big)(\xi)= \sB(D) \cdot \big(\sB_\gt(f)(\xi)\big).
\end{equation}
\end{lemma}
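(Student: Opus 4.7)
The plan is to exploit the fact that the operator Borel transform $\sB$ is by construction a ring morphism with $\sB(1/x)=\rmd/\rmd \xi$ and $\sB(\partial)=\xi$. By linearity of both sides of~\eqref{B(opf)}, it suffices to establish the identity when $D$ is a single monomial $(1/x)^{j}\partial^{\ell}$, and by induction on $j+\ell$ this reduces to the two base cases $D=1/x$ and $D=\partial$.

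For $D=1/x$, differentiation of~\eqref{anaBorel} under the integral sign (using $\rmd\rme^{\xi/x}/\rmd \xi=(1/x)\rme^{\xi/x}$) gives immediately
\begin{equation*}
\frac{\rmd}{\rmd \xi}\sB_\gt(f)(\xi)=\frac{1}{2\pi\rmi}\int_{\gc_\gt}\frac{f(x)}{x}\,\rme^{\xi/x}\,\frac{\rmd x}{x^2}=\sB_\gt\bigl((1/x)\,f(x)\bigr)(\xi),
\end{equation*}
which is $\sB(1/x)$ applied to $\sB_\gt(f)$.

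For $D=\partial=x^{2}\rmd/\rmd x$, we have $\sB_\gt(\partial f)(\xi)=\frac{1}{2\pi\rmi}\int_{\gc_\gt}f'(x)\,\rme^{\xi/x}\,\rmd x$. Integrating by parts with $v=f(x)$ and $\rmd u=(-\xi/x^{2})\rme^{\xi/x}\,\rmd x$ yields
\begin{equation*}
\int_{\gc_\gt}f'(x)\,\rme^{\xi/x}\,\rmd x=\Bigl[f(x)\,\rme^{\xi/x}\Bigr]_{\gc_\gt}+\xi\int_{\gc_\gt}f(x)\,\rme^{\xi/x}\,\frac{\rmd x}{x^{2}}.
\end{equation*}
The bracketed contribution vanishes: the two endpoints of the Hankel contour escape to infinity in directions where $\Re(\xi/x)\to-\infty$ on the half-plane where the Borel integral converges, while $f$ has no pole at $0$, so $f(x)\rme^{\xi/x}\to0$ at both ends. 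Hence $\sB_\gt(\partial f)(\xi)=\xi\,\sB_\gt(f)(\xi)$, matching $\sB(\partial)\cdot\sB_\gt(f)$.

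The induction step writes a monomial as $D=D_{1}D_{2}$ with $D_{2}\in\{1/x,\partial\}$; the morphism property of $\sB$, consistent because the commutators $[1/x,\partial]=1$ in the Laplace plane and $[\rmd/\rmd\xi,\xi]=1$ in the Borel plane match, then gives $\sB_\gt(D_{1}D_{2}f)=\sB(D_{1})\cdot\sB_\gt(D_{2}f)=\sB(D_{1})\sB(D_{2})\cdot\sB_\gt(f)=\sB(D)\cdot\sB_\gt(f)$. The main point of care, and essentially the only one, is to check at each induction step that the intermediate function $D_{2}f$ remains in a class to which the base identities apply, i.e.\ that differentiation under the integral sign and the vanishing of the boundary term continue to be legitimate. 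This is ensured by the hypothesis that $f$ has neither constant nor polar terms and by the Lebesgue-convergence requirement imposed in the definition of~$\sB_\gt$, which together control the behaviour of the integrands at $0$ and at the endpoints of the contour throughout the induction.
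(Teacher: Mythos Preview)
Your argument is valid but takes a different route from the paper, and one point in it needs correcting.

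The paper does not work with a general $f$ at all: it reduces immediately to a monomial $f(x)=x^{\lambda}$ with $\lambda\notin\Z_{\le0}$ and simply checks the two base cases against the explicit formula $\sB_\gt(x^{\lambda})(\xi)=\xi^{\lambda-1}/\gC(\lambda)$ established in the preceding lemma. For $D=\partial$ one gets $\sB_\gt(\lambda x^{\lambda+1})=\lambda\xi^{\lambda}/\gC(\lambda+1)=\xi^{\lambda}/\gC(\lambda)=\xi\cdot\sB_\gt(x^{\lambda})$, and for $D=1/x$ one gets $\sB_\gt(x^{\lambda-1})=\xi^{\lambda-2}/\gC(\lambda-1)=(\rmd/\rmd\xi)\,\sB_\gt(x^{\lambda})$. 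No differentiation under the integral, no integration by parts, no boundary terms to discuss. Your approach trades this algebraic shortcut for a direct manipulation of the integral, which is perfectly legitimate and more self-contained, but requires the analytic justifications you mention.

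The point to fix: the two endpoints of $\gc_\gt$ are at $x=0$, not at infinity. The contour $\gc_\gt$ is the preimage under $u=1/x$ of the classical Hankel contour (whose endpoints are at $u=\infty$), so both ends of $\gc_\gt$ tend to $x=0$, along $\arg x=\gt\pm\pi$. Your conclusion nonetheless survives: for $\xi$ in the half-plane $\Re(\xi\rme^{-\rmi\gt})>0$, along these rays one has $\arg(\xi/x)$ close to $\mp\pi$, so $\Re(\xi/x)\to-\infty$ as $|x|\to0$, and the super-exponential decay of $\rme^{\xi/x}$ kills any algebraic behaviour of $f$ near the origin. This is also precisely where your remark that ``$f$ has no pole at~$0$'' becomes relevant (it would be irrelevant if the endpoints were at infinity). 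With this correction the proof stands.
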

Here we denote with a dot the action of the operator~$\sB(D)$ on the function~${\sB_\gt(f)(\xi)}$. We allow ourselves to omit it further on.
\begin{proof}
 It suffices to consider the case of~$f(x)=x^\gl$, $\gl\notin\Z_{\leq 0}$ and either $D=\partial$ or~$D=1/x$.
 Suppose~$D=\partial$. 
 Then, $\sB(f)(\xi)=\xi^{\gl-1}/\gC(\gl)$ and $\sB(D)=\xi$,
 hence $\sB(D)\cdot\sB(f)(\xi)=\xi^\gl/\gC(\gl)$ 
 and therefore,
 \[
 \sB(Df)=\gl\xi^\gl/\gC(\gl+1)= \xi^\gl/\gC(\gl)=\sB(D)\cdot\sB(f)(\xi).
 \]
 Now suppose~$D=1/x$.
 Then, $\sB(D)=\rmd/\rmd \xi$ and
 \[
 \sB(D)\cdot\sB(f)(\xi)
 = \xi^{\gl-2}/\gC(\gl-1)= \sB(x^{\gl-1})(\xi) =\sB(Df)(\xi).
 \qedhere
 \]
\end{proof}

\begin{rem} \label{Dirac}
 Being concerned with differential equations it is desirable to work with morphisms of differential algebras. Consider the differential algebra~$(\C[x],\rmd/\rmd x)$ of polynomials equipped with the standard derivation. The Borel transformation as defined so far maps monomials~$x^m$ with $m\in\Z_{>0}$ to~$\sB(x^m)=\xi^{m-1}/(m-1)!$ and one can check that 
 \[\sB(x^m x^p) = \sB(x^m) * \sB(x^p)\]
 where $*$~is the convolution product given by
$(f*g)(\xi)=\int_0^\xi f(t)g(\xi-t)\,\rmd t$.
Moreover, we have~$\sB((\rmd/\rmd x)(x^m))= \xi \sB(x^m)$.
Hence, $\sB$ maps $\C[x]\setminus\C$ into the nonunital differential algebra~$\C[\xi]$ endowed with the convolution product and the multiplication by~$\xi$ as a derivation.
To extend~$\sB$ into a morphism of differential algebras, one has to set~$\sB(1)$ to the Dirac mass~$\gd_0$ since $\gd_0$~is the unit for the convolution product. Extending~$\sB$ to polynomials in~$x$ and~$1/x$ leads to set~${\sB(1/x^m) =\gd_0^{(m)}}$, the $m$th~derivative of~$\gd_0$.
\end{rem}

\begin{rem} \label{BopVSfn}
 Beware of the difference between the Borel transform of the constant function~$1$, which is the Dirac mass~$\gd_0$, and that of the operator~$1=\operatorname{id}$, which is also the identity operator.
As an example, the Borel transform of the differential operator~${\partial +1}$ is the operator of multiplication by $\xi+1$ since~${\sB(\partial \ty+ \ty)(\xi)= (\xi+1)\hy(\xi)}$.
\end{rem}
\smallskip

\subsection{Transformed operators}

Recall the notation~${\nu=\max_{0\leq\ell\leq n} \deg a_\ell(X)}$.
For all Stokes values~$\ga$, the operator 
\begin{equation*}
 { D_{[\ga]}= A_n\Big(\frac{1}{x}\Big)\partial^n + A_{n-1}\Big(\frac{1}{x}\Big)\partial^{n-1} +\cdots + A_2\Big(\frac{1}{x}\Big)\partial^2 + A_1\Big(\frac{1}{x}\Big) \partial + A_0\Big(\frac{1}{x}\Big)}
\end{equation*}
admits as a Borel transform
\begin{equation}\label{DeltaAlpha}
  \gD_{[\ga]} = A_n\Big(\frac{\rmd}{\rmd\gz}\Big) \gz^n+ A_{n-1}\Big(\frac{\rmd}{\rmd\gz}\Big)  \, \gz^{n-1} + \cdots + A_1\Big(\frac{\rmd}{\rmd\gz}\Big) \,\gz+ A_0\Big(\frac{\rmd}{\rmd\gz}\Big)
\end{equation}
where~${A_\ell \big(\frac{\rmd}{\rmd \gz} \big) =\sum_{j=0}^\nu A_{\ell,j} \big(\frac{\rmd^j }{ \rmd \gz^j} \big)}$ for~$\ell=1,\dots,n$
(thus, one has $\gD_{[0]} = \gD$).

As the following lemma shows, studying $\gD$ at~$\xi=\ga$ is equivalent to studying~$\gD_{[\ga]}$ at~$\gz=0$.

\begin{lemma}\label{Bexp} 
\begin{enumerate}
\item \label{item:Bexp:fun} Given a point $\ga\in\C$ and an analytic function~$h(x)$ with $h(0) = 0$, the Borel transform satisfies
\begin{equation*}
 \sB\big( h(x) \rme^{-\ga/x}\big)(\xi) = \sB\big(h(x)\big)(\xi-\ga).
\end{equation*}
\item \label{item:Bexp:op} Let ${\gz=\xi-\ga}$ be the local variable at~$\xi=\ga$. Then, 
\begin{equation*}
 \gD_{[\ga]}\big(y\big)(\gz) = \gD\big(y\big)(\ga+\gz).
\end{equation*}
\end{enumerate}
\end{lemma}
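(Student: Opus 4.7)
Both parts amount to direct substitution, using the integral formula \eqref{anaBorel} for (\ref{item:Bexp:fun}) and the shift relation for operators for (\ref{item:Bexp:op}).

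For part (\ref{item:Bexp:fun}), my plan is to plug $h(x)\rme^{-\ga/x}$ into the definition of the functional Borel transform and simply merge the two exponential factors:
\[
 \sB_\gt\bigl(h(x)\rme^{-\ga/x}\bigr)(\xi) = \frac{1}{2\pi\rmi} \int_{\gc_\gt} h(x)\,\rme^{(\xi-\ga)/x} \,\frac{\rmd x}{x^2}.
\]
The integrand on the right is literally the one defining $\sB_\gt(h)(\xi-\ga)$, so the Lebesgue convergence condition migrates from the open half-plane bisected by $d_\gt$ centred at~$\xi$ to the same half-plane centred at~$\xi-\ga$, i.e.~$\Re((\xi-\ga)\rme^{-\rmi\gt})>0$. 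The assumption $h(0)=0$ guarantees that the factor $h(x)/x^2$ stays integrable along the Hankel contour near~$0$, so both integrals make sense under the same hypotheses. No obstacle is expected.

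For part (\ref{item:Bexp:op}), the key input is the shift relation $D_{[\ga]}(1/x,\partial)=D(1/x,\partial+\ga)$ from item~(\ref{item:shift-dop}) on p.~\pageref{propertiesDalpha}, together with the definition \eqref{BorelOper} of the operator Borel transform. I would take an arbitrary test function $\phi(\xi)$, set $\psi(\gz) := \phi(\ga+\gz)$, and observe that under the affine change of variable $\xi=\ga+\gz$ differentiation is preserved ($\rmd/\rmd\xi = \rmd/\rmd\gz$) while the multiplication operator~$\xi$ acting on $\phi$ becomes multiplication by $\ga+\gz$ acting on $\psi$. Applying $\gD = \sum_\ell a_\ell(\rmd/\rmd\xi)\,\xi^\ell$ to~$\phi$ and evaluating at $\xi = \ga+\gz$ therefore yields
\[
 (\gD\,\phi)(\ga+\gz) \;=\; \sum_{\ell=0}^n a_\ell\bigl(\rmd/\rmd\gz\bigr)\,(\ga+\gz)^\ell\,\psi(\gz) \;=\; D\bigl(\rmd/\rmd\gz,\,\gz+\ga\bigr)\psi(\gz).
\]
The shift relation, read as an identity of non-commutative polynomials in~$1/x$ and~$\partial$, identifies the last expression with $D_{[\ga]}(\rmd/\rmd\gz,\gz)\,\psi(\gz)=(\gD_{[\ga]}\,\psi)(\gz)$, which is the claim.

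The only point to be careful about is the non-commutativity of~$\xi$ and~$\rmd/\rmd\xi$: the left-to-right composition order in~$\gD_{[\ga]}$ must match that in~$\gD$. As a sanity check one can expand $(\ga+\gz)^\ell$ by the binomial theorem and regroup the resulting sum to recover the explicit coefficients $A_\ell = \sum_{j\geq\ell} a_j\binom{j}{\ell}\ga^{j-\ell}$ from~\eqref{Aell}. Beyond this bookkeeping, I expect no real difficulty, as the lemma is essentially a statement about changes of variable.
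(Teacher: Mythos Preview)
Your proposal is correct and follows essentially the same approach as the paper: for~(\ref{item:Bexp:fun}) you substitute into the integral definition and merge exponentials, and for~(\ref{item:Bexp:op}) you apply the operator Borel transform to the shift relation $D_{[\ga]}(1/x,\partial)=D(1/x,\partial+\ga)$. Your version is in fact more carefully argued than the paper's, which simply records the two one-line computations without the convergence remarks or the test-function framing.
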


\begin{proof}
 (\ref{item:Bexp:fun}) One has
\begin{equation*}
 \sB\big(h(x) \rme^{-\ga/x}\big)(\xi) = \int_{\gc_0} h(x) \rme^{(\xi -\ga)/x}\frac{\rmd x}{x^2} =\sB\big(h(x)\big)(\xi-\ga).
\end{equation*}

(\ref{item:Bexp:op}) Applying a Borel transformation to  equality~${D_{[\ga]}\big(\frac{1}{x},\partial\big) = D\big(\frac{1}{x}, \partial+\ga\big)}$ (item~\ref{item:shift-dop} p.~\pageref{propertiesDalpha})      we obtain 
 $\gD_{[\ga]}\big(\frac{\rmd}{\rmd\gz}, \gz\big) = \gD\big(\frac{\rmd}{\rmd\gz}, \gz+\ga\big)= \gD\big(\frac{\rmd}{\rmd\xi}, \xi\big)$ for~$\xi=\gz+\ga$. Hence, the result.
\end{proof}

To define the map~$X \mapsto \sB(X \rme^{-\ga/x}\big)$ on the algebra~$\C[[x]]$ of formal power series one has to set~$\sB\big(1 \rme^{-\ga/x}\big)(\xi)=\gd_\ga$, the Dirac mass at~$\ga$. To extend it to formal Laurent series or to the polar part of meromorphic functions, according to Lemma~\ref{Bexp}, one has to set~${\sB(x^{-p} \rme^{-\ga/x}\big)=\frac{\rmd^p}{\rmd x^p}\sB\big(1\rme^{-\ga/x}\big)  =  \gd_\ga^{(p)}}$ for all~$p\in\Z_{\leq 0}$.

\begin{pro}\label{SingDelta} The operator~$\gD$ satisfies the following properties:
\begin{enumerate}
 \item\label{item:SingDelta:sing} The singular points of~$\gD$ are the Stokes values~$\ga$ of~$D$ at~0;
 \item\label{item:SingDelta:reg} All singular points of~$\gD$ are regular singular points;
 \item\label{item:SingDelta:ind} Given~${\gD_{[\ga]}}$ as in \eqref{DeltaAlpha}, the indicial polynomial of~$\gD_{[\ga]}$ at~$\gz=0$ reads
(\cf Notation~\ref{notalambdaj}, p.~\pageref{notalambdaj})
\begin{equation*}
 \Pi_{[\ga]}(\gl) = \sum_{\substack{\ell-j=k-\nu\\ 0\leq \ell\leq n\\0\leq j\leq \nu}} A_{\ell,j}\, [\ell+\gl]_{j^-}.
\end{equation*}
\end{enumerate}
 \end{pro}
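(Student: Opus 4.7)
The plan is to expand $\gD_{[\ga]}$ in canonical form $\sum_{m=0}^\nu b_m(\gz)(\rmd/\rmd\gz)^m$ and read the three statements off the resulting coefficients. Applying the Leibniz rule to each monomial $(\rmd/\rmd\gz)^j \gz^\ell$ appearing in~\eqref{DeltaAlpha} and collecting by derivative order, one obtains
\begin{equation*}
b_m(\gz) = \sum_{\substack{j\geq m \\ \ell\geq j-m}} A_{\ell,j}\, \binom{j}{m}\, \frac{\ell!}{(\ell-j+m)!}\, \gz^{\ell-j+m}.
\end{equation*}
Only pairs with $j=\nu$ contribute to $b_\nu$, and they yield $b_\nu(\gz) = \sum_{\ell=0}^n A_{\ell,\nu}\,\gz^\ell = P_{[\ga]}(\gz)$. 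Specializing to $\ga=0$ immediately proves~(\ref{item:SingDelta:sing}): the leading coefficient of $\gD$ is the characteristic polynomial $P(\xi)$ of $D$, whose roots are by definition the Stokes values.

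For~(\ref{item:SingDelta:reg}), I would verify Fuchs' criterion for $\gD_{[\ga]}$ at $\gz = 0$ for each Stokes value~$\ga$ of multiplicity~$k$. Since $A_{k,\nu}\neq 0$ while $A_{\ell,\nu}=0$ for $\ell<k$, the leading coefficient $b_\nu(\gz) = P_{[\ga]}(\gz)$ vanishes to order exactly $k$ at~$0$. For $m<\nu$, Proposition~\ref{CNSsingle}(\ref{item:CNSsingle:2}) forces $A_{\ell,j}\neq 0\Rightarrow \ell - j \geq k-\nu$, so every term in the expansion of $b_m$ carries a factor $\gz^{\ell-j+m}$ with $\ell-j+m \geq k-\nu+m$. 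Hence $\gz^{\nu-m}b_m(\gz)/b_\nu(\gz)$ is analytic at~$0$, which is Fuchs' condition.

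For~(\ref{item:SingDelta:ind}), applying $\gD_{[\ga]}$ directly to $\gz^\lambda$ gives
\begin{equation*}
\gD_{[\ga]}(\gz^\lambda) = \sum_{\ell,j} A_{\ell,j}\,[\ell+\lambda]_{j^-}\,\gz^{\ell+\lambda-j}.
\end{equation*}
The smallest power of $\gz$ in this sum is $\lambda+k-\nu$; again by Proposition~\ref{CNSsingle}(\ref{item:CNSsingle:2}) it is attained precisely on the diagonal $\ell-j=k-\nu$, and this minimum is actually reached with a nonzero coefficient at $(\ell,j)=(k,\nu)$ since $A_{k,\nu}\neq 0$. Extracting the coefficient of $\gz^{\lambda+k-\nu}$ yields the stated expression for $\Pi_{[\ga]}(\lambda)$.

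The proof is structurally straightforward, so the main obstacle is purely bookkeeping: the single bound $\ell-j\geq k-\nu$ from Proposition~\ref{CNSsingle} has to do double duty, being used as an \emph{inequality} in~(\ref{item:SingDelta:reg}) to control the orders of vanishing of all lower coefficients, and as a \emph{sharp} equality in~(\ref{item:SingDelta:ind}) to isolate the indicial terms. Checking that the bound is simultaneously correct and sharp---and that the pair $(\ell,j)=(k,\nu)$ lies in the relevant index set---is really the only substantive point.
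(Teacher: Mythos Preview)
Your proof is correct and follows essentially the same route as the paper's. The only cosmetic difference is in part~(\ref{item:SingDelta:reg}): the paper argues that the Newton polygon of~$\gD_{[\ga]}$ reduces to a single horizontal edge at height~$k-\nu$, whereas you verify Fuchs' criterion on the coefficients~$b_m$; both arguments rest on the same inequality $\ell-j\geq k-\nu$ from Proposition~\ref{CNSsingle}, and your explicit Leibniz expansion of the~$b_m$ is extra bookkeeping that the Newton-polygon phrasing sidesteps.
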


\begin{proof}
(\ref{item:SingDelta:sing}) The singular points are the zeros of the leading term of
\begin{multline*}
\quad a_{n,\nu}\frac{\rmd^\nu}{\rmd \xi^\nu} \xi^n + \cdots + a_{1,\nu}\frac{\rmd^\nu}{\rmd \xi^\nu} \xi +a_{0,\nu} \frac{\rmd^\nu}{\rmd \xi^\nu} = \\
(a_{n,\nu}\xi^n +\cdots+a_{1,\nu}\xi+a_{0,\nu})\frac{\rmd^\nu}{\rmd\xi^\nu} +  \text{terms of lower  order.}
\end{multline*}
Thus, the coefficient of~${\rmd^\nu}/{\rmd\xi^\nu}$ is equal to the characteristic polynomial~$P(\xi)$ of~$D$ (\cf~\eqref{charpol}). Its roots, that is, the Stokes values of~$D$ at~0, are then also the singular points of~$\gD$. 

(\ref{item:SingDelta:reg}) Given a Stokes value~$\ga$ of multiplicity~$k$, let us prove that the Newton polygon of~$\gD_{[\ga]}$ reduces to a single, horizontal edge (\cf Figure~\ref{fig-dessin2bis}, p.~\pageref{fig-dessin2bis}). We know from Proposition~\ref{CNSsingle}, p.~\pageref{CNSsingle}, that~${A_{\ell,j}=0}$ for all~${(\ell, j)}$ satisfying~${\ell-j < k-\nu}$ and~${A_{k,\nu} A_{n,\nu}\neq 0}$. Thus, all marked points of the Newton polygon~$\sN(\gD_{[\ga]})$ are located above the horizontal line at height~$k-\nu$. The order of~$\gD_{[\ga]} $ is equal to~$\nu$ and the point~$(\nu, k-\nu)$ is a marked point since~${A_{k,\nu} \neq 0}$. Hence the result.

(\ref{item:SingDelta:ind}) The terms of weight~$k-\nu$ in~$\gD_{[\ga]}$ are exactly the terms~${A_{\ell,j} \frac{\rmd^j}{\rmd \gz^j}\gz^\ell}$ satisfying~${\ell-j = k-\nu}$. One obtains ${\Pi_{[\ga]}(\gl)}$ by writing that the operator restricted to its terms with weight~$k-\nu$  vanishes when applied to~${\gz^\gl}$.
\end{proof}

\subsection{Solutions of the transformed equation}\label{BorelSol}
From the previous results we know that the Borel transform of any formal solution of $Dy=0$ at~$x=0$ with exponential part~${\exp(-\ga/x)}$ is a solution of~$\gD_{[\ga]}\hy=0$ at~$\gz=0$ (or equivalently, of~$\gD \hy=0$ at~$\xi=\ga$) free of exponentials.

In this section we suppose that~$\ga$  \emph{is a Stokes value of multiplicity~$k$}: there exist $k$~linearly independent solutions of~${Dy=0}$ with exponential part~${\exp(-\ga/x)}$, and their Borel transforms are solutions of~$\gD_{[\ga]} \hy=0$ at~$\gz=0$. However, the operator~$D$ has order~$n$ while the operator~$\gD_{[\ga]}$ has order~$\nu$. We know that~$k$ satisfies~$k\leq n$, but $k$~may be smaller than, equal to or larger than~$\nu$. We now look at  what happens in each case, starting with the easiest case~$k=\nu$.

\paragraph{Case $k=\nu$}  

The equation~$Dy=0$ has~$k$ linearly independent solutions attached to~$\ga$, say~${\tth_1(x) \rme^{-\ga/x}, \dots , \tth_k(x) \rme^{-\ga/x}}$ as in the prepared fundamental solution in Sec.~\ref{preparedsol}, p.~\pageref{preparedsol}. Equivalently, ${\tth_1(x),\dots,\tth_k(x)}$ is a maximal family of linearly independent solutions of~$D_{[\ga]} y=0$ attached to the Stokes value~$0$. The exponents of~$D_{[\ga]}$ at~$0$ are the~$k$ roots of the indicial polynomial~$
{ \sP_{[\ga]}(\sL) = \sum_{\ell=0}^k A_{\ell,\ell} \,[\sL]_{\ell^+}}
$.

The Borel transforms~${\hh_1(\gz),\dots,\hh_k(\gz)}$  of~${\tth_1(x), \dots , \tth_k(x)}$ are linearly independent. Since~$k=\nu$, they provide a fundamental solution of~$\gD_{[\ga]} \hy =0$ at~$\gz =0$. 

We have seen that the exponents of~${\hh_1,\dots,\hh_k}$ are those of~${\tth_1(x), \dots , \tth_k(x)}$ decreased by~1. And indeed, the indicial polynomials~$\Pi_{[\ga]}(\gl)$ and~${ \sP_{[\ga]}(\sL)}$ satisfy the relation
\begin{equation} \label{eq:ind-keqnu}
\Pi_{[\ga]}(\gl) = \sum_{\ell=0}^\nu A_{\ell,\ell} [\gl +\ell]_{\ell^-} =\sum_{\ell=0}^\nu A_{\ell,\ell} [\gl+1]_{\ell^+} = \sP_{[\ga]}(\gl+1).
\end{equation}
Moreover, taking into account Lemma~\ref{dico}, p.~\pageref{dico} one can observe that the solutions~${\tth_1(x) \rme^{-\ga/x}, \dots , \tth_\nu(x) \rme^{-\ga/x}}$ contain logarithms at the same powers as their Borel transforms.

\paragraph{Case ${ k<\nu}$ }

As in the previous case, to the $k$ linearly independent solutions~${\tth_1(x) \rme^{-\ga/x}, \dots , \tth_k(x) \rme^{-\ga/x}}$ of the equation~${Dy=0}$ attached to~$\ga$, there correspond $k$~linearly independent solutions  of~${\gD_{[\ga]}\hy=0}$ given by
\[{\hh_1=\sB(\tth_1),\dots, \hh_k=\sB(\tth_k)}.\]
However, the equation~${\gD_{[\ga]}\hy=0}$ has~$\nu>k$ linearly independent solutions. Where do the~$\nu-k$ extra solutions come from?

Alongside~$D$, consider the differential operator
\begin{equation}\label{D1}
 D_1\, =\, \partial^{\nu-k} D_{[\ga]}\,=\,\partial^{\nu-k} \sum_{\ell=0}^n A_\ell(1/x) \partial^\ell.
\end{equation}
The Newton polygon~$ \sN(x^\nu D_1)$ of~$x^\nu D_1$ is obtained by translating that of~$x^\nu D$ (\cf Figure~\ref{fig-dessin1bis}, p.~\pageref{fig-dessin1bis}) by~$\nu-k$ units along each axis. It has a horizontal side of length~$\nu$ when~$\sN(x^\nu D)$ has a horizontal side of length~$k$.
The indicial polynomial of~$D_1$ is~$\sP_1(\sL) = [\sL-1]_{(\nu-k)^-} \, \sP_{[\ga]}(\sL)$.
The exponents of~$D_1$ are hence those of~$D$ and the extra numbers~${1,2,\dots, \nu-k}$.
Since these exponents all have positive real parts, the discussion in the case $k=\nu$ above applies to~$D_1$.

The space of solutions attached to~$0$ of the equation~$D_1 y=0$ is generated by, on the one hand, the solutions ${\tth_1(x), \dots , \tth_k(x)}$ of~$D_{[\ga]} y=0$,
and, on the other hand, solutions of each of the inhomogeneous equations
\begin{equation} \label{eq:ind-kltnu}
  {D_{[\ga]}\ y=1,\ \ \ D_{[\ga]} \ y=1/x,\ \ \dots,\ \ \  D_{[\ga]}\ y=1/x^{\nu-k-1}}.
\end{equation}
These solutions can be viewed as \emph{microsolutions} of~$D\, y = 0$ in the sense of Malgrange~\cite{MalgrangeBrasov};
see in particular Theorem~2.2 and Example~(2.4) in that reference for a related problem.
The Borel transform~$\gD_1$ of $D_1$ is ${\gz^{\nu-k} \gD_{[\ga]}}$, hence the equations~${\gD_1 \hy =0}$ and~${\gD_{[\ga]} \hy =0}$ have the same series solutions. The indicial equation of~$\gD_1$   reads~
\begin{equation*}
 \Pi_1(\gl)= [\gl]_{(\nu-k)^-} \Pi_{[\ga]}(\gl)\quad \textrm{ with } \quad {\Pi_{[\ga]}(\gl)= \sP_{[\ga]}(\gl+1)}
\end{equation*}
 and we can check that~${\Pi_1(\gl) = \sP_1(\gl+1)}$. Thus, the exponents of~$\gD_1$ are, on the one hand, the roots of~$\Pi_{[\ga]}(\gl)$, that is, the exponents of~$D_{[\ga]}$ shifted by~$-1$ (which we call the \emph{free exponents} since they can in principle take any complex value)\label{par:free-exponents}, and, on the other hand, the integers~${0,1,\dots,\nu-k-1}$ (which we call the \emph{trivial exponents}\label{par:trivial-exponents}).
The Newton polygon~$\sN(\gD_1)$ is the Newton polygon~$\sN(\gD_{[\ga]})$ translated vertically to the horizontal axis, see Figure~\ref{fig-dessin2bis}.

\begin{figure}[ht]
\begin{center}
\includegraphics[width=0.75\textwidth]{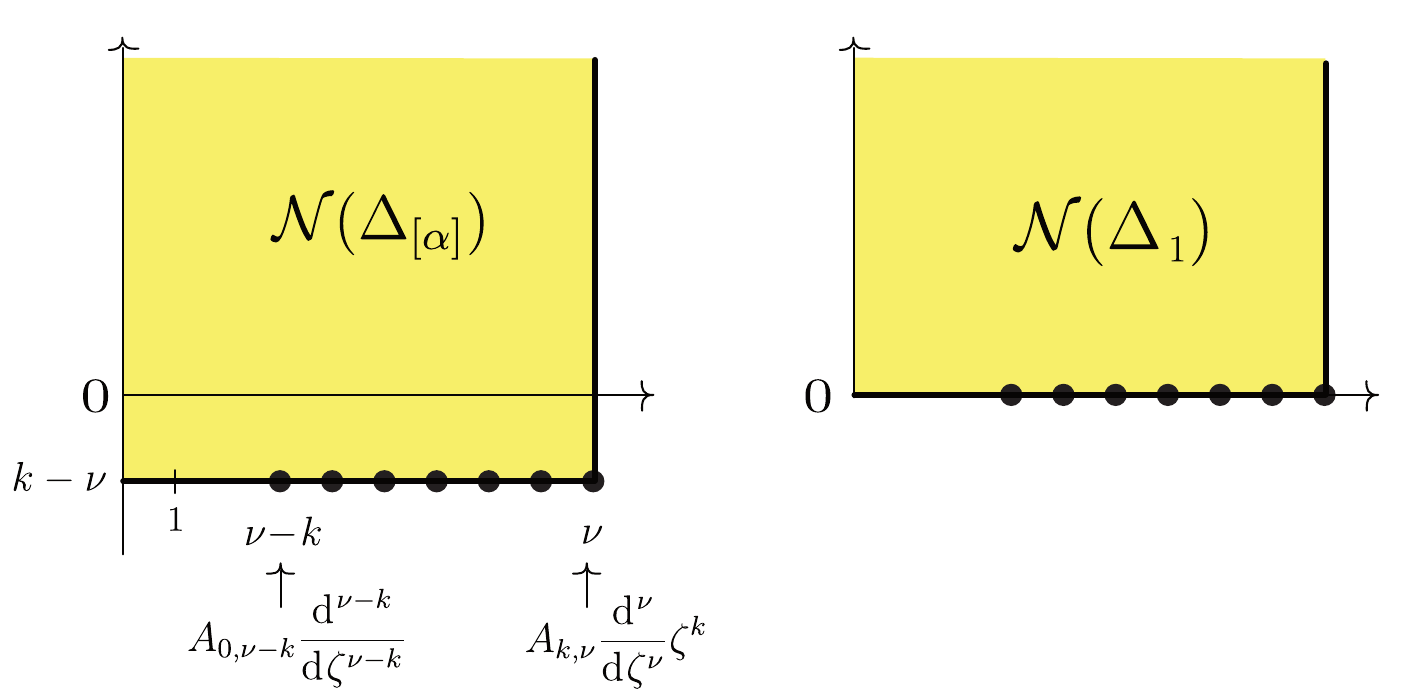}
\vspace{-2mm}\caption{The Newton polygons of $\gD_{[\ga]}$ and~$\gD_1$ when~$k< \nu$ }\vspace{-4mm}
\label{fig-dessin2bis}
\end{center}
\end{figure}

As in the case $k=\nu$, the Borel transforms~${\hh_1,\dots, \hh_k}$ of ${\tth_1, \dots, \tth_k}$ are linearly independent solutions of $\gD_{[\ga]} \hy = 0$.
Their exponents are the free exponents of~$\gD_{[\ga]}$ and they contain logarithms at the same power as~${\tth_1, \dots, \tth_k}$.

What about the solutions with trivial exponents?
Since trivial exponents differ from each other (and possibly from some of the free exponents) by integers, one may expect these solutions to involve logarithms.
More precisely, consider the exponent~$\gL=\nu-k-1-s\geq 0$, so that there are~$s$ trivial exponents larger than or equal to~$\gL$. Suppose that there are~$r \geq 0$ integer roots of~$\Pi_{[\ga]}(\gl)$ (counted with multiplicity) larger than or equal to~$\gL$.
In this situation, the classical Frobenius method for constructing solutions involving logarithms yields expressions that may contain logarithms at powers up to~$r+s-1$, and generically one of them reaches this bound.
This turns out not to be the case here: in fact, solutions attached to%
\footnote{We mean here solutions in which one can factor~$\xi^\gL$, the other factor containing only series in non negative integer powers of~$\xi$ and logarithms.\label{attached}}
the exponent~$\gL$ behave `as if the other trivial exponents did not exist'.

\begin{pro} \label{logmicrosol}
With $\gL$, $r$, and~$s$ as above, a solution attached  to the  trivial  exponent~$\gL$  of equation~${\gD_{[\ga]}\hy=0}$  may contain logarithms at a power at most~$r$.
In particular, if $\Pi_{[\ga]}(\gl)$~has no integer root larger than~$\gL$,  a solution attached to the trivial exponent~$\gL$ contains no logarithm.
\end{pro}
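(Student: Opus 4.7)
The plan is to carry out the parametrized Frobenius method and track carefully where logarithms actually enter. I would introduce a formal Frobenius-type family
\[
  \hy(\gz; \gl) = \gz^\gl + \sum_{i \geq 1} c_i(\gl) \, \gz^{\gl + i},
\]
with $c_i(\gl)$ rational in~$\gl$, chosen so that $\gD_{[\ga]}\, \hy(\gz; \gl) = \Pi_{[\ga]}(\gl)\, \gz^{\gl + k - \nu}$. Equating coefficients power by power in~$\gz$ yields a recurrence of the form
\[
  \Pi_{[\ga]}(\gl + i)\, c_i(\gl) = R_i\bigl(c_0(\gl), \ldots, c_{i-1}(\gl); \gl\bigr)
\]
for suitable polynomials $R_i$, so $c_i(\gl)$ can have a pole only where one of the denominators $\Pi_{[\ga]}(\gl + j)$, $1 \leq j \leq i$, vanishes.

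A short algebraic manipulation of the formula in Proposition~\ref{SingDelta}(\ref{item:SingDelta:ind}) shows that every summand $A_{\ell, j}[\ell + \gl]_{j^-}$ (with $\ell - j = k - \nu$) is divisible by $[\gl]_{(\nu - k)^-}$, so $\Pi_{[\ga]}(\gl)$ factors as $[\gl]_{(\nu - k)^-}$ times a polynomial of degree~$k$ whose roots are precisely the free exponents. Specializing at $\gl = \gL = \nu - k - 1 - s$, the vanishings of $\Pi_{[\ga]}(\gL + i)$ split into two families: \emph{trivial} resonances, where $\gL + i$ lies in $\{0, 1, \ldots, \nu - k - 1\}$---which happens exactly at the $s$ indices $i = 1, \ldots, s$---and \emph{free} resonances, where $\gL + i$ is an integer root of the degree-$k$ factor, of which there are $r$ counted with multiplicity. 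A naive use of Frobenius would predict logarithms at power up to $r + s - 1$; the content of the proposition is that only the $r$ free resonances actually contribute.

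The main and most delicate step is therefore to show that $c_i(\gl)$ has no pole at $\gl = \gL$ for $i \in \{1, \ldots, s\}$, so that the $s$ trivial resonances create no logarithmic contribution. I would proceed by induction on~$i$, verifying at each step that the right-hand side $R_i(c_0(\gl), \ldots, c_{i-1}(\gl); \gl)$ vanishes at $\gl = \gL$ with the exact multiplicity needed to cancel the factor $[\gL + i]_{(\nu - k)^-}$ appearing in $\Pi_{[\ga]}(\gL + i)$. Conceptually, this reflects the fact that the trivial exponents arise from the mismatch between the order~$n$ of $D_{[\ga]}$ and the degree~$\nu$ of its characteristic polynomial (equivalently, from the auxiliary operator $D_1 = \partial^{\nu - k} D_{[\ga]}$) rather than from genuine resonances among solutions of $\gD_{[\ga]}\, \hy = 0$: the trivial solutions are independent microsolutions, obtained as Borel transforms of solutions of the inhomogeneous Laplace-plane equations $D_{[\ga]}\, y = 1/x^p$, and they do not interact logarithmically with one another. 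A cleaner implementation might be to first exhibit $\nu - k$ explicit log-free formal solutions attached to the trivial exponents and then to deduce the pole cancellation from their existence.

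Once this cancellation is established, only the $r$ free resonances survive as possible sources of logarithms. The standard Frobenius argument---clearing poles by multiplication by an appropriate polynomial in $(\gl - \gL)$, taking up to $r$ derivatives in~$\gl$, and specializing at $\gl = \gL$---then yields a solution attached to~$\gL$ whose logarithmic terms appear with powers at most~$r$. In particular, when $\Pi_{[\ga]}$ has no integer root strictly larger than~$\gL$, one has $r = 0$: the specialized series $\hy(\gz; \gL)$ is already well-defined by the cancellation argument, and the constructed solution is log-free, as claimed.
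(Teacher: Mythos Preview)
Your approach is genuinely different from the paper's and can be made to work, but the proposal leaves the decisive step as an assertion. You correctly observe that each summand $A_{\ell,j}[\ell+\gl]_{j^-}$ with $\ell-j=k-\nu$ is divisible by $[\gl]_{(\nu-k)^-}$, yielding the factorisation of the indicial polynomial. What your induction actually needs is the same divisibility one level deeper: writing $\Pi_d(\mu)=\sum_{\ell-j=d+k-\nu}A_{\ell,j}[\mu+\ell]_{j^-}$ for the $d$-th subdiagonal of the Frobenius recursion, the identical manipulation shows that $\Pi_d(\mu)$ is divisible by $[\mu]_{(\nu-k-d)^-}$ for every $0\le d<\nu-k$. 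That is precisely what forces each $\Pi_{m-i}(\gL+i)$ to vanish for $1\le m\le s$ and $0\le i<m$, so that $R_m(\ldots;\gL)=0$ and the simple trivial zero of $\Pi_{[\ga]}(\gl+m)$ at $\gl=\gL$ cancels. You flag this as ``the main and most delicate step'' but do not supply this lemma; without it the induction has no engine. There is also a wrinkle you pass over: if a free resonance sits at some $\gL+i$ with $i\le s$, then $c_i(\gl)$ \emph{does} acquire a pole there, so the literal claim ``$c_i$ has no pole at $\gl=\gL$ for $i\in\{1,\ldots,s\}$'' is wrong in that case. The correct inductive invariant is that the pole order of~$c_m$ at $\gl=\gL$ never exceeds the number of \emph{free} resonances in $\{1,\ldots,m\}$.

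The paper sidesteps this bookkeeping entirely by pulling the question back to the Laplace plane. Every solution of $\gD_{[\ga]}\hy=0$ attached to~$\gL$ is the Borel transform of a solution, attached to $\gL+1$, either of $D_{[\ga]}y=0$ or of one of the inhomogeneous equations $D_{[\ga]}y=1/x^p$. Homogenising the latter as $\frac{\rmd}{\rmd x}\,x^pD_{[\ga]}y=0$ adds a \emph{single} integer exponent to those of~$D_{[\ga]}$, so the ordinary Frobenius bound in the Laplace plane already gives log power at most~$r$, and Lemma~\ref{dico} carries that bound through the Borel transform. Your microsolutions remark is pointing exactly at this mechanism; the paper simply executes it rather than unwinding the cancellation directly in the Borel plane.
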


\begin{proof}
As we have just seen,
the space of solutions of~${\gD_{[\ga]}\hy=0}$ attached to the exponent~$\gL$ is generated by the Borel transforms of the solutions attached to $\gL + 1$ of the homogeneous equation
$D_{[\ga]} y = 0$
and the Borel transforms of one solution attached to~$\gL + 1$ of each of the inhomogeneous equations~\eqref{eq:ind-kltnu} that has one.

The method of Frobenius~\cite{Poole1936,Ince44} shows that the solutions of
$D_{[\ga]} y = 0$
attached to the exponent~$\gL + 1$
involve logarithms at powers at most~$r-1$.
In homogeneous form the equation~$D_{[\ga]} y=1/x^{p}$ for $p \in \Z_{\geq 0}$ reads
${\frac{\rmd}{\rmd x} x^{p} D_{[\ga]} y=0}$
and has at most one integer exponent larger than or equal to~$\gL + 1$ in addition to those of~$D_{[\ga]}$.
Its solutions attached to~$\gL + 1$ thus involve logarithms at powers at most~$r$.
(More precisely, the indicial equation of
${\frac{\rmd}{\rmd x} x^{p} D_{[\ga]} y=0}$
is~${(\sL+k-\nu+p)\sP_{[\ga]}(\sL)=0}$,
so that the additional exponent is $\gL + 1 + s - p$
and the inhomogeneous equation has a solution attached to~$\gL + 1$ if and only if $0 \leq p \leq s$.)

By Lemma~\ref{dico}, p.~\pageref{dico}, the Borel transforms of all these solutions involve logarithms at powers at most~$r$.
\end{proof}

Let us now consider a \emph{generic} operator~$D$ of degree~$\nu>1$ in the sense of Remark~\ref{gencase}, p.~\pageref{gencase}.
From that remark we know that, for any Stokes value~$\ga$, the operator~$D_{[\ga]}$ has a unique exponent~${\sL_{[\ga]} = -{A_{0,\nu-1}}/{A_{1,\nu}}}$. The operator~$\gD_{[\ga]}$ admits the exponent~$\gL=\sL_{[\ga]}-1$ and the~$\nu-1$ trivial exponents~${0,1,2,\dots, \nu-2}$.

\begin{cor}\label{gencaseexponent}
For generic~$D$ of degree~$\nu>1$, with $\sL_{[\ga]}$ defined as above one can set:
\begin{itemize}
 \item if~$\sL_{[\ga]} \notin \Z_{>0}$, the solutions  of~$\gD_{[\ga]}\hy=0$ contain no logarithm;
 \smallskip
 \item  if~$\sL_{[\ga]} \in \Z_{>0}$, then the solutions of~$\gD_{[\ga]}\hy=0$  with  exponent~$\gl$ satisfying~${\gl\in \{0,1,2,\dots,\nu-2\}}$ and 
 ${\gl < \gL=\sL_{[\ga]}-1}$ may involve logarithms, but only at the first power.
\end{itemize}
\end{cor}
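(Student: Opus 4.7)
The plan is to derive this corollary as a direct consequence of Proposition~\ref{logmicrosol} applied to the generic setting described in Remark~\ref{gencase}. The key structural input is that the whole spectrum of the relevant indicial polynomial collapses to a single free exponent, so the counts of integer roots that control logarithmic orders in Proposition~\ref{logmicrosol} become very easy to evaluate.

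First I would record that, in the generic case of degree~$\nu>1$, each Stokes value~$\ga$ has multiplicity~$k=1$, so $D_{[\ga]}$ admits a unique exponent~$\sL_{[\ga]}$, i.e.\ $\sP_{[\ga]}(\sL)$ has $\sL_{[\ga]}$ as its only root. Using the relation $\Pi_{[\ga]}(\gl)=\sP_{[\ga]}(\gl+1)$ (which is visible in~\eqref{eq:ind-keqnu} and extends to our $k<\nu$ setting via the factorisations $\Pi_1(\gl)=[\gl]_{(\nu-k)^-}\Pi_{[\ga]}(\gl)$ and $\sP_1(\sL)=[\sL-1]_{(\nu-k)^-}\sP_{[\ga]}(\sL)$), it follows that $\Pi_{[\ga]}(\gl)$ has the single root $\gL=\sL_{[\ga]}-1$. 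The free exponent of $\gD_{[\ga]}$ is therefore $\gL$, and the trivial exponents are exactly $0,1,\dots,\nu-2$.

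Next I would split according to whether $\sL_{[\ga]}$ is a positive integer. In the first case, $\sL_{[\ga]}\notin\Z_{>0}$, hence $\gL\notin\Z_{\geq 0}$ and $\Pi_{[\ga]}$ has no non-negative integer root at all. For every trivial exponent $\gl\in\{0,\dots,\nu-2\}$ the number $r$ in Proposition~\ref{logmicrosol} is $0$, so the associated microsolutions contain no logarithm. The one remaining solution, attached to the free exponent~$\gL$, is the Borel transform of $\tth_1(x)$, which is logarithm-free because, as noted in Remark~\ref{gencase}, the prepared fundamental solution of $D\,y=0$ in the generic case involves no logarithms; by Lemma~\ref{dico}\ref{item:dico:main} the Borel transform inherits this property. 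Hence no solution of $\gD_{[\ga]}\hy=0$ involves a logarithm.

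In the second case, $\sL_{[\ga]}\in\Z_{>0}$, so $\gL\in\Z_{\geq 0}$. For a trivial exponent~$\gl\in\{0,\dots,\nu-2\}$ with $\gl<\gL$, the only integer root of $\Pi_{[\ga]}$ larger than or equal to~$\gl$ is $\gL$ itself, so $r=1$. Proposition~\ref{logmicrosol} then guarantees that the microsolutions attached to~$\gl$ contain logarithms at most at the first power, which is what the corollary claims. The only step that needs a slight care is the bookkeeping that identifies $r$ with $1$ rather than with a larger number when $\gL$ happens to coincide with one of the other trivial exponents; but the counts in Proposition~\ref{logmicrosol} concern roots of $\Pi_{[\ga]}$ only, not the trivial exponents, so this multiplicity question does not arise. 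This is essentially the only subtlety, and once it is addressed the corollary follows immediately.
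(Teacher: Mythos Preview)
Your proof is correct and follows exactly the approach the paper intends: the corollary is stated without proof immediately after Proposition~\ref{logmicrosol}, and your argument is precisely the specialization of that proposition to the generic case $k=1$, with the free-exponent solution handled separately via Lemma~\ref{dico}. The only minor remark is that your phrase ``no non-negative integer root'' in the first case is slightly stronger than needed (what matters is the absence of integer roots $\geq$ the trivial exponent under consideration), but since all trivial exponents are non-negative this comes to the same thing.
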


\paragraph{Case~${\nu< k}$}

In this case, the order $\nu$ of the Borel operator~$\gD_{[\ga]}$ is smaller than the number~$k$ of solutions of~$D_{[\ga]}y=0$ attached to~0.

The conditions for~$D_{[\ga]}$ to have the single level~1 (Proposition~\ref{CNSsingle}, p.~\pageref{CNSsingle}) say, in particular, that~$A_{\ell,j}=0$ as soon as~$\ell-j < k-\nu$. When~$\ell < k-\nu$ these conditions are satisfied for all~$j$, which implies~$A_\ell(1/x) \equiv 0$. Thus, one can write~${D_{[\ga]} = D_2 \,\,\partial^{k-\nu}}$ for some differential operator~$D_2$,
and the rational functions of nonpositive valuation
${1,\, 1/x,\, 1/x^2,\,\dots,\,1/x^{ k-\nu-1}}$
are solutions of~$D_{[\ga]}\, y=0$. This cannot happen with an equation prepared as in Sec.~\ref{preparedsol}, p.~\pageref{preparedsol}. To make $D_{[\ga]}$ fit the conditions of Sec.~\ref{preparedsol} one is led to make the change of variable~$y=x^{-(k-\nu)} \, z$, after which the equation falls in case~$k=\nu$.

\section{The Stokes phenomenon in the Laplace plane} \label{sec:Stokes-Laplace}

We now recall some key facts about the \emph{Stokes phenomenon} of the linear differential equation $D\, y=0$ (equation~\eqref{Eqinitiale}, p.~\pageref{Eqinitiale}), including the definition of Stokes matrices.
We still assume that the equation is prepared as in Sec.~\ref{preparedsol}, p.~\pageref{preparedsol} and that we have fixed a formal fundamental solution
 \[
 \begin{bmatrix}
\ty_1(x)&\ty_2(x)&\cdots&\ty_n(x)\\
\end{bmatrix}
=  \begin{bmatrix} \tf_1(x) & \tf_2(x)&\cdots &\tf_n(x)\end{bmatrix} \, x^{\pL}\, \rme^{Q(1/x)}.
\]

As already mentioned, describing the Stokes phenomenon amounts to measuring the gaps between solutions with same asymptotics on a common sector based at~$0$. However, there are infinitely many ways of presenting these gaps and one has to organise them in order to understand what happens and to be able to compare various results.
From a theoretical perspective, the Stokes phenomenon of equation~\eqref{Eqinitiale} is characterised by a non-abelian 1-cocycle called the \emph{Stokes cocycle} 
that can be described as a finite collection of Stokes automorphisms (\cite[Th.~II.2.1]{{L-R94}}, \cite{BV83},     \cite[Sec.~3.5.3]{L-R16}). There is a Stokes automorphism associated with each \emph{anti-Stokes direction}, that is, each direction~$\go\in\R/(2\pi \Z)$ in which there exist  determining polynomials~$q_j(1/x)$ and~$q_\ell(1/x)$ such that~$\arg(x)=\go\in\R/(2\pi \Z)$ is a direction of maximal decay for~$\exp(q_j-q_\ell)(1/x)$. In the case under consideration where~$q_j(1/x) =-\ga_j/x$ and~$q_\ell(1/x)=-\ga_\ell /x$ this means that~$-(\ga_j-\ga_\ell)\rme^{-\rmi \go}$ is real negative \cite[Th.~3.3.5 (iv), (v)]{L-R16}.

\begin{dfn}\label{aSd}
 Let $\ga_1, \dots, \ga_N$ be the Stokes values of~$D$.
 The \emph{anti-Stokes directions} are the directions~$\go$ given by
 $${\go=\arg(\ga_j-\ga_\ell)}\ \  \text{ for all } \  j\neq \ell.$$
 If $\ty$~is a solution attached to the Stokes value~$\alpha_\ell$,
 the \emph{anti-Stokes directions associated with the solution}~$\ty$ are the directions~$\go$ such that there exists~$j \neq \ell$ with~$\go=\arg(\ga_j-\ga_\ell)$.
\end{dfn}

All details of the abstract definition of Stokes automorphisms can be found in the references cited above.
For the purposes of this paper, let us simply note that the $0$-cochain from which the Stokes cocycle is built  consists of analytic functions asymptotic to~${\begin{bmatrix} \tf_1(x) & \tf_2(x)&\cdots &\tf_n(x)\end{bmatrix}}$.  One can prove that these functions coincide with the $1$-sums of the corresponding asymptotic series on each side of~the anti-Stokes directions.

More precisely, the formal series~$\tf_j(x)$ are \emph{$1$-summable} \cite[Def.~5.1.6, Th.~5.2.5]{L-R16}, with $1$-sums in all directions except, possibly, the finitely many anti-Stokes directions. In an anti-Stokes direction~$\go$ one defines \emph{two lateral `sums'}, $f_j^-(x)$ to the right of~$\go$ and~$f_j^+(x)$ to the left of~$\go$: the sum to the right (\resp~the left) is obtained by gluing together the $1$-sums of~$\tf_j$ in the directions~$\go-\gve$ (\resp~$\go+\gve$) for all small enough\footnote{It is enough that the sector~$(\go-\gve , \go+\gve)$ does not contain any anti-Stokes direction other than~$\go$. From now, this condition is assumed.}~$\gve>0$.
On a neighbourhood of~$0$ in the open half-plane bisected by~$\go$, the lateral sums are both defined and both $1$-Gevrey asymptotic to~$\tf_j(x)$. However, in general, they are not 1-sums of~$\tf_j(x)$ in the direction~$\go$, as for this they would have to be 1-Gevrey asymptotic to~$\tf_j(x)$ on a wider sector of opening~$\pi/2-\gve$. There is a nontrivial Stokes automorphism in the direction~$\go$ if~$f_j^-(x)$ and~$f_j^+(x)$ do not agree on this common domain.

\subsection{Stokes matrices}\label{Stokes matrices}

To obtain a matrix representation of a Stokes automorphism one has to look at the gap  between fundamental solutions associated to the lateral sums.
To define these analytic fundamental solutions,
the formal factors~$x^\pL$ and~$\exp Q(1/x)$ are changed into actual functions in a natural way, with the following choices:
\begin{itemize}
 \item the determination of the argument for~$\go$ and neighbouring directions: we use the principal determination~$-\pi < \ugo \leq \pi$;
 \item the value~${\rme^0=1}$ for the exponential function at~0.
\end{itemize}
The products~${ 
\begin{bmatrix}
 f_1^\pm & f_2^\pm & \cdots f_n^\pm
\end{bmatrix}    x^\pL\,\exp Q(1/x)}$
then provide two analytic fundamental solutions~${ 
\begin{bmatrix}
 y_1^\pm & y_2^\pm& \dots & y_n^\pm
\end{bmatrix}
}$
of the equation on a neighbourhood of~$0$ in the same sector~${\mathopen\vert \arg x-\ugo\mathclose\vert<\pi/2}$.
They are hence connected by a constant invertible matrix and we can state the following definition.

\begin{dfn}\label{Smatrix}
The \emph{Stokes matrix in the direction~$\ugo$} is the unique constant invertible matrix~${I+C_\ugo}$ satisfying 
\begin{equation} \label{StokesFormula1}
\begin{bmatrix}
 y_1^{-}(x)& \cdots&y_n^{-}(x)
\end{bmatrix} =
\begin{bmatrix}
y_1^{+}(x)& \cdots&y_n^{+}(x) 
\end{bmatrix} (I+C_{\ugo}), 
\end{equation} 
 on the open  half plane bisected by~$\ugo$ near~$0$.
We denote~$C_{\ugo}= 
\begin{bmatrix}
C_{\ugo}^{[j\pv\ell]}
\end{bmatrix}_{1\leq j,\ell\leq n}$.  
\end{dfn}

\label{par:trivial-entries}
One proves that the Stokes matrix~${I+C_{\ugo}}$ is \emph{unipotent }(\ie~$C_{\ugo}$  is nilpotent): 
the diagonal~entries are equal to 1; for~${j\neq \ell}$ the entry ${C_{\ugo}^{[j\pv\ell]}}$ is zero unless~${\exp(q_j-q_\ell)(1/x)=\exp(-\ga_j+\ga_\ell)/x}$ has maximal decay, in other words, unless~${( -\ga_j+\ga_\ell) \, \rme^{-\rmi\ugo}<0}$. 
Thus, for a given~$\ell$, if there is no~$\ga_j$ such that~$\arg{(\ga_j-\ga_\ell)}=\go$,  then the $\ell$th column of~$C_\ugo$ is zero.
For more details we refer to~\cite{L-R94}.

Note that the Stokes matrix depends not only on the choice of a formal fundamental solution but also on that of the determination~$\ugo$, whereas the Stokes automorphism it represents depends only on the direction~$\go$.

\begin{exa}\label{exampleJordan}
Consider the case when the matrix~$\pL$ contains a Jordan block of size~2 and a block of size~1: $ \pL = (\sL_1 I_2 +J_2) \oplus (\sL_3 I_1)$.
Necessarily~${q_1(1/x)=q_2(1/x)\hspace{-0.5mm}:=\hspace{-0.5mm}-\ga_1/x}$ and we assume that~${q_1(1/x)\neq q_3(1/x)}$ with~${q_3(1/x):=-\ga_3/x}$. A formal fundamental solution reads 
\begin{equation*} \left\lbrace
\begin{array}{l}
  \ty_1(x) = \tf_1(x)\,  x^{\sL_1} \rme^{-\ga_1/x},\\
  \noalign{\smallskip}
  \ty_2(x)= \big(\tf_2(x) +\tf_1(x) \ln x\big) \, x^{\sL_1} \rme^{-\ga_1/x},\\
 \noalign{\smallskip}
  \ty_3(x) = \tf_3(x)\, x^{\sL_3} \rme^{-\ga_3/x}.\\
\end{array}
\right.
\end{equation*}
The unique anti-Stokes direction for~$y_1(x)$ and~$y_2(x)$ is $\go_1 = \arg(\ga_3-\ga_1)$.
The unique anti-Stokes direction for~$y_3(x)$  is $\go_3 = \arg(\ga_1-\ga_3)$.

The  Stokes matrices~${I+C_{\ugo_1}}$ and $I+C_{\ugo_3}$  are of  the form
\begin{equation*}
{I+C_{\ugo_1}}=\begin{bmatrix}
 1 & 0& 0\\
 0&1& 0\\
 C^{[3\pv 1]}_{\ugo_1}& C_{\ugo_1}^{[3 \pv 2]}& 1 \\
\end{bmatrix} 
\quad \textrm{and}\quad  
I+C_{\ugo_3} = \begin{bmatrix}
\  1\  & \  0\ & C_{\ugo_3}^{[1 \pv 3]} \ \\
 0&1& C_{\ugo_3}^{[2 \pv 3]}\\
0& 0& 1 \\
\end{bmatrix}.
\end{equation*}
We observe that both matrices are unipotent and  the following relations hold:
\begin{equation*}
 \left\lbrace
\begin{array}{l}
 y_1^{-}(x)- y_1^{+}(x) = y_3^{+}(x) \, C_{\ugo_1}^{[3\pv 1]}\\
 \noalign{\smallskip}
 y_2^{-}(x)-y_2^{+}(x) = y_3^{+}(x) \, C_{\ugo_1}^{[3\pv 2]}\
\end{array} 
 \right.
\end{equation*}
and
\begin{equation*}
y_3^-(x)-y_3^+(x)=y_1^+(x)C_{\ugo_3}^{[1\pv 3]} + y_2^+(x)C_{\ugo_3}^{[3\pv 2]}.
\end{equation*}
Note that  $\go_1$ (\resp~$\go_3$) is not an anti-Stokes direction for~$y_3$ (\resp $y_1$ and~$y_2$); hence the zeroes in the third column of~$C_{\ugo_1}$ (\resp in the third row of~$C_{\ugo_3}$).
\end{exa}
\mmedskip

\subsection{Case of the prepared fundamental solution}  \label{StokesRel}

Let us now focus, in the notation of Sec.~\ref{preparedsol}, p.~\pageref{preparedsol}, on the first block~$\pL_1$ of size~$m_1$ in the matrix of exponents~$\pL$.
Recall that $\pL_1$ is associated with a sub-block of the first block~$Q_1 = q_1(1/x) I_{k_1}$ of the matrix~$Q$ of determining polynomials.
Considering the first~${m_1}$ formal solutions will allow us to determine the first~${m_1}$ columns of the Stokes matrix~${I+C_\ugo}$.
There is no loss of generality in restricting ourselves to~$\pL_1$ since any block of the decomposition can be brought in first position by a permutation of the solutions.

   The first~$m_1$ formal solutions~${\tY_1(x) = 
\begin{bmatrix}
 \ty_1(x) & \ty_2(x)&\dots & \ty_{m_1}(x)\\
\end{bmatrix}}$
 are  of the form
\begin{equation}\label{firstm1}
\hspace{-1mm}\left\lbrace
\begin{array}{l}
\ty_1(x)= \tf_1(x) \, x^{\sL_1}\, \rme^{q_1(1/x)}\\
\ty_2(x) = (\tf_2(x) +\tf_1(x) \ln x)  \, x^{\sL_1}\, \rme^{q_1(1/x)}\\
\cdots\\
\ty_{m_1}(x)= \big(\tf_{m_1}(x) +\cdots + \tf_1(x) \ln^{m_1-1} x\big) \, x^{\sL_1}\, \rme^{q_1(1/x)}\\
\end{array}
\right.
\end{equation}
They are associated with the anti-Stokes directions
\[\arg(\ga_2-\ga_1),\dots,\arg(\ga_N-\ga_1).\]
Given one such direction~$\go$, with principal determination~$\ugo$ we set
\begin{equation}
 {\gb_1=\ga_{i_1}-\ga_1 , \quad\gb_2=\ga_{i_2}-\ga_1 ,\quad \cdots\ , \quad\gb_r=\ga_{i_r}-\ga_1}
\end{equation}
where $\ga_{i_1}, \dots, \ga_{i_r}$ are the Stokes values lying on the ray~$d_\go$ originating from~$0$ in the direction~$\go$, ordered by increasing distance from~$0$.

In restriction to the first~$m_1$ columns, the first~$k_1$ rows in~$C_\ugo(x)$ are zero and only the rows~${i\in {\sI_{i_1}\cup \sI_{i_2} \cup \dots \cup \sI_{i_r}}}$ corresponding to the blocks 
\begin{equation*}
  Q_{i_\ell}(1/x)=q_{i_\ell}(1/x) I_{k_{i_\ell}}= -\ga_{i_\ell}/x I_{k_{i_\ell}}\quad \textrm{for} \ \ 
\ell=1,2,\dots,r
\end{equation*}
may be nonzero.
From the definition~\eqref{StokesFormula1} of the Stokes matrix we derive
\begin{equation}\label{StokesY1}
\displaystyle h_\ell^-(x)- h_\ell^+(x) = \sum_{s=1}^r \rme^{-\gb_s/x} \Big(\sum_{j\in\sI_{i_s}} h_j^+(x) C_\ugo^{[j\pv \ell]}   \Big)\quad\textrm{for} \ \ell=1,2,\dots,m_1
\end{equation}
where~$h_j(x)$ is defined by~$y_j(x) =h_j(x) \rme^{-\ga_i/x}$ when~$j\in\sI_{i}$.

\section{The Stokes phenomenon viewed from the Borel plane} \label{sec:Stokes-Borel}

The idea is now to rewrite the definition~\eqref{StokesFormula1} seen from the Borel plane in order to get a system of  linear equations for the Stokes multipliers.
As in the previous section, we focus on the block
\[
 \begin{bmatrix}\ty_1&\dots&\ty_{m_1}\end{bmatrix} =  \begin{bmatrix}\tth_1&\dots&\tth_{m_1}\end{bmatrix} \rme^{-\ga_1/x}
\]
with~$\begin{bmatrix}\tth_1&\dots&\tth_{m_1}\end{bmatrix} =
\begin{bmatrix}\tf_1&\dots&\tf_{m_1}\end{bmatrix} x^{\frak{L}_1} $ and~$\frak{L}_1=\sL_1 I_{m_1} +J_{m_1}$.

The theory says that the power series~$\tf_\ell(x)$ are 1-summable in all direction but the anti-Stokes directions and that one can obtain these sums by applying a Borel transformation followed by a Laplace transformation.
Given the anti-Stokes direction~$\go$, we are interested in the sums in the directions~$\go\pm\gve$ for~$\gve$ small enough.
Thanks to the condition on the valuation required for prepared solutions we can apply these transformations directly to the~$\tth_\ell(x)$. Indeed, since $\tth_\ell$~has positive valuation, the valuation of~$\hh_\ell=\sB(\tth_\ell)$ is~$> -1$ and the Laplace integrals converge at~$0$.
As a solution of a linear differential equation with no singular point on the rays~$d_{\go\pm\gve}$, $\hh_\ell$ is continuable up to infinity along these rays.
The resulting functions have at most exponential growth at infinity because the operator~${\gD_{[\ga_1]}}$ is of level~$\leq 1$ at infinity, and hence the Laplace integrals converge.

To obtain the left hand-side of formula~\eqref{StokesFormula1}  one should a priori first  analytically continue~$\hh_\ell(\gz)$ to the right and to the left of~$d_\go$, then take a Laplace integral of~${\hh_\ell(\gz)}$ from~$\gz=0$ to infinity in direction~$\go-\gve$, and then a Laplace integral from infinity to~$\gz=0$ in direction~$\go+\gve$. It is more convenient to split the integration path~$\gc$  into~$r$ paths~$\gc_{j}$ going around each singular point~$\gb_1,\gb_2,\dots,\gb_r$ in the ~$\gz$-plane
 (\ie the points~${\ga_{i_1} , \ga_{i_2} ,\cdots,\ga_{i_r}}$ in the~$\xi$-plane) as depicted on Figure~\ref{fig-splitpath}.

\begin{figure}[!ht]
\begin{center}
\includegraphics[width=0.48\textwidth]{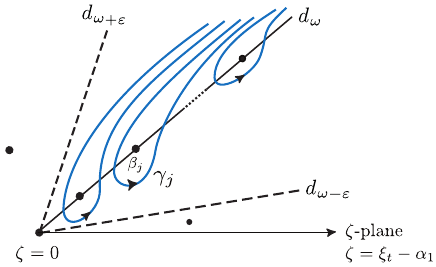}
\vspace{-1mm}
\caption{ 
The path~$\gc_{\go}$ split into finitely many paths $\gc_j$;\hfill\\
\null\qquad\qquad \emph{the big black dots represent the singular points of~$\gD$ }}
\label{fig-splitpath}
\end{center}
\end{figure}

The  solutions~$\hh_\ell(\gz)$, for~${\ell=1,\dots,m_1}$ of the linear differential equation~${\gD_{[\ga_1]} \hy=0}$  may be analytically continued along any path that avoids the singular points of~${\gD_{[\ga_1]}}$. 
The analytic continuation of the~$\hh_\ell(\gz)$'s along the path of integration~$\gc_\go$ should be done as follows: first, one defines~$\hh_\ell(\gz)$ to the right of~$d_\go$ by analytic continuation along a path~${\gc_{0\rightarrow\gb_r}}$ as shown on Figure~\ref{fig-gamma-minus}.
Then, once  the value of~$\hh_\ell(\gz)$ is known on an arc of each~$\gc_{j}$, one considers  the analytic continuations along both branches of the~$\gc_{j}$'s from these values. We denote by~$\hh_\ell^-$ the resulting function. Note that in practice, only the analytic continuation along the path~$\gc_{0\rightarrow\gb_r}$ will need to be computed explicitly, and this can be done in a finite number of steps by means of the Cauchy-Weierstrass method (\cf sec.~\ref{anacont}, p.~\pageref{anacont}).

 \begin{figure}[!ht]
\begin{center}
\includegraphics[width=0.45\textwidth]{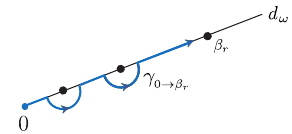}
\vspace{-1mm}\caption{The path ${\gc_{0\rightarrow\gb_r}}$ in the~$\gz$-plane ($\gz=\xi-\ga_1$)}
\label{fig-gamma-minus}
\end{center}
\end{figure}
Integrating along the split path~$\gc$ we obtain, for~$\ell=1,\dots,m_1$,
\begin{equation}\label{StokesLaplace} 
h_\ell^-(x)- h_\ell^+(x) \, = \,
  \sum_{s=1}^r \, \rme^{-\gb_s/x}\,
\int_{\gc_{0} } \hh_{\ell}^{\hspace{1pt }-} \big(\gb_s+\gz\big) \rme^{-\gz/x} d\gz
\end{equation}
where $\gc_{0}$ is the Hankel type path around~$\gz=0$ obtained from~${\gc_{i_s} }$ by the translation $-\gb_s$.

\section{Comparing the two approaches}
\label{sec:compare}

Comparing equations~\eqref{StokesY1}, p.~\pageref{StokesY1}, and~\eqref{StokesLaplace} we obtain, for~$\ell=1,2,\dots,m_1$,
\begin{equation*}
 \sum_{s=1}^r \rme^{-\gb_s/x} \Big(\sum_{j\in\sI_{i_s}} h_j^+(x) C_\ugo^{[j\pv \ell]}   \Big)
 =  \sum_{s=1}^r \, \rme^{-\gb_s/x}\,
\int_{\gc_{0} } \hh_{\ell}^{\hspace{1pt}-} \big(\gb_s+\gz \big) \rme^{-\gz/x} d\gz.
\end{equation*}
Perhaps the most subtle point in the approach is that  one can identify the exponential terms on each side despite the fact that they have functions as coefficients. The reason for this is that these functions are summable-resurgent~\cite[Lemma~4.2]{LR11}. We deduce for all column indices~${\ell = 1,\dots ,m_1}$ and row block indices~${s=1,\dots, r}$ (recall that~$\sI_{i_s}$ is a set of~$k_{i_s}$ consecutive rows indices) the relation
\begin{equation}\label{identite}
 \sum_{j\in\sI_{i_s}} h_j^+(x) \, C_\ugo^{[j\pv \ell]}  =
 \int_{\gc_0 } \hh_{\ell}^{\hspace{1pt}-} \big(\gb_s+\gz \big) \rme^{-\gz/x} d\gz.
\end{equation}

This formula is to be understood as follows.
Given~$s$, the functions~$h^+_j(x)$ for $j \in \sI_{i_s}$ form a basis of a linear space of dimension~$k_{i_s}$ and  we must find the coordinates~$C_\ugo^{[j\pv \ell]} $ of the function of the  right-hand side in this basis.
While the functions~$h^+_j$ and~$\hy^{\hspace{1pt}-}_\ell$ are theoretically well-defined, it is convenient for the effective calculation of the coordinates to replace both sides of the equation~\eqref{identite} by their asymptotic expansions as $x \to 0$.
In other words, \eqref{identite}~is advantageously replaced by
\begin{equation}\label{identite2}
 \boxit(8pt,8pt,3pt,3pt){$\displaystyle%
 \sum_{j\in\sI_{i_s}} \widetilde h_j(x) \, C_\ugo^{[j\pv \ell]}  = T_{(x=0)}
 \int_{\gc_0 } \hh_{\ell}^{\hspace{1pt}-} \big(\gb_s+\gz \big) \rme^{-\gz/x} d\gz
 $}
\end{equation}
where the asymptotic expansion of the integral on the right hand-side is easily calculated with the help  of the following lemma.

\begin{lemma} \label{lemmeconn} 
Let $\gC_{\ugo}$ be  a Hankel contour around the half-line $d_{\go}$, oriented positively 
from~$\arg\gz=\ugo-2\pi$ to  $\arg\gz=\ugo$, 
and let $x\in \C$ be a point of the half-plane $\abs{\arg(x)-\ugo}<\pi/2$.
\begin{enumerate}
 \item\label{item:lemmecon:analytic} Suppose that $\gvf(\gz)\rme^{-\gz/x}$ is integrable at infinity in the direction $\ugo$ and that $\gvf(\gz)$ is analytic on and inside $\gC_{\ugo}$.
 Then,
 \[ \int_{\gC_{\ugo}} \gvf(\gz)\, \rme^{-\gz/x} \, \rmd\gz =0. \]

\item\label{item:lemmecon:nologs} For any $\lambda \in \C$, one has
\[ I_\gl(x):=\int_{\gC_{\ugo}} \gz^\gl\, \rme^{-\gz/x} \, \rmd\gz =  \frac{2\pi\rmi}{\gC(-\gl)} \, \big(x\rme^{-\pi\rmi}\big)^{1+\gl}. \]
In particular, $I_\gl=0$ for all $\gl\in\N$ in accordance with item~(\ref{item:lemmecon:analytic}).

\item\label{item:lemmecon:logs} For any $\lambda \in \C$ and ${p\in\N}$, one has
\[ J_{\gl\pv p}(x):=
 \int_{\gC_{\ugo}} \gz^\gl\, \ln^p(\gz)\,  \rme^{-\gz/x} \, \rmd\gz =\frac{\rmd^p}{\rmd \gl^p} \, I_\gl(x) \]
 and $J_{\gl\pv p}(x)$ is of the form $x^{1+\gl} R\big(\ln (x)\big)$ with~$R\big(\ln( x)\big)\in \C[\ln (x)]$.
 In particular, for~$\gl\in\N$,
\begin{align*}
  J_{\gl\pv 1} &= 2\pi \rmi \,\gC(1+\gl)\, x^{1+\gl}\\
  \noalign{\smallskip}
J_{\gl\pv 2} &= 4\pi \rmi\, \gC(1+\gl)\,\big( -\gc+
\sH_\gl -\pi\rmi+\ln x\big) \, x^{1+\gl}.
\end{align*}
where
 $\sH_\gl=\sum_{j=1}^\gl \frac{1}{j} $ is the harmonic sum of order~$\gl$. 
\end{enumerate}
\end{lemma}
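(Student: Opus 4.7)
For part~(\ref{item:lemmecon:analytic}) my plan is to exploit the single-valuedness of~$\gvf$. I will parametrise $\gC_{\ugo}$ as the concatenation of an incoming ray at $\arg\gz=\ugo-2\pi$, a circle of radius~$\gve>0$ around~$0$ traversed counterclockwise, and an outgoing ray at $\arg\gz=\ugo$. The two rays carry the same integrand with opposite orientations and will cancel, with the integrability hypothesis ensuring absolute convergence. The remaining circle is a closed loop enclosing no singularity of~$\gvf(\gz)\rme^{-\gz/x}$, so Cauchy's theorem will make its integral vanish.

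For part~(\ref{item:lemmecon:nologs}) I plan to reduce to Hankel's integral for~$1/\gC$ via the substitution $\gz = x\rme^{-\pi\rmi}\,t$: then $\rmd\gz = x\rme^{-\pi\rmi}\rmd t$, $\rme^{-\gz/x}=\rme^{t}$, and $\gz^\gl = (x\rme^{-\pi\rmi})^\gl\, t^\gl$ with matching branches. The image of $\gC_{\ugo}$ is a counterclockwise Hankel contour around a ray close to~$(-\infty,0]$, which I can deform (using part~(\ref{item:lemmecon:analytic}) across sectors where $t^\gl$ is single-valued) to the standard Hankel contour~$H$ around the negative real axis. Hankel's classical formula $\int_H t^\gl\rme^t\,\rmd t = 2\pi\rmi/\gC(-\gl)$ then produces the stated closed form, and the vanishing at $\gl\in\N$ follows from the poles of~$\gC$.

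For part~(\ref{item:lemmecon:logs}), the first step is differentiation under the integral sign, justified by uniform convergence in~$\gl$ on compact subsets of~$\C$ (on the rays the integrand is controlled by a fixed power of~$|\gz|$ times the exponentially decaying factor $\rme^{-\gz/x}$), which will give $J_{\gl\pv p}=\partial^p_\gl I_\gl$. Factoring $I_\gl(x) = \frac{2\pi\rmi\,x\rme^{-\pi\rmi}}{\gC(-\gl)}\,\rme^{\gl(\ln x-\pi\rmi)}$, each $\gl$-derivative of the last exponential yields a factor of $\ln x-\pi\rmi$, while $1/\gC(-\gl)$ contributes only $\gl$-dependent coefficients. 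This shows $J_{\gl\pv p}(x)= x^{1+\gl}R(\ln x)$ for some $R\in\C[\ln x]$.

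The main technical obstacle will be the evaluation at $\gl=\nu\in\N$, where $I_\nu=0$ forces one to unwind an indeterminate form via Taylor expansion. Using $\operatorname{Res}_{s=-\nu}\gC(s)=(-1)^\nu/\nu!$ and the Laurent expansion $\psi(s)=-1/(s+\nu)+(-\gc+\sH_\nu)+O(s+\nu)$ of the digamma function at~$-\nu$ (equivalent to formula~\eqref{Gamma'/Gamma}), I will derive
\[
\frac{1}{\gC(-\gl)} = (-1)^{\nu+1}\,\nu!\,(\gl-\nu)\bigl[1+(\sH_\nu-\gc)(\gl-\nu)+O((\gl-\nu)^2)\bigr].
\]
Combining this with $(x\rme^{-\pi\rmi})^{1+\gl}=(-1)^{\nu+1}\,x^{1+\nu}\,\rme^{(\gl-\nu)(\ln x-\pi\rmi)}$ and cancelling the sign factors gives
\[
I_\gl(x) = 2\pi\rmi\,\nu!\,x^{1+\nu}\bigl[(\gl-\nu)+(\gl-\nu)^2(\sH_\nu-\gc-\pi\rmi+\ln x)+O((\gl-\nu)^3)\bigr].
\]
Reading off the coefficients of $(\gl-\nu)$ and $(\gl-\nu)^2$ (multiplied respectively by $1!$ and~$2!$ for $\partial_\gl$ and $\partial_\gl^2$) then yields the stated closed forms for $J_{\nu\pv 1}$ and~$J_{\nu\pv 2}$.
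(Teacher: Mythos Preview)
Your proposal is correct and follows essentially the same route as the paper: Cauchy's theorem for item~(\ref{item:lemmecon:analytic}), the substitution $\gz=x\rme^{-\pi\rmi}t$ reducing to Hankel's integral for~$1/\gC$ in item~(\ref{item:lemmecon:nologs}), and differentiation in~$\gl$ together with the Taylor expansion of $1/\gC(-\gl)$ near a nonnegative integer for item~(\ref{item:lemmecon:logs}). Your treatment of the special cases $J_{\nu;1}$ and $J_{\nu;2}$ via the digamma expansion at~$-\nu$ is exactly the computation the paper carries out (and your sign bookkeeping $(-1)^{\nu+1}\cdot(-1)^{\nu+1}=1$ is in fact cleaner than the paper's intermediate step).
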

\begin{rem}
 The integrands in the lemma are functions with no singular point other than~$0$ on a sector neighbouring~$d_\go$. Thus, the contour of integration may be deformed so as to look like the contours~$\gc_0$ in formula~\eqref{StokesLaplace}  or~$\gc_{s}$'s on figure~\ref{fig-splitpath}, p.~\pageref{fig-splitpath}.
\end{rem}

\begin{proof}
Item (\ref{item:lemmecon:analytic})   results from  the residue theorem of Cauchy. 

In the next two items the  integrals~$I$ and~$J$ converge  at infinity for any~$x$ such that~${-\pi/2<\ugo-\arg(x)<\pi/2}$.

To prove (\ref{item:lemmecon:nologs}), suppose first that~$\arg(x)=\ugo$ (the case we are interested~in). The  change of variable~$u=\gz \rme^{\rmi\pi}/x$ leads 
to
\[ {\int_{\gC_{\ugo}} \gz^\gl\, \rme^{-\gz/x} \, \rmd\gz =(x \, \rme^{-\rmi\pi})^{\gl+1} \int_{\gC_\pi} u^\gl\, \rme^u \rmd u} \]

The integral to the right is  the one appearing in the integral formula for~$1/\gC(-\gl)$ \cite[(4.8.1)~p.~296]{Dieu} valid for all~$\gl\in\C$.
The case $\arg(x)=\ugo+\gvf$ with~$\abs{\gvf}<\pi/2$ reduces to the previous one by rotation.

The general formula in point~(\ref{item:lemmecon:logs}) results from~(\ref{item:lemmecon:nologs}) and the theorem of derivation of Lebesgue.
The proof of the special cases is similar to that of Lemma~\ref{dico} (3), p.~\pageref{dico}, except that we now use the second-order expansion
\begin{equation*}
\frac{1}{\gC(-\gl - z)} = (-1)^\gl \gC(1+\gl) \, z\, \big(1+(\sH_\gl - \gc)z +O(z^2)\big)
\end{equation*}
(which can be deduced from the series expansion~$1/\gC(z)= z+\gc z^2+O(z^3)$ \cite[5.7.i]{NIST} by applying the relation $\Gamma(z+1) = z \Gamma(z)$ repeatedly).
Substituting this expansion into the expression of $I_{\lambda}$, we get
\begin{align*}
  I_{\lambda + z}(x)
  &= 2 \pi \rmi
     (-1)^\gl \gC(1+\gl) \, z\, \big(1+(\sH_\gl - \gc)z +O(z^2)\big) \\
  & \phantom{=} \cdot (x e^{-\pi\rmi})^{1+\gl}
     \big( 1 + (\ln x - \pi\rmi) z + O(z^2) \big) \\
  &= 2 \pi \rmi \, x^{1+\lambda} \, \Gamma(1+\lambda) \,
     \bigl( z + (\sH_\gl - \gc + \ln x - \pi\rmi) z^2 + O(z^3) \bigr)
\end{align*}
and the result follows.
\end{proof}
\begin{rem} \label{remprepared}
In Sec.~\ref{preparedsol}, p.~\pageref{preparedsol}, while defining the prepared fundamental solution, we multiplied the solutions of the original equation by a power of~$x$ to reduce to solutions of positive valuation.
Doing so ensures that the Laplace transforms of the Borel transforms of the solutions are well defined.
However, it also increases the order of the transformed operator:
suppose that we had to make the change of variable $z = x^{-u} y$ in the original equation $D_1\,z=0$ to get the prepared equation $D \, y = 0$.
Then, the Borel transform~$\gD$ of~$D$ has order $u$~more than the Borel transform~$\gD_1$ of~$D_1$.
Thus, from a practical perspective, it would be more economical to work with~$\gD_1$, and as we show below this is indeed possible.

The Stokes multipliers are provided by formula~\eqref{identite2}, which is relative to the prepared operator~$D$.
Denote by $\widetilde k_j$~and~$\widehat k_\ell^{\hspace{1pt}-}$ the quantities analogous to $\widetilde h_j$~and~$\widehat h_\ell^{\hspace{1pt}-}$ but defined relative to the equation~$D_1 \, z = 0$.
Let us first rewrite the right-hand side of~\eqref{identite2} in terms of~$\widehat k_\ell^{\hspace{1pt}-}$.
For this, split~$\widetilde h_j(x)$ as $\widetilde P_j(x)+ \tH_j(x)$ with a polynomial part~$\widetilde P_j(x)$ of degree~$u$.
Accordingly, write $\widetilde k_j(x) =\widetilde R_j(1/x) + \tK_j(x)$ where $\tK_j(x) =({1}/{x^u} ) \tH_j(x)$ and $\widetilde R_j(1/x)$ is the polar part.
After a Borel transformation, we get the decompositions (compatible with analytic continuation)
\begin{equation*}
 \widehat h_\ell^{\hspace{1pt}-}(\gz)= \widehat P_\ell(\gz)+ \widehat H_\ell^{\hspace{1pt}-}(\gz), \qquad
 \widehat k_\ell^{\hspace{1pt}-}(\gz) =\widehat R_\ell(\gd) + \hK_\ell^{\hspace{1pt}-}(\gz)
\end{equation*}
where $\widehat P_\ell(\gz)$ is a polynomial of degree at most $u-1$ and
$\widehat R_\ell(\gd)$ is a linear combination of derivatives of Dirac masses at  $\gz\equiv \xi-\ga_1=0$.
The relation~$\tK_\ell(x) =({1}/{x^u}) \tH_\ell(x)$ implies~${\hK_\ell(\gz) =\frac{\rmd^u}{\rmd \gz^u} \hH_\ell(\gz)}$. 
Moreover, $\hH_\ell(\gz)$ and its derivatives up to the order~$u$ have at most exponential growth at infinity, hence one has
\begin{align*}
 \int_{\gc_0} \widehat K_\ell^{\hspace{1pt}-}(\gb_s+\gz) \rme^{-\gz/x} \rmd\gz
 &=\int_{\gc_0} \frac{\rmd^u}{\rmd \gz^u}\widehat H_\ell^{\hspace{1pt}-}(\gb_s+\gz) \rme^{-\gz/x} \rmd\gz
\end{align*}
and after $u$ integrations by parts
\begin{align*}
 \int_{\gc_0} \widehat K_\ell^{\hspace{1pt}-}(\gb_s+\gz) \rme^{-\gz/x} \rmd\gz
 &=\frac{1}{x^u} \int_{\gc_0} \widehat H_\ell^{\hspace{1pt}-}(\gb_s+\gz) \rme^{-\gz/x} \rmd\gz\\
 &=\frac{1}{x^u} \int_{\gc_0} \widehat h_\ell^{\hspace{1pt}-}(\gb_s+\gz) \rme^{-\gz/x} \rmd\gz.
\end{align*}
Turning to the left-hand side of~\eqref{identite2},
we have $\widetilde k_j(x) =(1/x^u) \tth_j(x)$, hence
\[\sum_{j\in\sI_{i_s}} \widetilde k_j(x) \, C_\ugo^{[j\pv \ell]}  =\frac{1}{x^u}  \sum_{j\in\sI_{i_s}} \widetilde h_j(x) \, C_\ugo^{[j\pv \ell]} \]
and equation~\eqref{identite2} still holds with the~$h$'s replaced by~$k$'s.
\end{rem}

\label{par:sketch}
In summary, given~$\ell \in \ensemble{1,\dots,m_1}$ (in the notation introduced p.~\pageref{blocQ}) and~$s$, the~$k_{i_s}$ Stokes multipliers~$C_\ugo^{[j\pv\ell]}$ appearing in identity~(\ref{identite}) are located in rows~$j\in\sI_{i_s}$ and column~$\ell$ of the Stokes matrix.
Replacing all functions by their asymptotic expansions and identifying the coefficients of the expansions on both sides of the equality, we obtain an infinite inhomogeneous linear system whose solutions are the Stokes multipliers~$(C_\ugo^{[j\pv\ell]})_{j\in \mathcal{I}_{i_s}}$. 

To be able to compute the asymptotic expansion of the integral on the right-hand side, we need to know the asymptotic expansion  of~$\hy_{\ell}^{\hspace{1pt}-}(\gz)$   at each of its singular points~$\gb_s$.
Computing these expansions is the main task of the effective analytic continuation procedure discussed in the next section.

By repeating these computations for all $\ell \in \ensemble{1,\dots,m_1}$ and $s=1,\dots,r$, we can determine all Stokes multipliers in the first~$m_1$ columns that are associated with~$\sL_1$ and~$\rme^{-\ga_1/x}$. 
Next, the same procedure must be applied to the other sub-blocks~$\sL_i$  associated with~$\rme^{-\ga_1/x}$, and finally to all exponential parts~${\rme^{-\ga_j/x}, j=1, \dots,N}$.

\section{Analytic continuation} \label{anacont}

To conclude this first part, we now recall how the analytic continuation of~$\hh_\ell$ along the path~$\gc =\gc_{0\rightarrow \gb_{r}}$ depicted on Figure~\ref{fig-gamma-minus} can be carried out in practice using the Cauchy-Weierstrass method of series expansion on overlapping discs.

As previously, we consider a solution~${\ty_\ell(x)=\tth_\ell (x)\rme^{-\ga_1/x}, \,1\leq \ell\leq m_1}$.
The Borel transform~$\hh_\ell(\gz)$ of~$\ty_\ell(x)$, with~${\gz=\xi-\ga_1}$, is a sum of products of power series, complex powers of~$\gz$, and  logarithms. The power series are convergent and we interpret the complex powers and logarithms as analytic functions by choosing the principal determination of~$\arg\gz$.
Also recall that $\hh_\ell(\gz)$ is a solution of $\Delta_{[\alpha_1]}$ and that
$${\gb_1=\ga_{i_1} \hspace{-2pt}-\ga_{1}, \ \gb_2=\ga_{i_2} \hspace{-2pt} -\ga_{1},\  \dots,\ \gb_r=\ga_{i_r} \hspace{-2pt} -\ga_{1}}$$
are the singular points of~$\Delta_{[\alpha_1]}$ on the ray~$d_\go$, ordered from from~$\gz=0$ to infinity.

Since we need to know the behaviour of~$\hh(\gz)$ at all~${\gb_{s}, \,s\hspace{-1pt}=\hspace{-1pt}1,2,\dots,r}$, we choose a covering of~$\gc$ by a chain of discs $\sD_0, \sD_1, \dots, \sD_t$ whose centres include all singular points~$\gb_s$.
We denote  by~${\chi_0, \chi_1,\dots,  \chi_t}$ their centres (in the~$\gz$-plane) and by  $1/\rho_0, \dots,  1/\rho_t$ their radii.
The first disc~$\sD_0$ is centred at~$\chi_0=0$ ; the last one~$\sD_t$ is centred at~${\chi_t=\gb_{r}}$; and the discs contain no singular point of~$\gD_{[\ga_1]}$ except possibly at their centres.
\begin{figure}[ht]
\begin{center}
\includegraphics[width=0.8\textwidth]{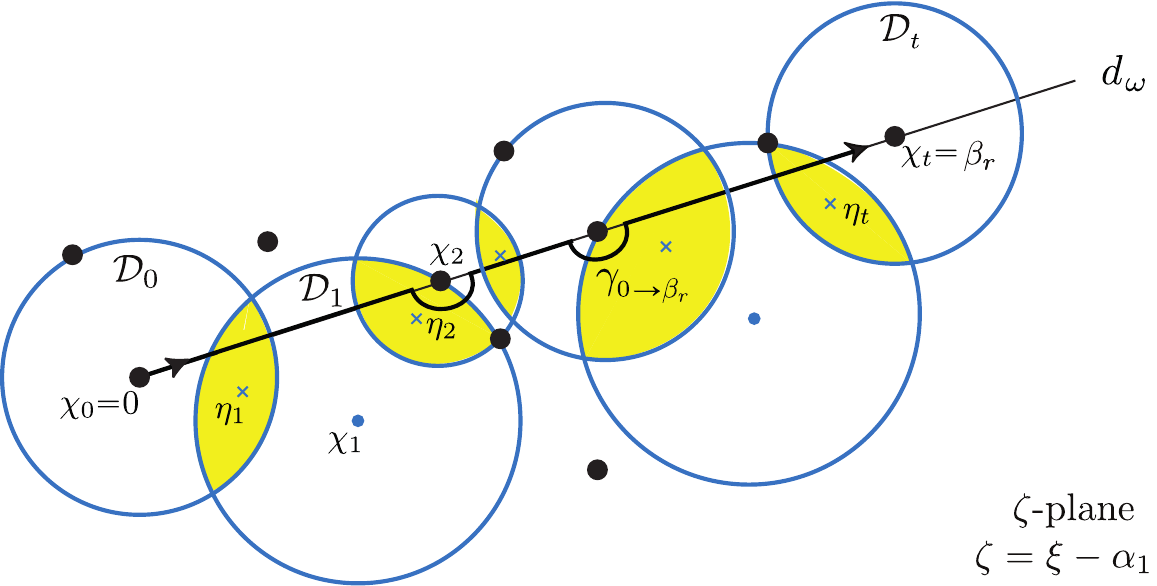}
\caption{Covering $\operatorname{Cov}(\gc_{0\rightarrow\gb_r})$ of the path $\gc_{0\rightarrow\gb_r}$; {\it the big black dots represent the singular points of $\gD_{[\ga_1]}$ limiting the size of the discs, the blue dots indicate the remaining centres, the points $\eta_1,\dots,\eta_t$ are chosen in the successive intersections}}
\label{fig-dessin9}
\end{center}
\end{figure}
We number the discs so that any two consecutive discs overlap and we fix points 
\begin{equation*}
 \eta_1\in \sD_0\cap \sD_1,\ \eta_2\in \sD_1\cap \sD_2,\ \dots,\  \eta_t\in \sD_{t-1}\cap \sD_t.
\end{equation*}
We denote by $\zeta_0,\zeta_1,\dots,
\zeta_t$  the local coordinates at~${\chi_0,\chi_1,\dots,\chi_t}$, obtained by setting 
\begin{equation*}
 \zeta_0=\xi - \ga_1,\  \zeta_1= \zeta_0-\chi_1,\ \dots,\ \zeta_{t-1}=\gz_0-\chi_{t-1}, \ \zeta_t=\gz_0-\chi_t
\end{equation*}
and choosing the principal determinations $-\pi<\arg\gz_p \leq +\pi$ for all $p$. 

For each disc~$\sD_p$, we choose a local fundamental solution~\hspace{-1pt}${ \hY_{[\chi_{p}]}(\zeta_{p})}$ of the equation~${\gD_{[\chi_{p}]} \hy=0}$ at~${\gz_p=0}$ in the classical form
\begin{equation} 
\begin{array}{rcl}
  \hY_{[\chi_{p}]}(\zeta_{p}) &=& \begin{bmatrix}
 \hy_{[\chi_p],1}(\gz_p) & \hy_{[\chi_p],2}(\gz_p)& \cdots&  \hy_{[\chi_p],\nu}(\gz_p)
\end{bmatrix}\\
\noalign{\mmedskip}
&=&  \begin{bmatrix}
 \hh_{[\chi_p],1}(\gz_p) & \hh_{[\chi_p],2}(\gz_p)& \cdots&  \hh_{[\chi_p],\nu}(\gz_p)
\end{bmatrix} \gz_p^{\gL_p}\\
\end{array}
\end{equation}
where the~$\hh$'s are power series. The matrix of exponents~${\gL_p}$ is zero when $\chi_{p}$~is an ordinary point; otherwise, it is in Jordan form with diagonal terms given by proposition~\ref{SingDelta}, p.~\pageref{SingDelta}.
 
Denote by~${\phi_0(\gz_0)}$ the sum of the series~${\hy_\ell(\gz_0)}$ on its disc of convergence~$\sD_0$.
By evaluating this sum and its first~$\nu-1$ derivatives at~$\gz_0=\eta_1$, we express it in the basis~$ \hY_{[\chi_1]}(\zeta_1)$
 at the point~$\gz_1=\eta_1-\chi_1$ (choosing the principal determination of the argument~${-\pi<\arg(\eta_1-\chi_1)\leq +\pi}$).
This yields a constant vector~$T_1$ such that 
\begin{equation}\label{sigma1}
 \phi_0(\eta_1)\,=\, \hY_{[\chi_1]}(\eta_1-\chi_1)\, T_1 
\end{equation}
and a  function
\begin{equation*}
\phi_1(\chi_1+\gz_1)=\hY_{[\chi_1]}(\gz_1)\,T_1
\end{equation*}
which is the analytic continuation of $\phi_0(\gz_0)$ to  $\sD_1$.
Iterating the process, we obtain a function
\begin{equation}\label{phit}
\phi_p(\chi_p+\gz_p)=\hY_{[\chi_p]}(\gz_p)\,T_p
\end{equation}
defined in the neighbourhood of each $\chi_p$
which is the analytic continuation along~$\gamma$ of $\phi_0(\gz_0)$.
Due to the uniqueness of analytic continuation, the result is independent of the choice of the discs~${\sD_0,\dots, \sD_t}$.
In practice, for better performance, it is desirable to have large intersections so that we can choose the points~$\eta$ not too close to the boundary.

The local expansions of $\hh_{\ell}^{\hspace{1pt}-}$ at the singular points~$\beta_p$, $p=1,2, \dots, r$ that enter into the asymptotic expansion on the right-hand side of~\eqref{identite2} can then be read off equation~\eqref{phit}
for the disc~$\sD_p$ with centre~$\chi_p=\gb_s$,
since the basis functions in~$\hY_{[\chi_p]}$ were defined from their local expansions in the first place.

\part{Algorithm and implementation}
\label{part:algorithm}

We now aim to turn the procedure for computing Stokes matrices described in the previous two sections into a detailed algorithm that can be implemented using existing features of computer algebra systems. The detailed algorithm eliminates numerous redundancies in the method, making it much faster in practice. However, we leave for future work a full complexity analysis and confine ourselves to observing that the algorithm can be implemented so as to compute the Stokes multipliers of any \emph{fixed} operator~$D$ within an error of $2^{- p}$ in $\mathrm{O} (p \log (p)^3)$ operations (see Remark~\ref{rk:complexity}, p.~\pageref{rk:complexity}).

Compared to the algorithm sketched at the end of Sec.~\ref{sec:compare}, p.~\pageref{par:sketch}, the complete version works as much as possible with fundamental matrices, or at least with blocks of solutions attached to a given Stokes values and exponent in the decomposition of Sec.~\ref{preparedsol}, p.~\pageref{preparedsol},
instead of considering each solution~$\ty_j(x)$ individually.
In the same spirit, it computes the Stokes matrices in all directions simultaneously. The main benefit of this modified structure is that the number of analytic continuation steps needed is minimized.

We have implemented this algorithm---with some minor changes with respect to the present description---using the SageMath computer algebra system. Our implementation is freely available under the GNU General Public Licence (version~2 or later) as part of the \codestar{ore\_algebra} package, which can be downloaded from
\codestar{\href{https://www.github.com/mkauers/ore\_algebra/}{https://www.github.com/mkauers/ore\_algebra/}}.
The algorithm is implemented in the module
\codestar{ore\_algebra.analytic.stokes}.
Below we give some examples of its use and occasionally comment on implementation details. For more information and additional examples, we refer the reader to the documentation shipped with the package. After installing \codestar{ore\_algebra} and starting SageMath, one can access the documentation of the main function of the \codestar{stokes} module with the commands
\begin{alltt}
sage: from ore_algebra.analytic.stokes import stokes_dict
sage: ?stokes_dict
\end{alltt}
In the following sections, we begin by illustrating the usage of the implementation on an example (Sec.~\ref{sec:running-example}) that will also serve as a running example when detailing the algorithm. Then we discuss the representation of complex numbers (Sec.~\ref{sec:constants}) and generalised power series (Sec.~\ref{sec:local-bases}) on which the algorithm operates. We next explain how, in each Stokes matrix, the block associated with a given pair of Stokes values decomposes as a product of three auxiliary matrices that can be computed independently and reused for several blocks (Sec.~\ref{sec:matrices}). This allows us to state the main procedure (Sec.~\ref{sec:algo-main}), initially in terms of subroutines for computing the auxiliary matrices. In the next two sections, we detail these subroutines, the most important one being that corresponding to the analytic continuation step~(Sec.~\ref{sec:connection}). We conclude with some additional examples~(Sec.~\ref{sec:more-examples}).

\section{An example}\label{sec:running-example}

A.~Duval and C.~Mitschi~\cite{DM89} compute the Stokes matrices of the confluent generalised hypergeometric operator
\begin{equation}
  D_{q, p} = (- 1)^{q - p} z \prod_{j = 1}^p \left( z \ddx{z} + \mu_j \right) - \prod_{j = 1}^q \left( z \ddx{z} + \nu_j - 1 \right) \label{eq:Dqp}
\end{equation}
in terms of values of the gamma function. As a running example we use the special case
\begin{equation}
  p = 1, \quad q = 7, \quad \mu = \left( \frac{1}{2} \right), \quad \nu = \left( 1, 0, 0, 0, 0, \frac{2}{3}, \frac{1}{3} \right) \label{eq:running-params}
\end{equation}
considered in Example~2.45 of Mitschi and Sauzin's lecture notes~\cite{MSau16}. In Example~\ref{ex:hgeom} (p.~\pageref{ex:hgeom}), we discuss several instances of the case $q = 2$, $p = 1$ to illustrate various degenerate situations.

\begin{figure}
  \includegraphics{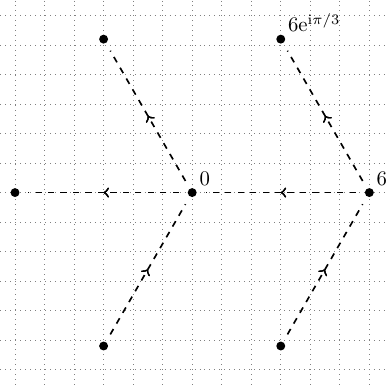}
  \caption{\label{fig:stokes-values}The Stokes values of the operator~$D$ in our running example, with the spanning tree considered in Sec.~\ref{sec:connection}.}
\end{figure}

\begin{example}
  \label{ex:running-intro}
  After the change of variables $\, x = z^{- 1 / 6}\,$ and  $\,{\partial = x^2 \mathrm{d} / \mathrm{d} x}$ as in Part~\ref{part:theory}, the operator~\eqref{eq:Dqp} specialised to the parameters~\eqref{eq:running-params} becomes
  \[ \textstyle D = \frac{1}{x^6} \partial^7 + \frac{9}{x^5} \partial^6 + \frac{58}{x^4} \partial^5 + \frac{272}{x^3} \partial^4 + \frac{897}{x^2} \partial^3 + \frac{1875}{x} \partial^2 + \left( - \frac{46656}{x^6} + 1875 \right) \partial + \frac{139968}{x^5} . \]
  This is an operator of pure level~$1$, with Stokes values $0$ and $6 \mathrm{e}^{j \mathrm{i} \pi / 3}$,\hspace{-2pt} ${- 2 \leqslant j \hspace{-1pt}< \hspace{-1pt}2}$ (Figure~\ref{fig:stokes-values}).
  
  In order to reproduce Mitschi's example numerically, we load the relevant part of \codestar{ore\_algebra} and set up an algebra of differential operators with the commands
\begin{alltt}
sage: from ore_algebra.analytic.stokes import stokes_dict
sage: R.<x> = PolynomialRing(QQ)
sage: A.<Dx> = OreAlgebra(R)
\end{alltt}
  \noindent These last two lines define \codestar{x} as the indeterminate of the ring $R = \mathbb{Q} [x]$ and \codestar{Dx} as the indeterminate of a skew polynomial ring $A = \mathbb{Q} [x] \langle D_x \rangle$ where $D_x$ acts on expressions involving~$x$ as the standard derivation $\mathrm{d} / \mathrm{d} x$. We can thus define the operator~$D$ with
\begin{alltt}
sage: d = x^2*Dx
sage: diffop = (1/x^6*d^7 + 9*1/x^5*d^6 + 58*1/x^4*d^5
....:   + 272*1/x^3*d^4 + 897*1/x^2*d^3 + 1875*1/x*d^2
....:   + (-46656*1/x^6 + 1875)*d + 139968*1/x^5)
\end{alltt}
  Then, to compute the Stokes matrices of~$D$, we run the implementation of the algorithm described in this section with the command
\begin{alltt}
sage: stokes = stokes_dict(diffop, 10^-50)
\end{alltt}
  \noindent The second argument, \codestar{10\^{}-50}, is a rough indication of the accuracy we would like to obtain and is used internally to select the working precision for intermediate computations. We will discuss its role in more detail in what follows. The computation on this example takes around~1\,s on a mid-range laptop purchased in 2020. The \codestar{stokes\_dict} function returns a dictionary \codestar{stokes} mapping~$\mathrm{e}^{\mathrm{i} \dir}$ where $\dir$~is an anti-Stokes direction to the corresponding Stokes matrices. The choice of~$\mathrm{e}^{\mathrm{i} \dir}$ rather than $\dir$~itself as an indexing key is motivated by technical details of the infrastructure provided by SageMath, which make it more convenient to use algebraic numbers as keys.
  
  In particular, for the direction $\dir = 0$, we obtain a matrix of the form (the output has be reformatted and intermediate digits omitted for readability)
\begin{alltt}
sage: stokes[1]
\end{alltt}
\begin{multline*}
  \tiny
  \setlength\arraycolsep{1ex}
  \left[
  \begin{matrix}
      1.0                                                                                 & 0                                                                                        & \cdots \\
      0                                                                                   & 1.0                                                                                      & \cdots \\
      0                                                                                   & 0                                                                                        & \cdots \\
      {}[\pm 2.11 \cdot 10^{- 47}] + [0.14 \ldots 78 \pm 3.86 \cdot 10^{- 47}] \mathrm{i} & 0                                                                                        & \cdots \\
      0                                                                                   & \hspace*{-7em}[- 2.5 \pm 5.88 \cdot 10^{- 35}] + [4.33 \ldots 09 \pm 7.62 \cdot 10^{- 35}] \mathrm{i}  & \cdots \\
      0                                                                                   & 0                                                                                        & \cdots \\
      {}[16.0 \pm 6.61 \cdot 10^{- 42}] +  [\pm 6.61 \cdot 10^{- 42}] \mathrm{i}          & 0                                                                                        & \cdots \\
  \end{matrix}
  \right. \\
  \tiny
  \setlength\arraycolsep{1ex}
  \left.
  \begin{matrix}
      0                                                                                     & 0                                                                                    & 0   & 0   & 0   \\
      0                                                                                     & 0                                                                                    & 0   & 0   & 0   \\
      1.0                                                                                   & 0                                                                                    & 0   & 0   & 0   \\
      0                                                                                     & 1.0                                                                                  & 0   & 0   & 0   \\
      0                                                                                     & 0                                                                                    & 1.0 & 0   & 0   \\
      [2.5 \pm 2.36 \cdot 10^{- 23}] + [4.33 \ldots 86 \pm 3.95 \cdot 10^{- 23}] \mathrm{i} \hspace*{-7em} & 0                                                                                    & 0   & 1.0 & 0   \\
      0                                                                                     & [\pm 4.36 \cdot 10^{- 44}] + [- 221.7 \ldots 25 \pm 5.58 \cdot 10^{- 44}] \mathrm{i} & 0   & 0   & 1.0 \\
    \end{matrix} \right]
\end{multline*}
  An entry $[c \pm r]$ in this output indicates that the corresponding exact value is contained in the interval $[c - r, c + r]$. Observe that some of the radii~$r$ are significantly larger than the tolerance~$10^{- 50}$ specified on input, which is treated as indicative only. However, decreasing the tolerance results in tighter enclosures, as illustrated here on the nontrivial entry of the third column (note that row and column indices in Sage start from zero):
\begin{alltt}
sage: stokes_dict(diffop, 10^-100)[1][5,2]
\end{alltt}
  \[ \tiny{[2.5 \pm 4.41 \cdot 10^{- 69}] + [4.3301270189\dots1744862983 \pm 6.95 \cdot 10^{- 69}] \mathrm{i}} \]
  By further decreasing it one can compute, in principle, arbitrarily tight enclosures of the exact Stokes multipliers.
  
  We note in passing that, exceptionally, this Stokes multiplier is an algebraic number, so that it is easy to guess its minimal polynomial from the approximation, here using the \codestar{algdep} function from \textsc{Pari}~\cite{Pari}, and from there the \ exact value $5\, \mathrm{e}^{\mathrm{i} \pi / 3}$:
\begin{alltt}
sage: algdep(stokes_dict(diffop, 10^-100)[1][5,2].mid(), 2)
x^2 - 5*x + 25
\end{alltt}

  Similarly, for $\dir = \pi / 6$, we have
\begin{alltt}
sage: stokes[exp(I*pi/6)]
\end{alltt}
\begin{multline*}
     \tiny
     \setlength\arraycolsep{1ex}
     \begin{bmatrix}
       1.0                                                                                       & 0                                                                                                    & 0   &         \\
       0                                                                                         & 1.0                                                                                                  & 0   &         \\
       0                                                                                         & 0                                                                                                    & 1.0 &         \\
       0                                                                                         & 0                                                                                                    & 0   & \ddots  \\
       0                                                                                         & 0                                                                                                    & 0   &         \\
       {}[- 5.5 \pm 5.33 \cdot 10^{- 31}] + [9.52 \ldots 82 \pm 8.31 \cdot 10^{- 31}] \mathrm{i} & 0                                                                                                    & 0   &         \\
       0                                                                                         & \hspace*{-9em} [5.5 \pm 1.06 \cdot 10^{- 32}] + [9.52 \ldots 23 \pm 1.26 \cdot 10^{- 32}] \mathrm{i} & 0   &
     \end{bmatrix}
\end{multline*}
  We omit the remaining ten Stokes matrices with $- \pi < \dir \leqslant \pi$, which, due to the symmetry resulting from the deramification step, are easily expressed in terms of the previous two.
  
  Mitschi's results are relative to a formal fundamental solution~$\mathcal{Z} (x)$ specified in Propositions 1.3~and~2.1 of~\cite{DM89} and related to our formal fundamental solution~$\mathcal{Y} (x)$ (detailed in Example~\ref{ex:running-bases} below) by
  \[ \mathcal{Z} (x) =\mathcal{Y} (x)  \left[ \begin{smallmatrix}
       0 & 0 & 0 & 0 & 0 & 0 & c\\
       0 & c & 0 & 0 & 0 & 0 & 0\\
       0 & 0 & 0 & 0 & 0 & c & 0\\
       1 & 0 & 0 & 0 & 0 & 0 & 0\\
       0 & 0 & c & 0 & 0 & 0 & 0\\
       0 & 0 & 0 & 0 & c & 0 & 0\\
       0 & 0 & 0 & c & 0 & 0 & 0
     \end{smallmatrix} \right] \qquad \text{where $c = \frac{4 \pi^{5 / 2}}{\sqrt{3}}$.} \]
  In addition, her Stokes matrix corresponds to the inverse of ours. Taking this into account, our evaluation of the Stokes multipliers agrees with the formulae listed in~\cite[Example~2.43]{MSau16}\footnote{An error appears to have slipped in the specialisation of these formulae for the parameters~\eqref{eq:running-params} in \cite[Example~2.45]{MSau16}.}, except that we find $\eta = 16$ instead of $\eta = - 16$.
\end{example}

\section{Representation of complex numbers}\label{sec:constants}

\paragraph{Computable complex numbers and intervals.}The Stokes multipliers are in general transcendental numbers with no known `exact' representation simpler than their definition. Our algorithm must be able to compute these numbers within an error of no more than some given~$\varepsilon$, and, by necessity, does so by operating on complex numbers that are known approximately only. This means that some mechanism is needed for keeping track of the accumulation of approximation errors along the course of the computation.

Perhaps the first such mechanism that comes to mind is to determine ahead of time how accurately each intermediate operation needs to be carried out in order for the final result to satisfy the error tolerance. This strategy is extremely tedious and error-prone, and we do not attempt to give any bounds of this type.

For theoretical purposes, however, we can mostly ignore the issue if we work with computable complex numbers, in the sense of the following classical definition.

\begin{definition}
  \label{def:computable}A complex number~$x$ is \emph{computable} if there exists an algorithm that takes as input a tolerance~$\varepsilon > 0$ and returns an approximation $\tilde{x} \in \mathbb{Q} [i]$ of~$x$ satisfying $| \tilde{x} - x | \leqslant \varepsilon$. We denote by~$\CComp$ the set of computable complex numbers.
\end{definition}

If $x$ and $y$ are computable complex numbers, then $x + y$ and $xy$ are clearly computable; and, if $x$ is a nonzero computable number, its inverse $1 / x$ is computable. Similarly, the standard functions such as $\log x$ and $1 / \Gamma (x)$ that we use in what follows all map computable numbers from some open domain to computable numbers. However, there is no algorithm to test if two computable numbers are equal.

Formally, the algorithms presented in this article output matrices of computable complex numbers. The fact that the entries of the matrices are computable means that it is possible, given any~$\varepsilon > 0$, to obtain approximations with an error guaranteed not to exceed~$\varepsilon$.

In principle, a computable complex number can be implemented as an object that contains an approximation of a number~$x \in \mathbb{C}$ but also keeps track of the definition of~$x$ in terms of other computable numbers~(\emph{e.g.}, \cite{vanderHoeven2006,Muller2001}). When asked for an $\varepsilon$-approximation of~$x$, the object refines the current approximation by recursively recomputing the quantities on which~$x$ depends to higher and higher precision until the error tolerance is met. In this model, once the basic functions on which our algorithm relies are available as functions on~$\CComp$, all error propagation required for the whole algorithm to produce rigorous arbitrary-precision approximations is automatic. No explicit error tolerances appear as input of the algorithms below since the \emph{output} is, effectively, a program taking as input a tolerance and returning an approximation of the mathematical result.

This model is occasionally used in practice but comes with substantial computational overhead. As already illustrated in Example~\ref{ex:running-intro}, in our implementation, we replace the use of computable complex numbers by interval arithmetic.

\begin{definition}
  We denote by~$\Balls$ the set of complex intervals $[a, b] + [c, d] i$ where the endpoints $a, b, c, d$ are in a fixed dense set of rational numbers. Arithmetic operations and standard elementary and special functions $f : \mathbb{C}^s \rightarrow \mathbb{C}$ are extended to partial functions~$\Balls^s \rightarrow \Balls$ in such a way that, for $\mathbf{x}_1, \ldots, \mathbf{x}_s \in \Balls$, the value $f (\mathbf{x}_1, \ldots, \mathbf{x}_s)$, if defined, satisfies
  \[ f (\mathbf{x}_1, \ldots, \mathbf{x}_s) \supseteq \{ f (x_1, \ldots, x_s) : x_1 \in \mathbf{x}_1, \ldots, x_s \in \mathbf{x}_s \} \]
  and $f (\mathbf{x}_1, \ldots, \mathbf{x}_s)$ is undefined if there is any $(x_1, \ldots, x_s) \in \mathbf{x}_1 \times \cdots \times \mathbf{x}_s$ where $f (x_1, \ldots, x_s)$ is undefined.
\end{definition}

For instance, $\pi$ might be represented by $[3.14, 3.15] + [- 0.1, 0.1] i$ and $[3.14, 3.15] + [0.1, 0.2]$ might be defined to be $[3.2, 3.4]$. In addition to the parameters of the version working over~$\CComp$, the interval versions take as input a \emph{working precision} $p \in \mathbb{N}$ specifying the number of significant digits to be used during the computation\footnote{In Example \ref{ex:running-intro}, the working precision is chosen automatically based on the value of the error tolerance parameter.}. All basic operations on computable complex numbers are replaced by their interval counterpart. The working precision determines the accuracy of the output of basic operations when it is not already limited by that of the input: for example, while $1 / [0.9, 1.0]$ must be an interval of width at least $0.1$ regardless of the working precision, $1 / [3, 3]$ might be defined to be $[0.3, 0.4]$ at precision~$1$ and $[0.33, 0.34]$ at precision~$2$. Note that interval operations may be undefined if their input contains a point where the function is not defined, or simply if the input interval is too wide. In this case, the whole algorithm raises an error.

In general, the output of an interval operation is only guaranteed to contain the corresponding exact result, not to satisfy any preset error tolerance. However, we are only using intervals as approximations of computable complex numbers. This means that we can assume the basic interval operations to be implemented in such a way that none of the interval operations fails when the working precision is large enough, and that each output tends to a point as the working precision tends to infinity. As illustrated in Example~\ref{ex:running-intro}, to compute the Stokes constants with an error bounded by $\varepsilon > 0$, the user\footnote{It tends to be useful to leave it to the user to repeat the computation until the tolerance is met rather than doing it automatically, since, in many applications, one is actually interested in an output that is accurate enough for checking some property or performing some subsequent computation, with no easy way to tell in advance exactly how accurate it needs to be. See also the Arb documentation~\cite{UsingBallArithmetic} for more on this point.} of our implementation can first run the whole algorithm with a working precision~$p$ such that $2^{- p} \approx \varepsilon$, and repeat while the algorithm fails or the accuracy of the result exceeds the tolerance.

\paragraph{Exact complex numbers.}Not all operations we need can be performed using computable complex numbers. Indeed, finding the Stokes values lying on a given ray or deciding if two local exponents differ by an integer requires \emph{exact} equality tests. These tests operate on quantities derived from the coefficients of the operator~$D$ by algebraic operations.

For this reason, we assume from now on that $D$~has coefficients in~$\mathbb{K} (x)$ for some subfield~$\mathbb{K} \subset \mathbb{C}$ where the required operations can be carried out. Specifically, we assume that $\mathbb{K}$ is algebraically closed and closed under complex conjugation, and that we have a way of representing its elements with a finite amount of data. We also need algorithms operating on this representation for
\begin{itemize}
  \item testing if two given elements of~$\mathbb{K}$ are equal,
  
  \item performing the field operations $+, -, \times, /$ in~$\mathbb{K}$,
  
  \item computing the complex conjugate of an element of~$\mathbb{K}$,
  
  \item computing the roots of univariate polynomials with coefficients in~$\mathbb{K}$,
  
  \item given $a \in \mathbb{K}$ and $\varepsilon > 0$, computing an approximation $\tilde{a} \in \mathbb{Q} [i]$ of~$a$ such that $| \tilde{a} - a | \leqslant \varepsilon$.
\end{itemize}
These assumptions imply that we can compute $\ensuremath{\operatorname{Re}} (a)$, $\ensuremath{\operatorname{Im}} (a)$, $| a |$ given $a \in \mathbb{K}$, and that we can decide whether $a < b$ for given $a, b \in \mathbb{K} \cap \mathbb{R}$. The last one means that $\mathbb{K} \subseteq \CComp$.

The prototypal example of a field with these properties is $\mathbb{K}= \QQbar$ with elements represented as roots of univariate polynomials with rational coefficients together with isolating intervals. Johansson~\cite{Johansson2021} describes a practical system for working with such `exact' subfields of~$\mathbb{C}$ covering both~$\bar{\mathbb{Q}}$ and, under suitable number-theoretic assumptions, extensions by transcendental constants. Our implementation is currently limited to the algebraic case.

\section{Local bases of solutions}\label{sec:local-bases}

The symbolic part of our algorithm operates on solutions of~$D$ and their Borel transforms.
To represent these objects using a finite amount of data, we encode them by coordinates in fixed `local' bases of (formal or analytic) solutions of the operators $D_{[\alpha]}$~and~$\Delta_{[\alpha]}$.

\paragraph{Formal solutions in the Laplace plane.}\label{sec:fsol}

For each Stokes value~$\alpha$ of~$D$, we denote by $\ensuremath{\operatorname{FSol}}_0 (D_{[\alpha]})$ the space of formal solutions free of exponentials of~$D_{[\alpha]}$.
In other words, $\ensuremath{\operatorname{FSol}}_0 (D_{[\alpha_i]})$ is the vector space generated by the block~$(\tilde h_j)_{j \in \mathcal I_i}$ of the regular part~$\tilde h$ of the prepared formal fundamental solution introduced in Sec.~\ref{preparedsol}, with $\mathcal I_i$ as in~\eqref{blocQ}.

We denote by~$\mathcal{H}_{[\alpha]}$ a fixed basis of
$\ensuremath{\operatorname{FSol}}_0 (D_{[\alpha]})$.
For most of the discussion, any basis arranged by exponent modulo~$\mathbb Z$ will do, including $(\tilde h_j)_{j \in \mathcal I_i}$ once normalized as explained in Sec.~\ref{preparedsol}.
However, for compatibility with preexisting code, our implementation uses a specific choice of~$\mathcal{H}_{[\alpha]}$ that deviates slightly from the form considered in that section%
\footnote{Specifically, the basis it works with is compatible with the block and sub-block structure of Sec.~\ref{preparedsol}, but it cannot necessarily be factored as the product of a matrix of formal power series by $x^\pL$ with~$\pL$ in Jordan form. The echelon form property that we require simplifies some aspects of the algorithms but the loss of the factorization complicates others, and it is not clear to us at the moment what choice would be best with no consideration for backward compatibility.}.
Some subroutines below are stated to be applicable to this basis, or, more generally, to any basis~$\mathcal{H}_{[\alpha]}$ that is \emph{echelonized} in the following sense, and require minor adaptations otherwise.

To obtain an echelonized basis,
we start from any basis of
$\ensuremath{\operatorname{FSol}}_0 (D_{[\alpha]})$
consisting of series of the form
\begin{equation}\label{eq:local-sol-shape}
  x^{\mathcal{L}}  \sum_{r = 0}^{s - 1} \sum_{m \geqslant 0} c_{r, m} x^m \ln (x)^r
\end{equation}
where $\mathcal L$ is one of the exponents of~$D$ attached to the Stokes value~$\alpha$ and $s$~is the dimension of the corresponding Jordan block in~$\mathfrak L$.
For each class $\mathcal L +\mathbb{Z}$ of exponents differing by integers, we write the corresponding basis elements with a common~$\mathcal L$ of maximal real part (the `group leader' of Definition~\ref{def:structure} below), and enumerate the coefficients of each of these expansions in the order
\begin{equation}
  c_{s - 1, 0}, c_{s - 2, 0}, \ldots, c_{0, 0}, \quad c_{s - 1, 1}, c_{s - 2, 1}, \ldots, c_{0, 1}, \quad \ldots \label{eq:series-coeff-for-echelon-form}
\end{equation}
Then we put these resulting family of sequences in reduced echelon form. One can show (\emph{e.g.}, \cite[Chapter~V]{Poole1936}) that the pivots that appear are the coefficients~$c_{\rho, \mu}$ where $(\mathcal{L}, \mu, \rho)$ belongs to the set~$\mathcal{E} (D_{[\alpha]})$ defined as follows.

\begin{definition}
  \label{def:structure}Given any differential operator~$D$, we say that an exponent~$\mathcal{L}$ of~$D$ at the origin is a \emph{group leader} when none of the numbers $\mathcal{L} - 1, \mathcal{L} - 2, \ldots$ is also an exponent. We denote by $\mathcal{E} (D)$ the set of triples $(\mathcal{L}, \mu, \rho) \in \mathbb{K} \times \mathbb{N} \times \mathbb{N}$ such that $\mathcal{L}$~is a group leader and $\mathcal{L} + \mu$ is an exponent of multiplicity strictly larger than~$\rho$, and we call the set $\mathcal{E} (D)$ the \emph{structure} of the local solutions free of exponentials of~$D$.
\end{definition}

An echelonized basis is naturally indexed by $\mathcal{E} (D_{[\alpha]})$ in the sense that the coefficient of $x^{\mathcal{L} + m} \ln (x)^r$ in a given basis element is~$1$ for exactly one triple $(\mathcal{L}, m, r) \in \mathcal{E} (D_{[\alpha]})$ and $0$~for all others. We will make free use of this correspondence and refer, for instance, to rows and columns of matrices using indices taken from $\mathcal{E} (D_{[\alpha]})$.

We assume from now on that the $\mathcal{H}_{[\alpha]}$ are echelonized,
and we denote by
\[
  \mathcal{Y} (x)
  = \bigl(
      \rme^{-\alpha_1/x} \mathcal{H}_{[\alpha_1]},
      \dots,
      \rme^{-\alpha_N/x} \mathcal{H}_{[\alpha_N]}
    \bigr)
  = (y_1 (x), \ldots, y_n (x))
\]
the formal fundamental solution obtained by collecting the families $\mathcal{H}_{[\alpha]}$ for all Stokes values~$\alpha$.

\begin{remark}
  \label{rk:unroll}When the coefficients $c_{j, m}$ of a solution~\eqref{eq:local-sol-shape} are enumerated in the order~\eqref{eq:series-coeff-for-echelon-form}, each coefficient can be computed from the previous ones using simple recurrence relations deduced from the differential operator~\cite[Chapter~V]{Poole1936}. This means that the series expansions of the solutions~${y}_{n}$ are easily computed to any desired order.
\end{remark}

\begin{remark}\label{rk:sol-ordering}
  The conditions above determine $\mathcal Y$ up to permutations of the Stokes values and classes of exponents.
  In our implementation,
  the families~$\mathcal{H}_{[\alpha]}$ are always sorted in such a way that whenever two solutions are asymptotically comparable, the dominant one as $x \rightarrow 0$ comes first. To be completely specific, we sort the solutions first by increasing $\ensuremath{\operatorname{Re}} (\mathcal{L})$, then, in case of ties, by decreasing degree with respect to $\ln (\zeta)$ of the coefficient of~$\zeta^{\mathcal{L}}$, then by decreasing absolute value of $\ensuremath{\operatorname{Im}} (\mathcal{L})$ (so that purely real exponents come last for a given $\ensuremath{\operatorname{Re}} (\mathcal{L})$), and finally by increasing~$\ensuremath{\operatorname{Im}} (\mathcal{L})$.

  Similarly, we sort the Stokes values firstly by increasing real part, so that the ordering of $y_1, \ldots, y_n$ as a whole also reflects the asymptotic dominance relation as $x \rightarrow 0$ with~$x > 0$. When several Stokes values have the same real part, we order them by decreasing absolute value of the imaginary part, so that, for each real part, purely real values come last. Finally, complex conjugates are sorted by increasing imaginary part.
  In particular, our Stokes matrices in the direction $\dir = 0$ are lower triangular.
\end{remark}

\begin{example}[Continued from Example~\ref{ex:running-intro}]
  \label{ex:running-bases} Denoting $\theta = \mathrm{e}^{\mathrm{i} \pi / 3}$, the formal fundamental solution of the operator~$D$ from Sec.~\ref{sec:running-example} used in our implementation is $\mathcal{Y}= (y_1, \ldots, y_7)$ where
  \allowdisplaybreaks[2]
  \begin{align*}
    y_1 (x) & = \mathrm{e}^{+ 6 / x} x^{- 2}  \left( 1 + \tfrac{2}{9} x + \tfrac{49}{648} x^2 + \tfrac{67}{17496} x^3 + \cdots \right)\\
    y_2 (x) & = \mathrm{e}^{- 6 \theta^{- 2} / x} x^{- 2}  \left( 1 + \tfrac{\theta^{- 2} + 6}{27} x + \tfrac{49}{3888} \theta^{- 2} x^2 + \tfrac{67}{17496} x^3 + \cdots \right)\\
    y_3 (x) & = \mathrm{e}^{- 6 \theta^2 / x} x^{- 2}  \left( 1 + \tfrac{\theta^2 + 6}{27} x + \tfrac{49}{3888} \theta^2 x^2 + \tfrac{67}{17496} x^3 + \cdots \right)\\
    y_4 (x) & = x^3 + \tfrac{315}{128} x^6 + \cdots\\
    y_5 (x) & = \mathrm{e}^{- 6 \theta^{- 1} / x} x^{- 2}  \left( 1 + \tfrac{\theta^{- 1} - 6}{27} x + \tfrac{49}{3888} \theta^{- 2} x^2 - \tfrac{67}{17496} x^3 + \cdots \right)\\
    y_6 (x) & = \mathrm{e}^{- 6 \theta / x} x^{- 2}  \left( 1 + \tfrac{\theta - 6}{27} x + \tfrac{49}{3888} \theta x^2 - \tfrac{67}{17496} x^3 + \cdots \right)\\
    y_7 (x) & = \mathrm{e}^{- 6 / x} x^{- 2}  \left( 1 - \tfrac{2}{9} x + \tfrac{49}{648} x^2 - \tfrac{67}{17496} x^3 + \cdots \right) .
  \end{align*}
  \noindent Each of the spaces $\ensuremath{\operatorname{FSol}}_0 (D_{[\alpha]})$ is one-dimensional, with $\ensuremath{\operatorname{FSol}}_0 (D_{[0]}) =\mathbb{C}y_4$ and $\ensuremath{\operatorname{FSol}}_0 (D_{[6]}) =\mathbb{C} \mathrm{e}^{6 / x} y_7$ for instance.
  
  The Borel transform of~$D$ is equal to
  \begin{multline*}
    \Delta = \Delta_{[0]} = (\zeta^7 - 46656 \zeta)  \diff{\zeta}{6} + (51 \zeta^6 - 139968)  \diff{\zeta}{5} + 958 \zeta^5  \diff{\zeta}{4} \\
    + 8332 \zeta^4  \diff{\zeta}{3} + 34521 \zeta^3  \diff{\zeta}{2} + 62289 \zeta^2  \diff{\zeta}{} + 36015 \zeta .
  \end{multline*}
  Its indicial polynomial is $\tilde{\Pi}_{[0]} (\lambda) = [\lambda]_{5^-}  (\lambda - 2)$, and the corresponding basis structure is
  \[ \mathcal{E} (\Delta_{[0]}) = \{ (0, 0, 0), (0, 1, 0), (0, 2, 1), (0, 2, 0), (0, 3, 0), (0, 4, 0) \} . \]
  The basis $\localbasis{0}$ of the space $\ensuremath{\operatorname{Sol}} (\Delta_{[0]})$, again with the conventions of our implementation, takes the form
  \[ \localbasis{0} (\zeta) = \begin{aligned}[t]
       \left( \vphantom{\frac{}{}} \right. & 1 + \tfrac{2401}{8957952} \zeta^6 + \mathrm{O} (\zeta^{12}), \\
       & \zeta + \tfrac{64}{382725} \zeta^7 + \mathrm{O} (\zeta^{12}), \\
       & \zeta^2 \log (\zeta) + \left( \tfrac{1}{8192} \log (\zeta) - \tfrac{19}{589824} \right) \zeta^8 + \mathrm{O} (\zeta^{12}), \\
       & \zeta^2 + \tfrac{1}{8192} \zeta^8 + \mathrm{O} (\zeta^{12}), \\
       & \zeta^3 + \tfrac{125}{1285956} \zeta^9 + \mathrm{O} (\zeta^{12}), \\
       & \zeta^4 + \tfrac{14641}{179159040} \zeta^{10} + \mathrm{O} (\zeta^{12}) \left. \vphantom{\frac{}{}} \right) .
     \end{aligned} \]
  For all other Stokes values~$\alpha$, one has $\tilde{\Pi}_{[\alpha]} (\lambda) = [\lambda]_{5^-}  (\lambda + 3)$.
\end{example}

\paragraph{Local bases in the Borel plane.}

Similarly, for any $\alpha \in \mathbb{C}$, we denote by $\ensuremath{\operatorname{Sol}}_0 (\Delta_{[\alpha]})$ the space of germs of solutions of~$\Delta_{[\alpha]}$ analytic in a neighborhood of the origin slit along the negative real axis, and we fix a basis
\[ \localbasis{\alpha} = (\hat{y}_{\alpha, 1}, \ldots, \hat{y}_{\alpha, \nu}) \]
of $\ensuremath{\operatorname{Sol}}_0 (\Delta_{[\alpha]})$.
We choose each basis element~$\hat{y}_{\alpha, i}$ in the form
\begin{equation}
  \zeta^{\lambda}  \sum_{r = 0}^{s - 1} \sum_{m \geqslant 0} c_{r, m} \zeta^m \ln (\zeta)^r \label{eq:local-sol-shape-borel}
\end{equation}
where~$\lambda$ is one of the exponents of~$\Delta_{[\alpha]}$ at~$0$ and~$s \in \mathbb{N}$.
As above in the case of formal solutions in the Laplace plane, we assume that within families of series~\eqref{eq:local-sol-shape-borel} that have the same exponent modulo~$\mathbb Z$, the sequences
\begin{equation}
  c_{s - 1, 0}, c_{s - 2, 0}, \ldots, c_{0, 0}, \quad c_{s - 1, 1}, c_{s - 2, 1}, \ldots, c_{0, 1}, \quad \ldots
\end{equation}
are in reduced echelon form.
This way, the basis $\localbasis{\alpha}$ of $\ensuremath{\operatorname{Sol}}_0 (\Delta_{[\alpha]})$ is naturally indexed by $\mathcal{E} (\Delta_{[\alpha]})$.
In the implementation, the exponent classes $\lambda + \mathbb Z$ are ordered as specified in Remark~\ref{rk:sol-ordering}.

\section{Factorization of Stokes blocks}\label{sec:matrices}

As discussed in Sec.~\ref{Stokes matrices}, p.~\pageref{par:trivial-entries}, the Stokes matrix in the direction~$\dir$ has a (potentially) nontrivial block associated with each pair $(\alpha, \beta)$ of Stokes values such that $\arg (\beta - \alpha) = \dir$. Algorithm~\ref{algo:stokes} below computes each such block as a product of a \emph{Borel tranform matrix}, a \emph{connection matrix}, and a \emph{connection-to-Stokes matrix}, which we now define.

\paragraph{Connection matrices.}Let $\alpha, \beta$ be two Stokes values of~$D$. For any path~$\gamma : [0, 1] \rightarrow \mathbb{C}$ avoiding the singular points of~$\Delta$ and such that
\begin{itemize}
  \item $\gamma (0)$ is close enough to~$\alpha$, but not on the local branch cut $\alpha +\mathbb{R}_{\leqslant 0}$,
  
  \item $\gamma (1)$ is close enough to~$\beta$, but not on the local branch cut $\beta +\mathbb{R}_{\leqslant 0}$,
\end{itemize}
analytic continuation along~$\gamma$ provides an isomorphism from $\ensuremath{\operatorname{Sol}}_0 (\Delta_{[\alpha]})$ to $\ensuremath{\operatorname{Sol}}_0 (\Delta_{[\beta]})$. This isomorphism depends on the choice of~$\gamma$. We make the following specific choice: given~$\eta > 0$, we consider a straight-line path~$\gamma$ such that
(see also Figure~\ref{fig:close-triangle}, p.~\pageref{fig:close-triangle})
\begin{itemize}
  \item $\gamma (0) = \alpha + (\eta - i \eta^2)  (\beta - \alpha)$, i.e., $\gamma$ starts close to~$\alpha$, slightly to the right\footnote{The choice of the perturbation factor $\eta - i \eta^2$ instead of the simpler $\eta - i \eta$ one might be tempted to use ensures that the perturbed point is not right on the branch cut of the local argument.} of the ray pointing to~$\beta$,
  
  \item $\gamma (1) = \beta + (\eta - i \eta^2)  (\beta - \alpha)$, i.e., $\gamma$ ends just behind~$\beta$ seen from~$\alpha$, slightly to the right of the same ray.
\end{itemize}
In particular, $\gamma$~passes to the right of any singular point of~$\Delta$ lying on the open line segment~$(\alpha, \beta)$. For small enough~$\eta$, the map from $\ensuremath{\operatorname{Sol}}_0 (\Delta_{[\alpha]})$ to $\ensuremath{\operatorname{Sol}}_0 (\Delta_{[\beta]})$ obtained by analytic continuation along~$\gamma$ does not depend on the choice of~$\eta$.

\begin{definition}
  \label{def:Tmat}We denote by~$\tmat{\alpha}{\beta} \in \mathbb{C}^{\nu \times \nu}$ the matrix of this map in the bases $\localbasis{\alpha}$ and $\localbasis{\beta}$.
\end{definition}

In other words, $\tmat{\alpha}{\beta}$ is the change-of-basis matrix from $\localbasis{\alpha}$ to $\localbasis{\beta}$, both viewed as bases of the space of solutions of~$\Delta$ on a simply connected neighborhood of~$\gamma$.

\paragraph{Borel transform matrices.}

Let $\alpha$ be a Stokes value of~$D$ of multiplicity~$k$.
The formal Borel transformation \eqref{eq:formal-Borel}, p.~\pageref{eq:formal-Borel}, extended by mapping $x^{- j}$ for nonnegative integer~$j$ to zero as explained just after its definition,
defines a map
$ \tilde{\mathcal B}_\alpha :
  \operatorname{FSol}_0 (D_{[\alpha]})
  \to
   \operatorname{Sol}_0 (\Delta_{\alpha})$
between the local solution spaces defined in the previous section.

\begin{definition}
  \label{def:Bmat}We denote by~$\bmat{\alpha} \in \mathbb{C}^{\nu \times k}$ the matrix of this map in the bases~$\mathcal{H}_{[\alpha]}$ and $\localbasis{\alpha}$.
\end{definition}

It results from the discussion in~Sec.~\ref{BorelSol} that, when the exponents of~$D_{[\alpha]}$ all belong to~$\mathbb{C}\backslash\mathbb{Z}_{\leqslant 0}$, one has $k \leqslant \nu$ and the matrix~$\mathbf{B}_{\alpha}$ has full column rank.

\begin{example}[Continued from Example~\ref{ex:running-bases}]
  \label{ex:running-borel} On the problem considered in Example~\ref{ex:running-intro}, the transformed operator~$\Delta$ has order~$\nu = 6$ and all Stokes values have multiplicity~$k = 1$. Since the formal Borel transform of $y_4 (x) = x^3 + \mathrm{O} (x^6)$ is $\zeta^2 / 2 + \mathrm{O} (\zeta^5)$, the Borel transform matrix associated with the Stokes value~$0$ is
  \[ \mathbf{B}_0 = \left( 0, 0, 0, \tfrac{1}{2}, 0, 0 \right)^{\mathrm{T}} . \]
  Similarly, the Borel transform matrix for $\alpha = 6$ is
  \[ \textstyle \mathbf{B}_6 = \left( 0, - \frac{67}{17496}, - \frac{9347}{2519424}, \frac{56135}{22674816}, - \frac{13289119}{29386561536}, - \frac{57551105}{1057916215296} \right)^{\mathrm{T}} \]
  where the entries are the coefficients of~$\zeta^{\lambda}$ for $\lambda = - 3, 0, 1, 2, 3, 4$ (that is, for $\lambda$ ranging over the roots of $\tilde{\Pi}_{[6]}$ enumerated in increasing order) in the expansion of $\tilde{\mathcal{B}}_{\alpha} (y_7)$.
\end{example}

\paragraph{Connection-to-Stokes matrices.}

Fix a Stokes value~$\beta$ of~$D$ of multiplicity~$k'$ and a direction~$\dir \in (-\pi, \pi]$.
We denote by~$\zeta$ the local variable given by $\xi = \beta + \zeta$.
Like in Sec.~\ref{sec:compare}, let~$\gamma_0$ be a Hankel-type path in the $\zeta$-plane, going around the origin positively and with both branches going to infinity just to the left of the direction~$\dir$.

Consider an arbitrary element~$\hat{y} (\zeta)$ of $\ensuremath{\operatorname{Sol}}_0 (\Delta_{[\beta]})$. Initially, $\hat{y} (\zeta)$~is defined in a neighborhood of the origin of the $\zeta$-plane slit along~$\mathbb{R}_{\leqslant 0}$. We still denote by~$\hat{y}$ its analytic continuation starting from small $\zeta>0$ and along~$\gamma_0$. Since, as a solution of~$\Delta_{[\beta]}$, this analytic continuation has at most exponential growth $\rme^{a \Re \zeta}$ at infinity, we have for small enough $x > 0$
\[
\partial \int_{\gamma_0}\mspace{-5mu} \hat{y} (\zeta) \mathrm{e}^{- \zeta / x} \mathrm{d} \zeta = \int_{\gamma_0}\mspace{-5mu} \zeta \hat{y} (\zeta) \mathrm{e}^{- \zeta / x} \mathrm{d} \zeta, \quad
\frac{1}{x}  \int_{\gamma_0}\mspace{-5mu} \hat{y} (\zeta) \mathrm{e}^{- \zeta / x} \mathrm{d} \zeta = \int_{\gamma_0}\mspace{-5mu} \hat{y}' (\zeta) \mathrm{e}^{- \zeta / x} \mathrm{d} \zeta .
\]
Therefore the function
$L_{\beta, \dir} \, \hat y(x) = \int_{\gamma_0} \hat{y} (\zeta) \mathrm{e}^{- \zeta / x} \mathrm{d} \zeta$
is a solution of~$D_{[\beta]}$ defined on the Borel disc with diameter $[0, \rme^{\rmi \omega}/a]$.
Taking its asymptotic expansion as $x \rightarrow 0$ with $\arg x = \dir$, we obtain an element $\tilde L_{\beta, \dir} \, \hat y$ of $\ensuremath{\operatorname{FSol}}_0 (D_{[\beta]})$.

While the function $L_{\beta, \dir} \, \hat y$ depends on~$\dir$, rotating the path~$\gamma_0$ by a small angle only changes $L_{\beta, \dir} \, \hat y$ by an exponentially small amount as $x \to 0$.
It follows that for any two $\dir, \dir'$,
the asymptotic expansion of $L_{\beta, \dir} \, \hat y$ as $x \to 0$ with $\arg x = \dir$
is equal to
the asymptotic expansion of $L_{\beta, \dir'} \, \hat y$ as $x \to 0$ with $\arg x = \dir'$.
In other words, $\tilde L_{\beta, \dir} \, \hat y$ is independent of~$\dir$.
(Another way to see this is that $\tilde L_{\beta, \dir} \, \hat y$ is determined by the initial terms of the local expansion of~$\hat y$ at~$\beta$ using Lemma~\ref{dico}.)
Dropping the unnecessary~$\dir$ from the notation, this construction defines a linear map
$\tilde L_\beta : \operatorname{Sol}_0 (\Delta_{[\beta]}) \to \operatorname{FSol}_0 (D_{[\beta]})$.

\begin{definition}
  \label{def:c2smat}We denote by~$\mathbf{L}_{\beta} {\in \mathbb{C}^{k' \times \nu}} $ the matrix of the map $\tilde L_\beta$ in the bases~$\localbasis{\beta}$ and $\mathcal{H}_{[\beta]}$.
\end{definition}

\begin{example}[Continued from Example~\ref{ex:running-bases}]
  \label{ex:running-c2s} On the problem considered in Example~\ref{ex:running-intro}, one has
  \[ \mathbf{L}_0 = (0, 0, 4 \pi \mathrm{i}, 0, 0, 0), \]
  expressing that the only entry of $\localbasis{0}$ for which the integral in Definition~\ref{def:c2smat} does not vanish is the one starting with $x^2 \log (x)$, and the value of the integral is then~$4 \pi \mathrm{i} (x^3 + \cdots)$. Similarly, one has $\mathbf{L}_6 = (\pi \mathrm{i}, 0, 0, 0, 0, 0)$.
\end{example}

See also Example~\ref{ex:polya}, p.~\pageref{ex:polya} for a slightly more complicated example.
The computation of
$\mathbf B_{\alpha}$~and~$\mathbf L_{\beta}$
is discussed in detail in Sec.~\ref{sec:inimaps}.

\paragraph{Factorization of Stokes blocks.}
With these definitions, the block of the Stokes matrix associated with a given pair of Stokes values decomposes as follows.

\begin{lemma} \label{lem:stokes-block}
  Let $\alpha, \beta$ be two distinct Stokes values of~$D$.
  Let $\mathcal{I}, \mathcal{I}'$ be the corresponding blocks of indices in the decomposition~\eqref{blocQ}, p.~\pageref{blocQ}, of the formal fundamental solution. Then the block $C_{\dir}^{[\mathcal{I}' ; \mathcal{I}]}$ of the Stokes matrix of~$D$ in the direction~$\dir = \arg (\beta - \alpha)$ is equal to $\mathbf{L}_{\beta} \mathbf{T}_{\alpha, \beta} \mathbf{B}_{\alpha}$.
\end{lemma}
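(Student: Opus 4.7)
The plan is to interpret the right-hand side of~\eqref{identite2}, p.~\pageref{identite2}, as the composition of three linear maps and to recognize their matrices as the factors $\mathbf{B}_\alpha$, $\mathbf{T}_{\alpha,\beta}$, and $\mathbf{L}_\beta$. Since the Stokes matrix behaves covariantly under permutation of the blocks of~$Q$, we may reorder the Stokes values so that $\alpha$ plays the role of~$\alpha_1$ and $\beta$ that of some~$\alpha_{i_s}$ on the ray~$d_\omega$; then $\mathcal{I}$~and~$\mathcal{I}'$ correspond respectively to $\mathcal{I}_1$~and~$\mathcal{I}_{i_s}$ in~\eqref{blocQ}. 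With the formal fundamental solution~$\mathcal{Y}(x)$ defined just after Def.~\ref{def:structure}, the block $(\tilde h_j)_{j\in\mathcal{I}}$ is precisely $\mathcal{H}_{[\alpha]}$, and similarly $(\tilde h_j)_{j\in\mathcal{I}'}=\mathcal{H}_{[\beta]}$. Consequently, fixing $\ell\in\mathcal{I}$, the left-hand side of~\eqref{identite2} is exactly the $\ell$-th column of the block $C_\dir^{[\mathcal{I}';\mathcal{I}]}$ expressed in the basis~$\mathcal{H}_{[\beta]}$.

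Next I would unwind the right-hand side in three steps. First, the formal Borel transform $\hat h_\ell=\tilde{\mathcal{B}}_\alpha(\tilde h_\ell)$ has coordinates $\mathbf{B}_\alpha e_\ell$ in the basis~$\hat{\mathcal{Y}}_\alpha$, by Def.~\ref{def:Bmat}. Second, the germ~$\hat h_\ell^-$ appearing in~\eqref{identite2} is the analytic continuation of~$\hat h_\ell$ along a path from a neighborhood of~$\alpha$ to a neighborhood of~$\beta$ that passes to the right of all intermediate Stokes values on the segment~$(\alpha,\beta)$; this is precisely the path used to define $\mathbf{T}_{\alpha,\beta}$ in Def.~\ref{def:Tmat}, so the coordinates of the continued germ in~$\hat{\mathcal{Y}}_\beta$ are $\mathbf{T}_{\alpha,\beta}\mathbf{B}_\alpha e_\ell$. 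Third, the change of variable $\zeta'=\zeta-(\beta-\alpha)$ (equivalently, $\xi=\beta+\zeta'$) turns the integral in~\eqref{identite2} into $\int_{\gamma_0}\hat h_\ell^-(\beta+\zeta')\,\rme^{-\zeta'/x}\,\rmd\zeta'$, with $\gamma_0$ a Hankel contour around the origin of the local coordinate~$\zeta'$ at~$\beta$; taking its asymptotic expansion~$T_{x=0}$ is by definition the action of~$\tilde L_\beta$ from Sec.~\ref{sec:matrices}, so by Def.~\ref{def:c2smat} the result, expressed in~$\mathcal{H}_{[\beta]}$, has coordinates $\mathbf{L}_\beta\mathbf{T}_{\alpha,\beta}\mathbf{B}_\alpha e_\ell$.

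Identifying these coordinates with the $\ell$-th column of~$C_\dir^{[\mathcal{I}';\mathcal{I}]}$ and letting $\ell$ range over~$\mathcal{I}$ yields the announced factorization $C_\dir^{[\mathcal{I}';\mathcal{I}]}=\mathbf{L}_\beta\mathbf{T}_{\alpha,\beta}\mathbf{B}_\alpha$. The only point that deserves careful verification is the compatibility of orientation conventions: the ``right-hand'' side of~$d_\omega$ on which~$\hat h^-$ is defined in Sec.~\ref{sec:Stokes-Borel} must match the side on which the path of Def.~\ref{def:Tmat} passes the intermediate singular points, and the Hankel contour~$\gamma_0$ arising from the splitting of the Laplace-type contour of Fig.~\ref{fig-splitpath} must match the one chosen in Def.~\ref{def:c2smat}. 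Both agreements are wired into the definitions of Sec.~\ref{sec:matrices} and require no further computation.
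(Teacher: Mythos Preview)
Your approach is essentially the same as the paper's: interpret~\eqref{identite2} as a composition of three linear maps and identify their matrices with $\mathbf{B}_\alpha$, $\mathbf{T}_{\alpha,\beta}$, and $\mathbf{L}_\beta$. The orientation checks you flag are indeed built into the definitions.

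There is, however, one point you gloss over that the paper treats as essential. Equation~\eqref{identite2} was derived in Sections~\ref{sec:Stokes-Borel}--\ref{sec:compare} for the \emph{prepared} fundamental solution of Sec.~\ref{preparedsol}, whose regular parts $\tilde h_\ell$ have strictly positive valuation (this is what makes the Borel transforms integrable at~$0$ and the Laplace integrals well defined). The basis~$\mathcal{H}_{[\alpha]}$ of Sec.~\ref{sec:local-bases}, to which the matrices $\mathbf{B}_\alpha$ and $\mathbf{L}_\beta$ refer and relative to which the Stokes matrix in the Lemma is expressed, is in general \emph{not} prepared: its elements may contain monomials $x^{-p}$ with $p\in\mathbb{N}$, whose image under~$\tilde{\mathcal{B}}_\alpha$ is~$0$ by fiat, whereas the honest Borel transform would produce Dirac terms. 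You cannot simply substitute these $\tilde h_\ell$ into~\eqref{identite2} without justification. The paper closes this gap by invoking Remark~\ref{remprepared}, which shows via integration by parts that~\eqref{identite2} continues to hold when the prepared $\tilde h$'s are replaced by the unprepared $\tilde k$'s (with the polar part discarded on the Borel side). You should add this step.
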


\begin{proof}
The result is essentially a rephrasing of equation~\eqref{identite2}, p.~\pageref{identite2}, taking into account that, in virtue of Remark~\ref{remprepared}, it is not necessary to assume the absence of monomials of nonpositive integer degree in the solutions.
More precisely,
for a given $\ell \in \mathcal I$,
consider the element $y_\ell$ of $\mathcal Y$.
Since this $y_\ell$ is a solution of the \emph{unprepared} equation, it corresponds to~$\tilde z_\ell$ in the notation of Remark~\ref{remprepared}.
As in the remark, write
$\tilde z(x) = \tilde k(x) \rme^{-\alpha/x}$
and let $\tilde K(x)$ be the series obtained from~$\tilde k(x)$ by dropping the terms that belong to $\mathbb C[1/x]$.

The matrix~$\mathbf{B}_{\alpha}$ maps the coordinates of~$\tilde k(x)$ in the basis $\mathcal{Y}_{\alpha} (x)$ of ${\operatorname{FSol}}_0 (D_{[\alpha]})$ to those of $\hat{K}_{\ell}$ in the basis $\localbasis{\alpha}  (\zeta)$ of ${\operatorname{Sol}}_0 (\Delta_{[\alpha]})$.

The matrix $\mathbf{T}_{\alpha, \beta}$ maps these coordinates to those of $\hat{K}^-_{\ell} (\beta + \zeta)$ in the basis $\localbasis{\beta}  (\zeta)$ of ${\operatorname{Sol}}_0 (\Delta_{[\beta]})$.

Finally, the matrix
$\mathbf{L}_{\beta}$
maps the coordinates of
$\hat{K}^-_{\ell} (\beta + \zeta)$ in the basis $\localbasis{\beta}  (\zeta)$
of
${\operatorname{Sol}}_0 (\Delta_{[\beta]})$
to the coordinates in the basis
$\mathcal{H}_{[\beta]} (x)$
of the expansion as $x \rightarrow 0$ ($\arg x = \dir$) of
$\int_{\gamma_{0}} \hat{K}^- (\alpha + \zeta) \mathrm{e}^{- \zeta / x} \mathrm{d} \zeta$.

By Remark~\ref{remprepared}, these coordinates are exactly the~$C_{\dir}^{[j ; \ell]}$ for $j \in \mathcal{I}'$.
\end{proof}

\begin{example}[Continued from Examples \ref{ex:running-borel}~and~\ref{ex:running-c2s}]
  \label{ex:running-stokes-block} In the Stokes matrix in the direction $\dir = 0$ computed in Example~\ref{ex:running-intro}, the block corresponding to the pair of Stokes values $(\alpha, \beta) = (0, 6)$ reduces to a single entry $c \approx - 221.7025 \mathrm{i}$ appearing in the fourth column (corresponding to $y_4 (x) = e^{- \alpha / x}  (x^{- 2} + \cdots)$) and last row (corresponding to $y_7 (x) = \mathrm{e}^{- \beta / x}  (x^{- 2} + \cdots)$). Lemma~\ref{lem:stokes-block} expresses this~$c$ as a product $c = \pi \mathrm{i} \times a \times \tfrac{1}{2}$ where $a \approx 141.1420$ is the entry in the fourth column (corresponding to the element $\zeta^2 \log (\zeta) + \cdots$ of~$\localbasis{0}$) and first row (corresponding to the element $\zeta^{- 3} + \cdots$ of~$\localbasis{6}$) of the connection matrix~$\tmat{0}{6}$.
\end{example}

See also Example~\ref{ex:hgeom} (p.~\pageref{ex:hgeom}) for several other examples of the decomposition $\mathbf{L}_{\beta} \mathbf{T}_{\alpha, \beta} \mathbf{B}_{\alpha}$, including a case with nonpositive integer exponents.

\section{The main algorithm}\label{sec:algo-main}

With this notation in place, we can now describe an algorithm that takes as input a differential operator and computes interval enclosures of all its Stokes matrices at the origin. The structure of the algorithm is presented in Algorithm~\ref{algo:stokes} below. It first computes all Borel transform matrices~$\mathbf{B}_{\alpha}$, connection matrices~$\mathbf{T}_{\alpha, \beta}$, and connection-to-Stokes matrices~$\mathbf{L}_{\beta}$ separately, and then fills the Stokes matrices with blocks of the form $\mathbf{L}_{\beta} \mathbf{T}_{\alpha, \beta} \mathbf{B}_{\alpha}$.

For readability, Algorithm~\ref{algo:stokes} refers to directions $\dir$ as elements of $(- \pi, \pi]$. However, all directions it uses are of the form $\dir = \arg \theta$ for some $\theta \in \mathbb{K}$, so that they can be manipulated exactly by encoding them by a suitable~$\theta$.

\begin{algorithm}\label{algo:stokes}
  Stokes matrices
  \begin{description}
    \item[Input] An operator $D = D (x^{- 1}, \partial) \in \mathbb{K} [x^{- 1}] [\partial]$.
    
    \item[Output] A set of pairs $\left( \dir, \mathbf{C}_{\dir} \right)$ where $\dir = \arg \theta$ for some $\theta \in \mathbb{K}$ and $\mathbf{C}_{\dir} \in \CComp^{n \times n}$.
  \end{description}
  \begin{enumerate}
    \item Compute the Borel transform~$\Delta = D ( \ddx{\xi}, \xi ) \in \mathbb{K} [\xi] [\ddx{\xi}]$ of~$D$.
    
    \item Compute the set $\Sigma \subset \mathbb{K}$ of singular points of~$\Delta$ and their multiplicities.
    
    \item \label{step:alignments}Compute the set $\Omega = \{ \arg (\beta - \alpha) : \alpha, \beta \in \Sigma, \alpha \neq \beta \}$ and, for each~$\dir \in \Omega$, the set $\mathcal{A} ( \dir ) = \left\{ (\alpha, \beta) : \arg (\beta - \alpha) = \dir \right\}$.
    
    \item \label{step:connect}Call Algorithm~\ref{algo:connect-all} to compute the matrices $\mathbf{T}_{\alpha, \beta}$ for all $(\alpha, \beta) \in \Sigma^2$.
    
    \item For each $\alpha \in \Sigma$:
    
    \begin{enumerate*}
      \item Compute $\Delta_{[\alpha]} \in \mathbb{K} [\zeta] [ \ddx{\zeta} ]$ by replacing $\xi$ by $\alpha + \zeta$ in $\Delta$.
      
      \item Call Algorithm~\ref{algo:borelmat} to compute the matrix~$\mathbf{B}_{\alpha}$.
      
      \item Call Algorithm~\ref{algo:c2s} to compute the matrix~$\mathbf{L}_{\alpha}$.
    \end{enumerate*}
    
    \item For each $\dir \in \Omega$:
    
    \begin{enumerate*}
      \item Initialize $\mathbf{C}_{\dir} \in \CComp^{n \times n}$ to the zero matrix.
      
      \item For each $(\alpha, \beta) \in \mathcal{A} \left( \dir \right)$,
        set the block $\mathbf{C}_{\dir}^{[\mathcal{I}' ; \mathcal{I}]}$ of the matrix $\mathbf{C}_{\dir}$ to $\bfmat{\beta}  \tmat{\alpha}{\beta}  \bmat{\alpha}$, where $\mathcal{I}$~and~$\mathcal{I}'$ are the sets of row and column indices corresponding respectively to $\alpha$ and~$\beta$ (see Sec.~\ref{preparedsol}).
    \end{enumerate*}
    
    \item Return $\left\{ \left( \dir, \mathbf{C}_{\dir} \right) : \dir \in \Omega \right\}$.
  \end{enumerate}
\end{algorithm}

\begin{remark}
  \label{rk:alignments}Step~\ref{step:alignments} can be implemented by first sorting pairs $(\alpha, \beta)$ according to $\dir = \arg (\beta - \alpha)$, and then grouping together pairs with the same~$\dir$. In fact, in view of the computation of connection matrices (see Remark~\ref{rk:connect-all-details} below), it is convenient to represent each $\mathcal{A} \left( \dir \right)$ as a set of lists each containing all Stokes values lying on a certain line in the direction~$\dir$. For instance, in the case of the operator considered in Sec.~\ref{sec:running-example}, one has for the direction $\dir = 0$ three alignments
  \[ [6 \mathrm{e}^{2 \mathrm{i} \pi / 3}, 6 \mathrm{e}^{\mathrm{i} \pi / 3}], \quad [- 6, 0, 6], \quad [6 \mathrm{e}^{- 2 \mathrm{i} \pi / 3}, 6 \mathrm{e}^{- \mathrm{i} \pi / 3}] \]
  encoding five pairs $(\alpha, \beta)$ corresponding to the five nontrivial blocks (each reduced to a single entry) of the Stokes matrix. Most of this computation can be done numerically using interval arithmetic, with (slower) exact comparisons only needed in the presence of arguments very close to each other or to~$\pi$.
\end{remark}

This algorithm correctly computes the Stokes matrices of operators of single level one. More explicitly, this means that the following result holds.

\begin{theorem}
  \label{thm:algo-stokes}Assume that the operator~$D$ given as input to Algorithm~\ref{algo:stokes} is of single level one. The algorithm returns a set $\left\{ \left( \dir, \mathbf{C}_{\dir} \right) : \dir \in \Omega \right\}$ such that
  \begin{enumerate}
    \item \label{item:all-sing-dir}$\Omega$~is the set of anti-Stokes directions of~$D$;
    
    \item \label{item:equalsstokes}each $\mathbf{C}_{\dir}$ is a matrix of computable numbers such that ${I +\mathbf{C}_{\dir}}$ is the Stokes matrix of~$D$ in the direction~$\dir$, relative to the fundamental matrix~$\mathcal Y$.
  \end{enumerate}
\end{theorem}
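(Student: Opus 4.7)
The plan is to verify the two assertions separately, using Lemma~\ref{lem:stokes-block} as the main ingredient and invoking the correctness of the auxiliary subroutines (stated and proved in Sec.~\ref{sec:connection} and the following sections) for the computability claims.

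For assertion~(\ref{item:all-sing-dir}), I would combine Definition~\ref{aSd} with Proposition~\ref{SingDelta}(\ref{item:SingDelta:sing}). By the latter, the set~$\Sigma$ computed at Step~2 is exactly the set of Stokes values $\{\alpha_1,\dots,\alpha_N\}$ of~$D$. Plugging this into the definition of~$\Omega$ at Step~\ref{step:alignments} gives
\[
  \Omega = \{\arg(\alpha_j-\alpha_\ell) : 1\leq j,\ell\leq N,\; j\neq\ell\},
\]
which is precisely the set of anti-Stokes directions of~$D$. The fact that $\Omega$ and the sets $\mathcal{A}(\dir)$ can be computed using the primitives available over~$\mathbb{K}$ follows from Remark~\ref{rk:alignments}.

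For assertion~(\ref{item:equalsstokes}), fix $\dir\in\Omega$ and let $\mathbf{C}_\dir$ denote the matrix returned by the algorithm and $I+\widehat C_\dir$ the true Stokes matrix in the direction~$\dir$ relative to~$\mathcal{Y}$. Partition the rows and columns into the blocks $\mathcal{I}_1,\dots,\mathcal{I}_N$ of \eqref{blocQ} associated with the Stokes values~$\alpha_1,\dots,\alpha_N$. For any pair $(\alpha,\beta)$ of distinct Stokes values with $\arg(\beta-\alpha)=\dir$, the algorithm sets the block $\mathbf{C}_\dir^{[\mathcal{I}';\mathcal{I}]}$ to $\mathbf{L}_\beta\mathbf{T}_{\alpha,\beta}\mathbf{B}_\alpha$, which by Lemma~\ref{lem:stokes-block} equals $\widehat C_\dir^{[\mathcal{I}';\mathcal{I}]}$. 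All other off-diagonal blocks of $\mathbf{C}_\dir$ are left equal to zero: those corresponding to pairs $(\alpha,\beta)$ with $\alpha=\beta$ give the diagonal of $\mathbf{C}_\dir$, while those with $\arg(\beta-\alpha)\neq\dir$ correspond exactly to the trivial entries of the Stokes matrix identified on p.~\pageref{par:trivial-entries} (the condition $(-\alpha_j+\alpha_\ell)\rme^{-\rmi\ugo}<0$ fails in these cases). Hence $\mathbf{C}_\dir = \widehat C_\dir$ entry by entry. Computability of the entries of $\mathbf{C}_\dir$ follows from the fact that each factor $\mathbf{B}_\alpha$, $\mathbf{T}_{\alpha,\beta}$, $\mathbf{L}_\beta$ has computable entries (to be established in Sec.~\ref{sec:connection} and the following sections, where Algorithms~\ref{algo:borelmat}, \ref{algo:connect-all}, \ref{algo:c2s} are analysed), together with the stability of~$\CComp$ under finite sums and products.

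The only real subtlety lies in assertion~(\ref{item:equalsstokes}), where one must check that the trivial blocks are placed correctly: on the one hand, the blocks treated by the algorithm are precisely those for which Lemma~\ref{lem:stokes-block} applies; on the other hand, the blocks left equal to zero genuinely are zero in $I+\widehat C_\dir$. This is what the case analysis in the previous paragraph ensures. Everything else either follows from the explicit construction of~$\Omega$ or is delegated to the correctness statements of the three subroutines invoked at Steps \ref{step:connect}, 5(b), and~5(c).
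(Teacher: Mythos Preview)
Your proposal is correct and follows essentially the same approach as the paper's own proof: identify $\Omega$ with the anti-Stokes directions via Definition~\ref{aSd} and Proposition~\ref{SingDelta}, invoke Lemma~\ref{lem:stokes-block} for the nontrivial blocks, note that the remaining blocks are zero by the discussion in Sec.~\ref{Stokes matrices}, and delegate computability to the correctness propositions for the three subroutines. The only cosmetic slip is the pointer ``Sec.~\ref{sec:connection} and the following sections'' for the analysis of Algorithms~\ref{algo:borelmat} and~\ref{algo:c2s}; these are in fact handled in Sec.~\ref{sec:inimaps} (Propositions~\ref{prop:algo-borel} and~\ref{prop:algo-c2s}), which precedes Sec.~\ref{sec:connection}.
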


\begin{remark}[Interval version and error bounds]
  Using interval arithmetic instead of computable complex numbers, either the algorithm raises an error or its output satisfies item~(\ref{item:all-sing-dir}) above and
  \begin{enumerate}
    \stepcounter{enumi}
    \item \label{item:containment}for all $\dir \in \Omega$ and $1 \leqslant j, \ell \leqslant n$, the entry $\mathbf{C}_{\dir}^{[j ; \ell]} \in \Balls$ of $\mathbf{C}_{\dir}$ contains $C_{\dir}^{[j ; \ell]}$, where $I + C_{\dir}$ is the Stokes matrix of~$D$ in the direction~$\dir$, relative to the fundamental matrix~$\mathcal Y$.
  \end{enumerate}
  Additionally, for a fixed input operator~$D$, the values of the working precision for which the algorithm can fail are bounded, and the diameters of all $\mathbf{C}_{\dir}^{[j ; \ell]}$ tend to zero as the working precision tends to infinity.
\end{remark}

\begin{proof}
The correctness of Algorithms~\ref{algo:borelmat}, \ref{algo:c2s}, and~\ref{algo:connect-all} is proved respectively in Propositions~\ref{prop:algo-borel}, \ref{prop:algo-c2s}, and~\ref{prop:algo-connect-all} below. We have seen in Sec.~\ref{sec:Stokes-Laplace} that $\Omega$ is the set of anti-Stokes directions of~$D$. Now fix~$\dir \in \Omega$. As discussed in Sec.~\ref{Stokes matrices}, the block $\mathbf{C}_{\dir}^{[\mathcal{I}' ; \mathcal{I}]}$ is zero when $\alpha = \beta$ and when $\alpha \neq \beta$ with $(\alpha, \beta) \not\in \mathcal{A} \left( \dir \right)$. Finally, Lemma~\ref{lem:stokes-block} states that the blocks with $(\alpha, \beta) \in \mathcal{A} \left( \dir \right)$ are correctly computed.
\end{proof}

Unlike the number of Borel transform matrices and connection-to-Stokes matrices, the number of connection matrices grows quadratically with the number~$N$ of Stokes values. However, many of these matrices can be deduced from one another, so that the number of connection matrices that need to be computed \emph{by numerical analytic continuation} is only $\mathrm{O} (N)$, as discussed in Sec.~\ref{sec:connection} below. In our experience, the computation of these $\mathrm{O} (N)$ core transition matrices is the most computationally expensive step in practice.

\section{Borel transform and connection-to-Stokes matrices}\label{sec:inimaps}

In this section, we discuss the computation of the matrices $\mathbf{B}_{\alpha}$~and~$\mathbf{L}_{\alpha}$ introduced in Sec.~\ref{sec:matrices}. These parts of the algorithm respectively implement the formal Borel transformation of solutions of~$D$ and the application of connection-to-Stokes formulae to the analytic continuation of solutions of~$\Delta$. The main difference with the sketch of these operations in Part~\ref{part:theory} is that we now work explicitly with coordinates in the bases $\mathcal{H}_{[\alpha]}$~and~$\localbasis{\alpha}$.

\subsection{Borel transform matrices}\label{sec:borel}

Recall that we have defined $\mathbf{B}_{\alpha}$ as the matrix of the Borel transformation, with monomials $x^{- j}$ with $j \in \mathbb{Z}_{\geqslant 0}$ mapped to zero, viewed as a linear map
\[ \tilde{\mathcal{B}}_{\alpha} : \ensuremath{\operatorname{FSol}}_0 (D_{[\alpha]}) \rightarrow \ensuremath{\operatorname{Sol}} (\Delta_{[\alpha]}) . \]
To compute this matrix, we need to determine the coordinates in the destination basis~$\localbasis{\alpha}$ of the Borel transform of each element of the source basis~$\mathcal{H}_{[\alpha]}$. By our choice of $\localbasis{\alpha}$, these coordinates are the coefficients of the monomials ${\zeta^{\lambda + \mu}}' \log (\zeta)^{\rho'}$ for $(\lambda, \mu', \rho') \in \mathcal{E} (\Delta_{[\alpha]})$. We have seen in Sec.~\ref{PrelimBorel}, p.~\pageref{eq:formal-Borel}, that the naive formal Borel transform of a monomial is given by
\begin{equation}
  \tilde{\mathcal{B}} (x^{\lambda + 1} \ln (x)^m) = \frac{\mathrm{d}^m}{\mathrm{d} \lambda^m}  \frac{\zeta^{\lambda}}{\Gamma (\lambda + 1)}, \label{eq:recall-Borel}
\end{equation}
interpreted to give~$0$ when $\lambda \in \mathbb{Z}_{< 0}$ and $m = 0$. Thus, given $y \in \ensuremath{\operatorname{FSol}}_0 (D_{[\alpha]})$, the coefficient of a monomial ${\zeta^{\lambda + \mu}}' \log (\zeta)^{\rho'}$ in the series~$\tilde{\mathcal{B}}_{\alpha} (y)$ only depends on a finite number of terms of~$y$, and we can obtain the coordinates of~$\tilde{\mathcal{B}}_{\alpha} (y)$ in the basis~$\localbasis{\alpha}$ by computing those terms of~$y$ and applying formula~\eqref{eq:recall-Borel}. The procedure, which systematizes the one we applied in Example~\ref{ex:running-borel}, is detailed in Algorithm~\ref{algo:borelmat}.

\begin{algorithm}
  Computation of~$\bmat{\alpha}$
  \begin{description}
    \label{algo:borelmat}\item[Input] Two operators $D \in \mathbb{K} [x^{- 1}] [\partial]$ and $\Delta \in \mathbb{K} [\zeta] [ \ddx{\zeta} ]$.
    
    \item[Output] A matrix $\mathbf{B} \in \CComp^{\nu \times k}$ where $\nu =\ensuremath{\operatorname{ord}} (\Delta)$ and $k = \dim \operatorname{FSol} (D)$.
  \end{description}
  \begin{enumerate}
    \item \label{step:borel:struct-D}Compute the structure $\mathcal{E} (D)$ of the local solutions of~$D$ free of exponentials (\cf Definition~\ref{def:structure}).
    
    \item \label{step:borel:struct-Delta}Compute the structure $\mathcal{E} (\Delta)$ of the local solutions of~$\Delta$.
    
    \item Initialize $\mathbf{B} \in \CComp^{\nu \times k}$ to the zero matrix.
    
    \item \label{step:borel:main-loop}For each triple $(\mathcal{L}, \mu, \rho) \in \mathcal{E} (D)$:
    
    \begin{enumerate*}
      \item \label{step:borel:offset}If $\mathcal{L} \not\in \mathbb{Z}$, set $\lambda =\mathcal{L}- 1$ and $\delta = 0$. If $\mathcal{L} \in \mathbb{Z}$ and there is an integer~$\lambda$ such that $(\lambda, 0, 0) \in \mathcal{E} (\Delta)$, then set~$\lambda$ to the unique integer with that property and set $\delta = \lambda -\mathcal{L}+ 1$. Otherwise, continue to the next value of~$\mathcal{L}$.
      
      \item \label{step:borel:expand-sol}Consider the unique solution
      \[ y (x) = x^{\mathcal{L}}  \sum_{r = 0}^{s - 1} \sum_{m \geqslant 0} c_{r, m} x^m \ln (x)^r \in \ensuremath{\operatorname{FSol}} (D) \]
      such that $c_{\rho, \mu} = 1$ and $c_{r, m} = 0$ for $(\mathcal{L}, m, r) \in \mathcal{E} (D) \backslash \{ (\mathcal{L}, \mu, \rho) \}$. Compute the coefficients~$c_{r, \delta + \mu'} \in \CComp$ for all $0 \leqslant r < s$ and $\mu' \geqslant \mu - \delta$ such that $(\lambda, \mu', 0) \in \mathcal{E} (\Delta)$ (\cf~Sec.~\ref{sec:local-bases}).
      
      \item \label{step:borel:inner-loop}For each $(\mu', \rho')$ such that $(\lambda, \mu', \rho') \in \mathcal{E} (\Delta)$ and $\mu' \geqslant \mu - \delta$:
      
      \begin{enumerate*}
        \item Compute the coefficients $\gamma_0, \ldots, \gamma_{s - 1 - \rho'}$ of the expansion
        \begin{equation}
          \frac{1}{\Gamma (\lambda + \mu' + 1 + X)} = \gamma_0 + \gamma_1 X + \cdots + \gamma_{s - 1 - \rho'} X^{s - 1 - \rho'} + \mathrm{O} (X^{s - \rho'}) . \label{eq:ser-rgamma}
        \end{equation}
        \item Set the entry of $\mathbf{B}$ at row~$(\lambda, \mu', \rho')$ and column~$(\mathcal{L}, \mu, \rho)$ to
        \[ \sum_{r = \rho'}^{s - 1} \frac{r!}{\rho' !} \gamma_{r - \rho'} c_{r, \delta + \mu'} . \]
      \end{enumerate*}
    \end{enumerate*}
    
    \item Return $\mathbf{B}$.
  \end{enumerate}
\end{algorithm}

\begin{remark}
  \label{rk:Kgamma}
  \begin{enumerate}
    \item Steps \ref{step:borel:struct-D}, \ref{step:borel:struct-Delta}, and \ref{step:borel:offset} involve exact computations on elements of~$\mathbb{K}$. We present the remaining steps as computations in~$\CComp$ because they \emph{can} be performed numerically. However, the entries of~$\mathbf{B}_{\alpha}$ belong to the field extension $\mathbb{K} (\mathbf{\Gamma})$ where
    \[ \mathbf{\Gamma}= \{ (1 / \Gamma)^{(i)} (a) : a \in \mathbb{K}, i \in \mathbb{N} \} \]
    and could also be computed exactly in terms of the elements of~$\mathbf{\Gamma}$.
    
    \item When $\lambda + \mu'$ is a negative integer and $\rho' = 0$, one has $\gamma_0 = 0$ in~\eqref{eq:ser-rgamma}. If, additionally, $\delta + \mu'$ is equal to~$\mu$, then by definition of the series~$y (x)$ one has $c_{r, \delta + \mu'} = 0$ for all $r > \rho'$, and hence the corresponding iteration of the loop~\eqref{step:borel:inner-loop} can be skipped.
  \end{enumerate}
\end{remark}

\begin{remark}
  \label{rk:borel:redundancies}As presented here, the algorithm contains several easy-to-avoid redundancies.
  \begin{enumerate}
    
    \item For simplicity, we repeat the computation for each pair $(\mu', \rho')$. When $\lambda + 1 + \mu'$ is a multiple exponent of~$\Delta$, though, one can reuse for all pairs $(\mu', \rho')$ with $\rho' > 0$ the coefficients~$\gamma_i$ computed for the pair $(\mu', 0)$.
    
    \item The series expansion of $\Gamma (\lambda + \mu' + 1 + X)$ for $\mu' \neq 0$ can be deduced from that of $\Gamma (\lambda + 1 + X)$ using the functional equation for the gamma function.
  \end{enumerate}
\end{remark}

\begin{proposition}
  \label{prop:algo-borel}Algorithm~\ref{algo:borelmat}, called with $D$ set to $D_{[\alpha]}$ and $\Delta$ set to $\Delta_{[\alpha]}$, computes the Borel transformation matrix $\mathbf{B}_{\alpha}$ defined in Sec.~\ref{sec:matrices}.
\end{proposition}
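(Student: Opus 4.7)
The proof proceeds by computing the columns of~$\bmat{\alpha}$ explicitly and matching the result with the values assigned by the algorithm. Fix a pivot $(\mathcal{L}, \mu, \rho) \in \mathcal{E}(D_{[\alpha]})$ and let $h \in \mathcal{H}_{[\alpha]}$ be the corresponding basis element. By Definition~\ref{def:Bmat}, the column of $\bmat{\alpha}$ indexed by $(\mathcal{L}, \mu, \rho)$ contains the coordinates of $\tilde{\mathcal{B}}_\alpha(h)$ in $\localbasis{\alpha}$; by the echelon property of $\localbasis{\alpha}$, these coordinates are precisely the coefficients of $\zeta^{\lambda+\mu'} \log(\zeta)^{\rho'}$ in $\tilde{\mathcal{B}}_\alpha(h)$ as $(\lambda, \mu', \rho')$ ranges over $\mathcal{E}(\Delta_{[\alpha]})$. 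Step~\ref{step:borel:expand-sol} correctly identifies~$h$: the reduced echelon form of $\mathcal{H}_{[\alpha]}$ forces $c_{\rho, \mu} = 1$ together with $c_{r, m} = 0$ at the other pivots of $\mathcal{E}(D_{[\alpha]})$, and moreover forces $c_{r, m} = 0$ at every position $(r, m)$ preceding $(\rho, \mu)$ in the ordering~\eqref{eq:series-coeff-for-echelon-form}---in particular for all $m < \mu$. The remaining coefficients of~$h$ are then determined by the linear recurrence attached to $D_{[\alpha]}$ (Remark~\ref{rk:unroll}).

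Applying the extended Borel transformation~\eqref{eq:recall-Borel} term by term gives
\[
  \tilde{\mathcal{B}}(h) = \sum_{r, m} c_{r, m} \, \frac{\mathrm{d}^r}{\mathrm{d}\kappa^r}\biggl|_{\kappa = \mathcal{L}+m} \frac{\zeta^{\kappa-1}}{\Gamma(\kappa)}.
\]
Each summand is supported on the single $\zeta$-exponent $\mathcal{L} + m - 1$, so the coefficient of $\zeta^{\lambda+\mu'} \log(\zeta)^{\rho'}$ receives contributions only from the terms with $m$ such that $\mathcal{L} + m - 1 = \lambda + \mu'$, i.e., $m = \delta + \mu'$ with $\delta = \lambda - \mathcal{L} + 1$. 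This is exactly the offset computed in step~\ref{step:borel:offset}: when $\mathcal{L} \notin \mathbb{Z}$, the admissible exponent class forces $\lambda = \mathcal{L} - 1$ and $\delta = 0$; when $\mathcal{L} \in \mathbb{Z}$, the admissible $\lambda$ is the unique integer group leader of $\mathcal{E}(\Delta_{[\alpha]})$, whenever one exists. If no integer group leader exists, then $\tilde{\mathcal{B}}_\alpha(h)$---a solution of $\Delta_{[\alpha]}$ whose formal support consists entirely of integer $\zeta$-exponents---must vanish identically, since no element of $\ensuremath{\operatorname{Sol}}_0(\Delta_{[\alpha]})$ has such support; the algorithm correctly returns a zero column by skipping.

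Expanding the derivative by the Leibniz rule and identifying $(1/\Gamma)^{(i)}(\lambda + 1 + \mu') = i! \, \gamma_i$ from the Taylor expansion~\eqref{eq:ser-rgamma}, only the $j = \rho'$ term of the log polynomial contributes to $\log(\zeta)^{\rho'}$, yielding
\[
  \bigl[\zeta^{\lambda+\mu'} \log(\zeta)^{\rho'}\bigr] \, \tilde{\mathcal{B}}(h) = \sum_{r=\rho'}^{s-1} \binom{r}{\rho'} (r-\rho')! \, \gamma_{r-\rho'} \, c_{r, \delta+\mu'} = \sum_{r=\rho'}^{s-1} \frac{r!}{\rho'!} \, \gamma_{r-\rho'} \, c_{r, \delta+\mu'},
\]
which is exactly the formula assigned in step~\ref{step:borel:inner-loop}. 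The loop bound $\mu' \geq \mu - \delta$, equivalent to $m = \delta + \mu' \geq \mu$, is justified by the vanishing of $c_{r, m}$ for $m < \mu$ established above.

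The subtlety that deserves the most care is the integer branch of step~\ref{step:borel:offset}: uniqueness of the integer group leader follows from Definition~\ref{def:structure}, and the fact that $\tilde{\mathcal{B}}_\alpha(h) \in \ensuremath{\operatorname{Sol}}_0(\Delta_{[\alpha]})$, despite the annihilation of nonpositive integer powers of~$x$ by the naive Borel transform, reduces to Lemma~\ref{idBorel}, which applies because in the prepared setting of Part~\ref{part:theory} the series~$h$ has neither constant nor polar terms. Combined with the coefficient identification above, this yields the correctness of~$\bmat{\alpha}$.
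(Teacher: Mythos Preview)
Your proof is correct and follows essentially the same route as the paper: apply~\eqref{eq:recall-Borel} term by term, expand via the Leibniz rule, and match coefficients against the echelon pivots of~$\localbasis{\alpha}$. Two small points are worth noting. First, your final paragraph justifies $\tilde{\mathcal{B}}_\alpha(h)\in\operatorname{Sol}_0(\Delta_{[\alpha]})$ by invoking the prepared setting of Part~\ref{part:theory}, but the algorithm in Part~\ref{part:algorithm} explicitly works with unprepared solutions (Remark~\ref{remprepared}); the well-definedness of the map is better taken as part of Definition~\ref{def:Bmat}, supported by Sec.~\ref{BorelSol} together with Remark~\ref{remprepared}. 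Second, your argument for the skip case---$\tilde{\mathcal{B}}_\alpha(h)$ has purely integer support yet $\Delta_{[\alpha]}$ has no integer exponent, hence $\tilde{\mathcal{B}}_\alpha(h)=0$---is a clean alternative to the paper's footnote, which instead identifies $h$ explicitly as a trivial solution $x^{-j}$ arising only when $k>\nu$.
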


\begin{proof}
As noted above, computing~$\mathbf{B}_{\alpha}$ is equivalent to computing the coefficients of ${\zeta^{\lambda + \mu}}' \log (\zeta)^{\rho'}$ in $\tilde{\mathcal{B}}_{\alpha} (y)$ for each $y \in \mathcal{Y}_{\alpha}$ and $(\lambda, \mu', \rho') \in \mathcal{E} (\Delta_{[\alpha]})$. Each iteration of the main loop deals with an element of~$\mathcal{Y}_{\alpha}$ and fills the corresponding column of~$\mathbf{B}_{\alpha}$. Since the indicial polynomial of $\Delta_{[\alpha]}$ has degree~$\nu$, the set~$\mathcal{E} (\Delta_{[\alpha]})$ has cardinality~$\nu$, leading to a $\nu \times k$ matrix.

Fix $(\mathcal{L}, \mu, \rho) \in \mathcal{E} (D_{[\alpha]})$, and let $y \in \mathcal{Y}_{\alpha}$ be the corresponding basis element. Due to the choice of~$\mathcal{Y}_{\alpha}$, the series~$y$ belongs to $x^{\mathcal{L}} \mathbb{C} [[x]] [\ln x]$. Its image~$\tilde{\mathcal{B}}_{\alpha} (y)$ is defined as the sum of the Borel transforms of the terms of~$y$ (minus the polar and constant parts). It follows from~\eqref{eq:recall-Borel} that $\mathcal{B} (x^{\lambda + 1} \ln (x)^m)$ is of the form $\zeta^{\lambda} P (\ln \zeta)$ for some polynomial~$P$ of degree at most $m$, and hence $\tilde{\mathcal{B}}_{\alpha} (y)$ lies in $\zeta^{\mathcal{L}+ \mu - 1} \mathbb{C} [[\zeta]] [\ln \zeta]$. In particular, the only triples $(\lambda, \mu', \rho')$ in $\mathcal{E} (\Delta_{[\alpha]})$ that can correspond to a nonzero coordinate are those with $\lambda -\mathcal{L} \in \mathbb{Z}$. There is at most one\footnote{We note in passing that the case where $\mathcal{L} \in \mathbb{Z}$ but there is no suitable $\lambda \in \mathbb{Z}$, where the algorithm directly continues to the next iteration, may only occur when $k > \nu$ and the element $y (x)$ of $\mathcal{Y}_{\alpha}$ of index $(\mathcal{L}, \mu, \rho)$ considered at step~\ref{step:borel:expand-sol} is a trivial solution $x^{- j}$, $j \in \mathbb{Z}_{\geqslant 0}$. Even when $k > \nu$ and $y (x)$ is a trivial solution, though, it may also happen that $\Delta$~has an integer exponent, in which case the fact that $y (x)$ does not contribute to the Stokes matrix is only detected later in the loop body.}~$\lambda$ with this property. Step~\ref{step:borel:offset} of the algorithm computes this~$\lambda$, using the fact that, when~$\lambda \not\in \mathbb{Z}$, one has $\lambda =\mathcal{L}- 1$ by equation \eqref{eq:ind-keqnu}~or~\eqref{eq:ind-kltnu}. After Step~\ref{step:borel:offset}, one has $\lambda + 1 =\mathcal{L}+ \delta$.

The consequence of equation~\eqref{eq:recall-Borel} noted above also implies that the coefficient of $\zeta^{\lambda + \mu'} \log (\zeta)^{\rho'}$ in $\tilde{\mathcal{B}}_{\alpha} (y)$ only depends on the coefficients of $x^{\mathcal{L}+ m} \ln (x)^{\rho}$ in~$y$ where $\mathcal{L}+ m = \lambda + \mu' + 1$ and $\rho \geqslant \rho'$. The coordinates of $\tilde{\mathcal{B}}_{\alpha} (y)$ in the basis~$\localbasis{\alpha}$ are thus determined by the coefficients of $x^{\mathcal{L}+ \delta + \mu'} \ln (x)^r$ in~$y$ for $(\lambda, \mu', \rho') \in \mathcal{E} (\Delta_{[\alpha]})$ and $r \geqslant \rho'$ such that $\delta + \mu' \geqslant \mu$. These coefficients are among the ones computed at step~\ref{step:borel:expand-sol}, since, if $(\lambda, \mu', \rho')$ belongs to $\mathcal{E} (\Delta_{[\alpha]})$, then $(\lambda, \mu', 0)$ too.

Then, the inner loop iterates over the potentially nonzero coordinates of $\tilde{\mathcal{B}}_{\alpha} (y)$. By~\eqref{eq:recall-Borel}, the contribution of $\mathcal{B} (x^{\mathcal{L}+ \delta + \mu'} \ln (x)^r)$ to the coordinate of index $(\lambda, \mu', \rho') \in \mathcal{E} (\Delta_{[\alpha]})$ is the coefficient of $\zeta^{\lambda + \mu'} \ln (\zeta)^{\rho'}$ in
\begin{align*}
  \frac{\mathrm{d}^r}{\mathrm{d} \lambda^r}  \frac{\zeta^{\lambda + \mu'}}{\Gamma (\lambda + \mu' + 1)} & = \sum_{i = 0}^r \binom{r}{i} \zeta^{\lambda + \mu'} \ln (\zeta)^i  \frac{\mathrm{d}^{r - i}}{\mathrm{d} \lambda^{r - i}}  \frac{1}{\Gamma (\lambda + \mu' + 1)}\\
  & = \zeta^{\lambda + \mu'}  \sum_{i = 0}^r \frac{r!}{i!} \gamma_{r - i} \ln (\zeta)^i .
\end{align*}
The body of the loop computes the sum of these contributions for all~$r \geqslant \rho'$ and stores the result in the corresponding entry of the matrix.
\end{proof}

\subsection{Connection-to-Stokes matrices}\label{sec:c2s}

The computation of the Connection-to-Stokes matrices is similar, with the roles of $D_{[\beta]}$~and~$\Delta_{[\beta]}$ interchanged, and Lemma~\ref{lemmeconn} providing the coefficients. We have seen a simple example in Example~\ref{ex:running-c2s}.

\begin{algorithm}
  Computation of $\bfmat{\beta}$
  \begin{description}
    \label{algo:c2s}\item[Input] Two operators $\Delta \in \mathbb{K} [\zeta] [ \ddx{\zeta} ]$ and $D \in \mathbb{K} [x^{- 1}] [\partial]$.
    
    \item[Output] A matrix $\mathbf{L} \in \CComp^{k \times \nu}$ where $k = \dim \ensuremath{\operatorname{FSol}} (D)$ and $\nu =\ensuremath{\operatorname{ord}} (\Delta)$.
  \end{description}
  \begin{enumerate}
    \item Compute the structure $\mathcal{E} (D)$ of the local solutions of~$D$ free of exponentials (\cf~Sec.~\ref{def:structure}).
    
    \item \label{step:c2s:struct-Borel}Compute the structure $\mathcal{E} (\Delta)$ of the local solutions of~$\Delta$.
    
    \item Initialize $\mathbf{L} \in \CComp^{k \times \nu}$ to the zero matrix.
    
    \item \label{step:c2s:mainloop}For each triple $(\lambda, \mu, \rho) \in \mathcal{E} (\Delta)$:
    
    \begin{enumerate*}
      \item \label{step:c2s:offset}If $\lambda \not\in \mathbb{Z}$, then set $\mathcal{L}= \lambda + 1$ and $\delta = 0$. If $\lambda \in \mathbb{Z}$ and there is an integer~$\mathcal{L}$ such that $(\mathcal{L}, 0, 0) \in \mathcal{E} (D)$, then set~$\mathcal{L}$ to the unique integer with that property and set $\delta =\mathcal{L}- \lambda - 1$. Otherwise, continue to the next value of~$\lambda$.
      
      \item \label{step:c2s:expand-sol}Consider the unique solution
      \[ \hat{y} (\zeta) = \zeta^{\lambda}  \sum_{r = 0}^{s - 1} \sum_{m \geqslant 0} c_{r, m} \zeta^m \ln (\zeta)^r \in \ensuremath{\operatorname{Sol}}_0 (\Delta) \]
      such that $c_{\rho, \mu} = 1$ and $c_{r, m} = 0$ for $(\lambda, m, r) \in \mathcal{E} (\Delta) \backslash \{ (\lambda, \mu, \rho) \}$. Compute the coefficients~$c_{r, \delta + \mu'}$, $0 \leqslant r < s$, for all $\mu' \geqslant \mu - \delta$ such that $(\mathcal{L}, \mu', 0) \in \mathcal{E} (D)$ (\cf~Sec.~\ref{sec:local-bases}).
      
      \item For each $(\mu', \rho')$ such that $(\mathcal{L}, \mu', \rho') \in \mathcal{E} (D)$ and $\mu' \geqslant \mu$:
      
      \begin{enumerate*}
        \item \label{step:c2s:expand}Compute the coefficients $a_0, \ldots, a_{s - 1 - \rho'}$ of the expansion
        \begin{equation}
          \frac{\mathrm{e}^{- i \pi (\mathcal{L}+ \mu' + X)}}{\Gamma (1 -\mathcal{L}- \mu' - X)} = a_0 + a_1 X + \cdots + a_{s - 1 - \rho'} X^{s - 1 - \rho'} + \mathrm{O} (X^{s - \rho'}) . \label{eq:c2s-series}
        \end{equation}
        \item Set the entry of $\mathbf{L}$ at row~$(\mathcal{L}+ \mu', \rho')$ and column $(\lambda + \mu, \rho)$ to
        \[ 2 \pi i \sum_{r = \rho'}^{s - 1} \frac{r!}{\rho' !} a_{r - \rho'} c_{r, \delta + \mu'} . \]
      \end{enumerate*}
    \end{enumerate*}
    
    \item Return $\mathbf{L}$.
  \end{enumerate}
\end{algorithm}

\begin{remark}
  Step~\ref{step:c2s:expand} may deserve some elaboration. In the formula~\eqref{eq:c2s-series}, both $\mathrm{e}^{- i \pi (1 + \lambda + X)}$ and $1 / \Gamma (- \lambda - X)$ are analytic functions of~$X$ for every~$\lambda$. The required series expansion can be computed by computing the expansions to order $s - 1 - \rho'$ of these two functions (as truncated series with coefficients in~$\CComp$) and multiplying them together. We refer to Johansson~\cite{Johansson2023} for a thorough discussion of the rigorous arbitrary-precision computation of the reciprocal gamma function and its derivatives.
  
  As pointed out to us by F.~Johansson, our approach effectively reduces the computation of Hankel integrals of solutions of linear differential equations to evaluations of the gamma function, but some methods used to evaluate the gamma function are in turn based on the numerical computation of Hankel integrals of solutions of differential equations~(\emph{e.g.}, \cite{SchmelzerTrefethen2007})---and others on that of connection constants~\cite[Sec.~6]{Johansson2023}. This may hint at a more direct approach to the approximation of Stokes multipliers that would be at least as efficient as the present one. As it is, though, state-of-the-art algorithms for the direct evaluation of the Hankel integrals we are computing have slightly worse complexity than the algorithm going through the gamma function when all parameters except the precision are fixed (see Remark~\ref{rk:complexity} below). From a practical perspective, being able to take advantage of existing well-optimised implementations of~$\Gamma$ and related functions is highly beneficial.
\end{remark}

\begin{remark}
  Comments analogous to Remark~\ref{rk:borel:redundancies} apply here: $\mathcal{E} (\Delta)$ can be deduced from $\mathcal{E} (D)$; the series expansion~\eqref{eq:c2s-series} only needs to be computed once for each $(\mathcal{L}, \mu')$, and part of that computation can be shared among all iterations corresponding to the same~$\mathcal{L}$ thanks to the functional equation for the gamma function. Additionally, the value of $\Gamma (- \lambda - \mu' - X)$ does not depend on~$\dir$.
\end{remark}

\begin{proposition}
  \label{prop:algo-c2s}Algorithm~\ref{algo:c2s}, called with $\Delta$ set to $\Delta_{[\beta]}$ and $D$ set to $D_{[\beta]}$, computes the connection-to-Stokes matrix~$\mathbf{L}_{\beta}$ defined in Sec.~\ref{sec:matrices}.
\end{proposition}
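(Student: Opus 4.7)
The plan is to mirror the structure of the proof of Proposition~\ref{prop:algo-borel}: fix a column index $(\lambda,\mu,\rho) \in \mathcal{E}(\Delta_{[\beta]})$, express the image $\tilde L_\beta(\hat y)$ of the corresponding basis element of $\localbasis{\beta}$ term by term using Lemma~\ref{lemmeconn}, and match the resulting coordinates in the basis $\mathcal{H}_{[\beta]}$ against the formula written into $\mathbf L$ by the algorithm.

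Concretely, write $\hat y(\zeta) = \zeta^\lambda \sum_{r=0}^{s-1}\sum_{m\geq 0} c_{r,m} \zeta^m \ln^r(\zeta)$ as at Step~\ref{step:c2s:expand-sol}. By linearity, termwise integration against the Hankel contour $\gamma_0$ (justified by the at-most-exponential growth of $\hat y$ at infinity along $\gamma_0$ and the decay of $\rme^{-\zeta/x}$) gives
\[ \tilde L_\beta(\hat y)(x) = \sum_{r,m} c_{r,m}\, J_{\lambda+m,\,r}(x). \]
Lemma~\ref{lemmeconn}~(\ref{item:lemmecon:logs}), combined with the Leibniz rule applied to $I_\lambda(x) = 2\pi\rmi\, f(\lambda)\, x^{1+\lambda}$ where $f(\nu) = \rme^{-\rmi\pi(1+\nu)}/\Gamma(-\nu)$, yields
\[ J_{\lambda+m,\,r}(x) = 2\pi\rmi\, x^{1+\lambda+m} \sum_{i=0}^{r} \binom{r}{i}\, f^{(r-i)}(\lambda+m)\, \ln^i(x). \]
With $\mathcal{L}+\mu' = 1+\lambda+m$, i.e.\ $m = \delta + \mu'$ for $\delta$ as chosen at Step~\ref{step:c2s:offset}, the series expansion computed at Step~\ref{step:c2s:expand} is exactly the Taylor expansion of $f(\mathcal{L}+\mu'-1+X)$ at $X=0$, so $f^{(r-i)}(\lambda+m) = (r-i)!\, a_{r-i}$. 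Collecting the coefficient of $x^{\mathcal{L}+\mu'} \ln^{\rho'}(x)$ after summing over $(r,m)$ gives $2\pi\rmi \sum_{r=\rho'}^{s-1} \frac{r!}{\rho'!}\, a_{r-\rho'}\, c_{r,\delta+\mu'}$, which is the entry stored at row $(\mathcal L+\mu',\rho')$ and column $(\lambda+\mu,\rho)$ by the algorithm.

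What remains is to justify the indexing and, in particular, the branches of Step~\ref{step:c2s:offset}. When $\lambda \notin \mathbb{Z}$, the relation $\mathcal{L} = \lambda+1$ and $\delta = 0$ are forced by equation~\eqref{eq:ind-keqnu}, and the expansion of $\tilde L_\beta(\hat y)$ lies in $x^{\mathcal{L}}\mathbb{C}[[x]][\ln x]$, hence is captured by a unique non-integer group leader in $\mathcal{E}(D_{[\beta]})$. When $\lambda \in \mathbb{Z}$, all exponents appearing in $\tilde L_\beta(\hat y)$ are positive integers, and the unique integer group leader $\mathcal{L}$ of $D_{[\beta]}$—if it exists—is the one identified by the algorithm, with $\delta$ recording the offset. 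In the remaining subcase ($\lambda\in\mathbb{Z}$ with no integer group leader in $\mathcal{E}(D_{[\beta]})$), the column must be zero: this happens only when the class of $\lambda$ consists of trivial exponents with no logarithm (otherwise $D_{[\beta]}$ would possess the corresponding positive integer exponent, contradicting the absence of an integer group leader), so $\hat y$ extends analytically across $0$ and Lemma~\ref{lemmeconn}~(\ref{item:lemmecon:analytic}) gives $\tilde L_\beta(\hat y) = 0$.

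The main obstacle I expect is not the central algebraic identification, which is an essentially mechanical application of Lemma~\ref{lemmeconn} and the Leibniz rule, but rather this careful bookkeeping of the two echelon structures $\mathcal{E}(\Delta_{[\beta]})$ and $\mathcal{E}(D_{[\beta]})$ across the Borel–Laplace correspondence. In particular, verifying that the branches of Step~\ref{step:c2s:offset}—and the vanishing of $\tilde L_\beta(\hat y)$ in the skipped case—are exactly consistent with the structure theory of Section~\ref{BorelSol} (free vs.\ trivial exponents, and the logarithmic content described in Proposition~\ref{logmicrosol}) is the delicate point of the argument.
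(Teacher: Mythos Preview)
Your proposal is correct and follows essentially the same approach as the paper's own proof: fix a basis element of $\localbasis{\beta}$, expand the Hankel integral term by term via Lemma~\ref{lemmeconn}, apply the Leibniz rule to obtain the contribution of each monomial, and then handle the case distinction at Step~\ref{step:c2s:offset}, including the vanishing of the column when $\lambda\in\mathbb{Z}$ and $D_{[\beta]}$ has no integer exponent. The paper likewise opens with ``the proof is similar to that of Proposition~\ref{prop:algo-borel}'' and carries out the same computation; your justification of the skipped case via Proposition~\ref{logmicrosol} is slightly more explicit than the paper's, but the argument is the same.
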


\begin{proof}
The proof is similar to that of Proposition~\ref{prop:algo-borel}.  The algo\-rithm~fills each column of~$\mathbf{L}$ with the coefficients of $x^{\mathcal{L}+ \mu'} \ln (x)^{\rho'}$ where ${(\mathcal{L}, \mu', \rho') \in \mathcal{E} (D_{[\beta]})}$ in the asymptotic expansion as $x \rightarrow 0$ in the direction $\arg x = \dir$, for a given element~$\hat{y}$ of~$\localbasis{\beta} $, of the integral
\[ \int_{\gamma_{0}} \hat{y} (\zeta) \mathrm{e}^{- \zeta / x} \mathrm{d} \zeta, \]
where $\gamma_{0}$ is the same Hankel type path (depending on~$\dir$) as in equation~\eqref{StokesLaplace}.

Fix $(\lambda, \mu, \rho) \in \mathcal{E} (\Delta_{[\beta]})$ and consider the corresponding~$\hat{y}$.

When $\lambda$~is an integer and $\lambda + \mu +\mathbb{N}$ contains no exponent of~$\Delta_{[\beta]}$ except possibly trivial ones (cf.\ p.~\pageref{par:trivial-exponents}), the function $\hat{y}$ is analytic at~$0$ and the integral vanishes, so that the whole column is zero. This is detected at step~\ref{step:c2s:offset} and in this case the loop iteration is skipped. Otherwise, step~\ref{step:c2s:offset} sets~$\delta \in \mathbb{Z}$ so that $\mathcal{L}= \lambda + 1 + \delta$ (beware that the relation between $\mathcal{L}$, $\lambda$, and $\delta$ is not the same as in the proof of Proposition~\ref{prop:algo-borel}). Integrating the series expansion of~$\hat{y} (\zeta)$ termwise yields the asymptotic expansion
\begin{equation}
  \int_{\gamma_{0}} \hat{y} (\zeta) \mathrm{e}^{- \zeta / x} \mathrm{d} \zeta \sim \sum_{r = 0}^{s - 1} \sum_{m \geqslant \mu} c_{r, m}  \int_{\Gamma_{\dir}} \zeta^{\lambda + m} \ln (\zeta)^r \mathrm{e}^{- \zeta / x} \mathrm{d} \zeta, \label{eq:c2s:expansion}
\end{equation}
where, by Lemma~\ref{lemmeconn}, one has
\begin{align*}
  \int_{\Gamma_{\dir}} \zeta^{\lambda + m} \ln (\zeta)^r \mathrm{e}^{- \zeta / x} \mathrm{d} \zeta & = \diff{\lambda}{r}  \left( \frac{2 \pi \mathrm{i}}{\Gamma (- \lambda - m)} \mathrm{e}^{- \mathrm{i} \pi (\lambda + m + 1)} x^{\lambda + m + 1} \right)\\
  & = 2 \pi \mathrm{i} \sum_{j = 0}^r \binom{r}{j}  \left( \diff{\lambda}{r - j}  \frac{\mathrm{e}^{- \mathrm{i} \pi (\lambda + m + 1)}}{\Gamma (- \lambda - m)} \right) x^{\lambda + m + 1} \ln (x)^j
\end{align*}
for $\arg x = \dir$.

The only terms of the expansion~\eqref{eq:c2s:expansion} contributing to the coefficient of $x^{\mathcal{L}+ \mu'} \ln (x)^{\rho'}$ in the result are those with $\lambda + m + 1 =\mathcal{L}+ \mu'$, that is, $m = \delta + \mu'$, and $r \geqslant \rho'$. The corresponding coefficients~$c_{r, m}$ are computed at step~\ref{step:c2s:expand-sol}. Then the inner loop computes the cofactors
\[ a_{r - j} = \frac{1}{(r - j) !}  \diff{\lambda}{r - j}  \frac{\mathrm{e}^{- \mathrm{i} \pi (\lambda + m + 1)}}{\Gamma (- \lambda - m)} \]
and collects the terms
\[ 2 \pi \mathrm{i} c_{r, m}  \binom{r}{j}  [(r - j) !a_{r - j}] x^{\lambda + m + 1} \ln (x)^j \]
with $m = \delta + \mu'$ and $j = \rho'$.
\end{proof}

See Example~\ref{ex:polya} (p.~\pageref{ex:polya}) for an example of the computation of~$\bfmat{\beta}$ in the presence of logarithms.

\section{Computing all connection matrices}\label{sec:connection}

We turn to the computation of connection matrices, as required by step~\ref{step:connect} of Algorithm~\ref{algo:stokes}. This is the main step of the algorithm from several perspectives. In particular, it is the most computationally expensive step in practice, and the only one that may introduce transcendental constants not expressible in terms of values at points in~$\mathbb{K}$ of the functions $\mathrm{e}^x$, $\mathrm{e}^{\mathrm{i} \pi x}$, and $(1 / \Gamma)^{(i)} (x)$.

Recall that the connection matrix~$\tmat{\alpha}{\beta}$ is defined as the matrix of the analytic continuation map from~$\alpha$ to~$\beta$, along a path~$\gamma$ specified in Sec.~\ref{sec:matrices}, of solutions of the operator~$\Delta$, expressed in the fixed bases of solutions $\localbasis{\alpha}$ and $\localbasis{\beta}$. In principle, $\tmat{\alpha}{\beta}$~can be computed using the procedure sketched in Sec.~\ref{anacont}.
This procedure can also be viewed as a way of realising a rigorous ODE solver supporting generalised initial values at regular singular points (and any other ODE solver satisfying these requirements could also be used in its place).
Numerical methods for differential equations that follow this general scheme are called Taylor methods.
Efficient algorithms are available for computing the sums of series solutions of linear differential equations with rational coefficients (\emph{e.g.}, \cite{BostanChyzakOllivierSalvySchostSedoglavic2007,ChudnovskyChudnovsky1990,vanderHoeven2010}), leading to Taylor methods well suited to arbitrary-precision computations. The error bounds needed to make the computation rigorous can be derived, essentially, from classical proofs of the Cauchy existence theorem for solutions of differential equations; see~\cite{Mezzarobba2019} for details and tighter bounds. Our code uses the ODE solver available as part of \codestar{ore\_algebra}~\cite{Mezzarobba2016}, which implements a method of this family.

Algorithm~\ref{algo:stokes} requires us to compute~$\tmat{\alpha}{\beta}$ for all pairs $(\alpha, \beta)$ of distinct Stokes values. Using the above method separately for each pair leads however to highly redundant computations. Indeed, given three points $p, q, r$, analytic continuation from $p$~to~$r$ is the same as analytic continuation from $p$~to~$q$ and then from $q$~to~$r$, up to correcting factors associated with the singular points crossed when deforming the corresponding paths into one another. When the triangle $p \, q \, r$ does not contain any singular point in its interior or the interior of its edges, the correcting factors are simply \emph{local monodromy matrices} around some of the vertices. We now give details of an algorithm that uses numerical ODE solving to compute the matrices~$\tmat{\alpha}{\beta}$ for $(\alpha, \beta)$ ranging over a \emph{spanning tree} of~$\Sigma$, and recovers all remaining connection matrices by composing known ones and applying correcting factors as appropriate.

\paragraph{Local monodromy matrices.}Let $\alpha$ be a Stokes value of~$D$. The monodromy map sending a local solution of~$\Delta_{[\alpha]}$ at the origin to its analytic continuation along a simple positive loop around the origin is a linear endomorphism of $\ensuremath{\operatorname{Sol}}_0 (\Delta_{[\alpha]})$.

\begin{definition}
  The \emph{local monodromy matrix} at~$\alpha \in \mathbb{C}$ is the matrix $\mmat{\alpha} \in \mathbb{C}^{\nu \times \nu}$ in the basis~$\localbasis{\alpha}$ of the monodromy map around~$\alpha$.
\end{definition}

Since~$\alpha$ is a regular singular point of~$\Delta_{[\alpha]}$, the local monodromy at~$\alpha$ coincides with the formal monodromy.
It can be computed from the exponent structure~$\mathcal{E} (\Delta_{[\alpha]})$ and selected monomials of the series~$\localbasis{\alpha}$ by a reasoning similar to that of Sec.~\ref{sec:inimaps}, but simpler.

\begin{lemma}
  \label{lem:local-monodromy}The entry at row $(\lambda, \mu, \rho) \in \mathcal{E} (\Delta_{[\alpha]})$ of the column of~$\mmat{\alpha}$ associated with an element
  \begin{equation}
    \hat{y}_{\alpha, i} (\zeta) = \zeta^{\lambda}  \sum_{r = 0}^{s - 1} \sum_{m \geqslant 0} c_{r, m} \zeta^m \ln (\zeta)^r \in \localbasis{\alpha} \label{eq:monodromy-sol}
  \end{equation}
  of the local basis at~$\alpha$ is given by
  \begin{equation}
    \mathrm{e}^{2 \pi i \lambda}  \sum_{\delta = 0}^{s - 1 - \rho} c_{\rho + \delta, \mu}  \binom{\rho + \delta}{\delta}  (2 \pi i)^{\delta} . \label{eq:local-monodromy}
  \end{equation}
  The entries corresponding to exponents not in $\lambda +\mathbb{Z}$ are zero.
\end{lemma}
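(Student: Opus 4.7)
The plan is to apply the monodromy map termwise to the expansion~\eqref{eq:monodromy-sol}, expand the resulting binomial in $\log \zeta$, and then read off coordinates in $\localbasis{\alpha}$ by exploiting the reduced echelon form.

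Concretely, analytic continuation along a simple positive loop around~$\zeta = 0$ sends $\zeta^{\lambda} \mapsto e^{2\pi i \lambda} \zeta^{\lambda}$ and $\log \zeta \mapsto \log \zeta + 2 \pi i$, while leaving the convergent power series in $\zeta$ unchanged. Applied to \eqref{eq:monodromy-sol}, this yields
\begin{equation*}
 \hat y_{\alpha,i}^{\mathrm{mon}}(\zeta)
 = e^{2\pi i \lambda} \zeta^{\lambda} \sum_{r=0}^{s-1} \sum_{m \geq 0} c_{r,m} \zeta^{m} (\log \zeta + 2\pi i)^{r}.
\end{equation*}
Expanding $(\log \zeta + 2\pi i)^{r} = \sum_{j=0}^{r} \binom{r}{j} (2\pi i)^{r-j} \log(\zeta)^{j}$ and swapping the summation over $r$ with that over $j$ (so that $r = j + \delta$ with $0 \leq \delta \leq s-1-j$) gives
\begin{equation*}
 \hat y_{\alpha,i}^{\mathrm{mon}}(\zeta)
 = e^{2\pi i \lambda} \zeta^{\lambda} \sum_{m \geq 0} \sum_{j=0}^{s-1} \zeta^{m} \log(\zeta)^{j} \sum_{\delta=0}^{s-1-j} c_{j+\delta,m} \binom{j+\delta}{\delta} (2\pi i)^{\delta}.
\end{equation*}
In particular, $\hat y_{\alpha,i}^{\mathrm{mon}}$ lies in $\zeta^{\lambda} \mathbb{C}[[\zeta]][\log \zeta]$, which already proves the last sentence of the lemma: any basis element of $\localbasis{\alpha}$ indexed by $(\lambda', \mu', \rho') \in \mathcal{E}(\Delta_{[\alpha]})$ with $\lambda' \notin \lambda + \mathbb{Z}$ (equivalently, any group leader $\lambda' \neq \lambda$) cannot appear in the expansion of~$\hat y_{\alpha,i}^{\mathrm{mon}}$.

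It remains to extract the coordinate at index $(\lambda, \mu, \rho) \in \mathcal{E}(\Delta_{[\alpha]})$. Here we use the echelon form property of~$\localbasis{\alpha}$: among the basis elements with group leader~$\lambda$, the one indexed by $(\lambda, \mu, \rho)$ is normalised so that its coefficient of $\zeta^{\lambda+\mu} \log(\zeta)^{\rho}$ is~$1$ while its coefficients at every other position $(\lambda, \mu', \rho') \in \mathcal{E}(\Delta_{[\alpha]})$ vanish. Consequently, for any series $y \in \zeta^{\lambda} \mathbb{C}[[\zeta]][\log \zeta]$ expressed on this basis, the coordinate of $y$ on $\hat y_{\alpha, (\lambda, \mu, \rho)}$ equals the coefficient of $\zeta^{\lambda+\mu} \log(\zeta)^{\rho}$ in $y$. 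Applying this observation to $y = \hat y_{\alpha,i}^{\mathrm{mon}}$ and reading off that coefficient from the display above (with $m = \mu$, $j = \rho$) gives exactly~\eqref{eq:local-monodromy}, using $\binom{\rho+\delta}{\rho} = \binom{\rho+\delta}{\delta}$.

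No real obstacle is expected: the whole argument is a direct computation, and the only subtle point is the justification in the last paragraph, namely that the reduced echelon normalisation of~$\localbasis{\alpha}$ makes the coordinates of an arbitrary series in $\zeta^{\lambda} \mathbb{C}[[\zeta]][\log \zeta]$ read off directly as the coefficients of the pivot monomials, with no further triangular back-substitution required.
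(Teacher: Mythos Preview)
Your proposal is correct and follows exactly the same approach as the paper: apply the monodromy map termwise to obtain $\mathrm{e}^{2\pi i\lambda}\zeta^{\lambda}\sum c_{r,m}\zeta^{m}(\ln\zeta+2\pi i)^{r}$, then extract the coefficient of $\zeta^{\lambda+\mu}\ln(\zeta)^{\rho}$. You are simply more explicit than the paper about why the reduced echelon form of~$\localbasis{\alpha}$ lets one read off coordinates as pivot-monomial coefficients, and about the vanishing for $\lambda'\notin\lambda+\mathbb{Z}$.
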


\begin{proof}
The image of $\hat{y}_{\alpha, i}$ by the monodromy map is equal to
\[ \mathrm{e}^{2 \pi i \lambda} \zeta^{\lambda}  \sum_{r = 0}^{s - 1} \sum_{m \geqslant 0} c_{r, m} \zeta^m  (\ln (\zeta) + 2 \pi i)^r . \]
Equation~\eqref{eq:local-monodromy} follows by extracting the coefficient of $\zeta^{\lambda + m} \ln (\zeta)^{\rho}$.
\end{proof}

\paragraph{Composition rules.}Let $\Sigma$ denote the set of singular points of~$\Delta$.

\begin{definition}[Void triangle]
  \label{def:vacant} We say that a triangle $p \, q \, r$ is \emph{void} when its vertices do not lie on a single line and its closure does not intersect $\Sigma \backslash \{ p, q, r \}$.
\end{definition}

Given a void triangle $p \, q \, r$, we need to understand how to compute $\tmat{p}{r}$ from $\tmat{p}{q}$ and $\tmat{q}{r}$. For this it is convenient to introduce the following geometric predicates.

\begin{definition}[Bottom-to-top order]
  For two points $\alpha, \beta \in \mathbb{C}$, we write $\alpha \bttlt \beta$ when
  $\ensuremath{\operatorname{Im}} (\alpha) <\ensuremath{\operatorname{Im}} (\beta)$
  or
  $\ensuremath{\operatorname{Im}} (\alpha) =\ensuremath{\operatorname{Im}} (\beta) \wedge \ensuremath{\operatorname{Re}} (\alpha) >\ensuremath{\operatorname{Re}} (\beta)$.
\end{definition}

Equivalently, we have $\alpha \bttlt \beta$ when $\alpha \neq \beta$ and $0 < \arg (\beta - \alpha) \leqslant \pi$. The definition in the case of $\ensuremath{\operatorname{Im}} (\alpha) =\ensuremath{\operatorname{Im}} (\beta)$ is motivated by the fact that,
with the convention $-\pi < \arg(\xi - \alpha) \leq \pi$,
a point~$\beta$ that lies to the left of~$\alpha$ is `above' the local branch cut.
The relation~$\bttlt$ is a strict total order.

\begin{definition}[Orientation]
  For any three points $p, q, r \in \mathbb{C}$ not all lying on a same line, set $\ensuremath{\operatorname{orient}} (p, q, r) = + 1$ when $p, q, r$ are ordered counterclockwise around a point of the interior of their convex hull, that is, when $r$~lies to the left of the oriented segment~$(p, q)$, and $\ensuremath{\operatorname{orient}} (p, q, r) = - 1$ otherwise.
\end{definition}

Transition matrices associated with the edges of a void triangle can be composed according to the following rule.

\begin{lemma}
  \label{lem:close-triangle}Let $t = p \, q \, r$ be a void triangle. The connection matrices along the edges of~$t$ satisfy
  \begin{equation}
    \tmat{p}{r} =\mathbf{V}_r \mathbf{T}_{q, r} \mathbf{V}_q   \tmat{p}{q} \mathbf{V}_p \label{eq:compose-tmat}
  \end{equation}
  where
  \begin{align}
    \mathbf{V}_p & = \left\{\begin{array}{ll}
      \mmat{p} & \text{if $q \bttlt p \bttlt r$ and $\ensuremath{\operatorname{orient}} (p, q, r) = - 1$,}\\
      \mmat{p}^{- 1} & \text{if $r \bttlt p \bttlt q$ and $\ensuremath{\operatorname{orient}} (p, q, r) = + 1$,}\\
      \mathbf{I} & \text{otherwise},
    \end{array}\right.  \label{eq:Vp}\\
    \mathbf{V}_q & = \left\{\begin{array}{ll}
      \mmat{q}^{- 1} & \text{if $q = \min_{\bttlt} (p, q, r)$}\\
      & \text{\phantom{if }or $\left( p \bttlt q \bttlt r \text{ or } r \bttlt q \bttlt p \right)$}\\
      & \text{\phantom{if or }and $\ensuremath{\operatorname{orient}} (p, q, r) = + 1$},\\
      \mathbf{I} & \text{otherwise},
    \end{array}\right.  \label{eq:Vq}\\
    \mathbf{V}_r & = \left\{\begin{array}{ll}
      \mmat{r} & \text{if $q \bttlt r \bttlt p$ and $\ensuremath{\operatorname{orient}} (p, q, r) = - 1$,}\\
      \mmat{r}^{- 1} & \text{if $p \bttlt r \bttlt q$ and $\ensuremath{\operatorname{orient}} (p, q, r) = + 1$,}\\
      \mathbf{I} & \text{otherwise} .
    \end{array}\right.  \label{eq:Vr}
  \end{align}
\end{lemma}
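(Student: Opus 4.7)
The plan is to reinterpret \eqref{eq:compose-tmat} as a homotopy identity between two paths in the punctured plane $\mathbb{C} \setminus \Sigma$, where $\Sigma$ denotes the set of singular points of $\Delta$, and then to verify, by a case analysis, that the matrices $\mathbf V_p$, $\mathbf V_q$, $\mathbf V_r$ are exactly the corrections needed to establish this homotopy.

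First I would note that each $\tmat{\alpha}{\beta}$ is by definition the matrix of analytic continuation along the straight-line path $\gamma_{\alpha,\beta}$, while each of the candidate matrices $\mathbf I$, $\mmat{v}$, $\mmat{v}^{-1}$ is the matrix in $\localbasis{v}$ of a local path $\nu_v$ near~$v$ (namely, an arc staying in the principal sheet $-\pi < \arg(\zeta-v) < \pi$, possibly composed with an integer power of a small positively-oriented loop around~$v$). Thus the right-hand side of \eqref{eq:compose-tmat} is the matrix of analytic continuation along the concatenated path $\Gamma := \nu_p \cdot \gamma_{p,q} \cdot \nu_q \cdot \gamma_{q,r} \cdot \nu_r$, and the identity amounts to saying that $\Gamma$ and $\gamma_{p,r}$ are homotopic rel endpoints in $\mathbb{C} \setminus \Sigma$.

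Next I would use the void-triangle hypothesis: since no singular point of~$\Delta$ lies in the closed triangle $\overline{pqr}$ other than possibly the vertices, one can enclose both~$\Gamma$ and $\gamma_{p,r}$ in a small, simply connected neighbourhood~$U$ of $\overline{pqr}$ meeting $\Sigma$ only at $\{p,q,r\}$. The fundamental group of $U \setminus \{p,q,r\}$ is free on the three positively-oriented small loops $\ell_p, \ell_q, \ell_r$, so $\Gamma$ is homotopic to $\gamma_{p,r}$ if and only if the winding numbers of $\Gamma \cdot \gamma_{p,r}^{-1}$ around $p$, $q$ and $r$ all vanish. Decomposing each $\nu_v$ as a principal-sheet arc $\tilde\nu_v$ followed by $\ell_v^{a_v}$, the arc $\tilde\nu_v$ is determined by its two endpoints (which both lie in the principal sheet of $\localbasis{v}$ by construction), while the integer $a_v$ is forced to be the opposite of the winding of the loop $\tilde\nu_p \cdot \gamma_{p,q} \cdot \tilde\nu_q \cdot \gamma_{q,r} \cdot \tilde\nu_r \cdot \gamma_{p,r}^{-1}$ around~$v$, so that $\mathbf V_v = \mmat{v}^{a_v}$.

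It remains to compute these three winding numbers. Because the perturbation factor $\eta - \mathrm{i}\eta^2$ has argument just below~$0$, both endpoints of $\gamma_{\alpha,\beta}$ sit at argument $\arg(\beta-\alpha) - \varepsilon$ from $\alpha$ and from $\beta$ respectively, always strictly inside the principal sheet, so the directions at which each of the three segments enters or leaves a small disk around a given vertex are prescribed exactly. The total angle swept around~$v$ by $\tilde\nu_v$ together with the portions of the remaining two segments that see~$v$ from close by then splits into a small number of cases controlled by $\ensuremath{\operatorname{orient}}(p,q,r) \in \{\pm 1\}$ (which decides whether $\tilde\nu_v$ sweeps through the interior or the exterior of the triangle) and by the $\bttlt$-order of $p,q,r$ (which decides, for each vertex, on which side of its branch cut the opposite edge of the triangle lies). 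The main obstacle is the bookkeeping of these cases; I would reduce the workload by exploiting two symmetries---complex conjugation, which flips $\ensuremath{\operatorname{orient}}$ and reverses $\bttlt$ while preserving the principal-sheet conventions, and the inversion of \eqref{eq:compose-tmat} together with the swap $p \leftrightarrow r$---to bring the essentially different situations down to a handful. In each of these, a direct angle computation yields the integer $a_v$ prescribed by \eqref{eq:Vp}--\eqref{eq:Vr}.
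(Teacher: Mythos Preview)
Your overall strategy---reduce the identity to a homotopy of paths in $\mathbb{C}\setminus\Sigma$ and determine the monodromy corrections at each vertex by an angle computation---is sound in spirit and is indeed the geometric content of the lemma. But there is a real gap in the middle step. You claim that since $\pi_1(U\setminus\{p,q,r\})$ is free on $\ell_p,\ell_q,\ell_r$, the loop $\Gamma\cdot\gamma_{p,r}^{-1}$ is trivial if and only if its three winding numbers vanish. This implication is false: winding numbers only see the abelianisation $\mathbb{Z}^3$, and any commutator such as $\ell_p\ell_q\ell_p^{-1}\ell_q^{-1}$ has vanishing winding numbers yet is nontrivial. So even after you pick $a_p,a_q,a_r$ to kill the winding numbers, you have not shown the resulting loop is contractible. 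What actually makes things work is that the loop has a very constrained shape (it traces the triangle boundary once, and each insertion $\ell_v^{a_v}$ sits at the unique place where the path is near~$v$), but establishing this is the crux of the proof, not a consequence of freeness. Your proposed symmetries also need more care: complex conjugation sends the perturbation $\eta-i\eta^2$ to $\eta+i\eta^2$ and hence does not preserve the definition of~$\tmat{\alpha}{\beta}$, and the swap $p\leftrightarrow r$ combined with inversion runs into the fact that $\tmat{r}{p}\ne\tmat{p}{r}^{-1}$ in general (Lemma~\ref{lem:inverse-connection}).

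The paper sidesteps the free-group issue by a different device. It introduces auxiliary matrices $\tmatstar{\alpha}{\beta}$ attached to paths $\gamma^\ast$ shifted slightly \emph{upward} rather than to the right of the directed edge; these satisfy the clean relation $\tmatstar{\beta}{\alpha}=(\tmatstar{\alpha}{\beta})^{-1}$. Because all three $\gamma^\ast$ paths sit on the same side of the boundary and all branch cuts point left, a single geometric observation shows that the product $\tmatstar{u}{v}\tmatstar{w}{u}\tmatstar{v}{w}$ around a void triangle equals $\mmat{v}^\tau$ for the $\bttlt$-middle vertex~$v$ and $\tau\in\{0,1\}$ determined by the orientation. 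The six cases in \eqref{eq:Vp}--\eqref{eq:Vr} then drop out mechanically by translating back via $\tmatstar{\alpha}{\beta}=\tmat{\alpha}{\beta}$ or $\mmat{\beta}^{-1}\tmat{\alpha}{\beta}$ according to whether $\alpha\bttlt\beta$ or not.
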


\begin{figure}
  \begin{tabular}{ccc}
    \includegraphics{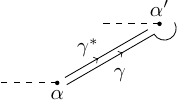} &  & \includegraphics{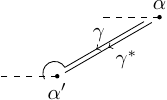}\\
    \includegraphics{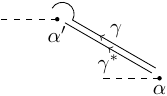} &  & \includegraphics{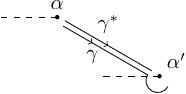}\\
    $\alpha \bttlt \beta$ & \quad & $\beta \bttlt \alpha$
  \end{tabular}
  \caption{\label{fig:close-triangle}The paths $\gamma$ (with hooks added for clarity) and $\gamma^{\ast}$}
\end{figure}

\begin{proof}
In order to obtain the composition rules \eqref{eq:compose-tmat}--\eqref{eq:Vr} for the matrices~$\tmat{\alpha}{\beta}$, we first define variants~$\tmatstar{\alpha}{\beta}$ of these matrices which satisfy simpler relations, and express~$\tmatstar{\alpha}{\beta}$ in terms of~$\tmat{\alpha}{\beta}$. Then we prove analogues of \eqref{eq:compose-tmat}--\eqref{eq:Vr} in the case of~$\tmatstar{\alpha}{\beta}$, and deduce \eqref{eq:compose-tmat}--\eqref{eq:Vr} from these analogues.

To define~$\tmatstar{\alpha}{\beta}$, let $\alpha, \beta \in \mathbb{C}$ be two points such that $\ensuremath{\operatorname{Im}} (\alpha) \neq \ensuremath{\operatorname{Im}} (\beta)$ and the open segment~$(\alpha, \beta)$ does not contain any singular point of~$\Delta$. Then~$\tmatstar{\alpha}{\beta}$ is the connection matrix defined in a similar way to~$\tmat{\alpha}{\beta}$ (\cf~Sec.~\ref{sec:matrices}) with the analytic continuation path~$\gamma$ replaced by
\[ \gamma^{\ast} = [(1 - \eta) \alpha + \eta \beta + i \eta^2, (1 - \eta) \beta + \eta \alpha + i \eta^2], \]
that is, by a segment contained in $(\alpha, \beta)$, shifted slightly up to avoid any ambiguity in the definition when $\ensuremath{\operatorname{Im}} (\alpha) =\ensuremath{\operatorname{Im}} (\beta)$. Observe that we have $\tmatstar{\beta}{\alpha} = ( \tmatstar{\alpha}{\beta} )^{- 1}$.

Extending~$\gamma^{\ast}$ by a small circular arc connecting $\gamma^{\ast} (0)$ to $\gamma (0)$ contained in the analyticity domain of the local basis~$\localbasis{\alpha}$ does not change the matrix~$\tmatstar{\alpha}{\beta}$. Similarly, one may connect $\gamma^{\ast} (1)$ to $\gamma (1)$ without leaving the analyticity domain of~$\localbasis{\beta}$ and this does not change~$\tmatstar{\alpha}{\beta}$. These extensions result in a path $\gamma^{\ast}_{\ensuremath{\operatorname{ext}}}$ with the same endpoints as~$\gamma$. When $\alpha \bttlt \beta$, the paths $\gamma$~and $\gamma^{\ast}_{\ensuremath{\operatorname{ext}}}$ are homotopic in $\mathbb{C}\backslash \Sigma$ (Figure~\ref{fig:close-triangle}, left), so that $\tmatstar{\alpha}{\beta}$ is equal to~$\tmat{\alpha}{\beta}$. In contrast, when $\beta \bttlt \alpha$, the loop $(\gamma^{\ast}_{\ensuremath{\operatorname{ext}}})^{- 1} \gamma$ goes around~$\beta$ once in the positive direction (Figure~\ref{fig:close-triangle}, right), and hence one has $( \tmatstar{\alpha}{\beta} )^{- 1}  \tmat{\alpha}{\beta} =\mathbf{M}_{\beta}$. We conclude that
\begin{equation}
  \tmatstar{\alpha}{\beta} = \left\{\begin{array}{ll}
    \tmat{\alpha}{\beta}, & \alpha \bttlt \beta,\\
    \mmat{\beta}^{- 1}  \tmat{\alpha}{\beta}, & \beta \bttlt \alpha .
  \end{array}\right. \label{eq:tmattotmat*}
\end{equation}

Now consider the composition of the matrices $\tmatstar{p}{q}$, $\tmat{q}{r}$, and $\tmat{r}{p}$. Again, we can connect the corresponding paths~$\gamma^{\ast}$ by small arcs contained in the analyticity domains of local solution bases at $p, q, r$ to get a closed loop. When the vertex~$v \in \{ p, q, r \}$ lying vertically between the other two is located to the left of the opposite side of~$t$ (\emph{i.e.}, when the local branch cut $v + (- \infty, 0]$ points outside of~$t$), since $t \cap \Sigma \subseteq \{ p, q, r \}$, the loop is contractible in $\mathbb{C}\backslash \Sigma$. The same conclusion holds in the limiting case where the bottom edge of~$t$ is horizontal, so that the middle vertex according to $\bttlt$ is the left endpoint of that edge. Otherwise, that is, when the middle vertex in $\bttlt$ order lies to the right of the opposite edge, including when the top edge is horizontal, the loop goes around that vertex once, positively or negatively depending on $\ensuremath{\operatorname{orient}} (p, q, r)$. Letting $u, v, w$ denote the points $p, q, r$ ordered so that $u \bttlt v \bttlt w$, we thus have
\[ \tmatstar{u}{v}  \tmatstar{w}{u}  \tmatstar{v}{w} = \mmat{v}^{\tau}, \quad \tau := \left\{\begin{array}{ll}
     0, & \ensuremath{\operatorname{orient}} (u, v, w) = - 1,\\
     1, & \ensuremath{\operatorname{orient}} (u, v, w) = + 1.
   \end{array}\right. \]

As $\tmatstar{\beta}{\alpha} = ( \tmatstar{\alpha}{\beta})^{- 1}$, this implies
\begin{align*}
     \tmatstar{u}{v} & = \mmat{v}^{\tau}  \tmatstar{w}{v}  \tmatstar{u}{w},  & \tmatstar{v}{u} & = \tmatstar{w}{u}  \tmatstar{v}{w}  \mmat{v}^{- \tau},\\
     \tmatstar{w}{u} & = \tmatstar{v}{u}  \mmat{v}^{\tau}  \tmatstar{w}{v},  & \tmatstar{u}{w} & = \tmatstar{v}{w}  \mmat{v}^{- \tau}  \tmatstar{u}{v},\\
     \tmatstar{v}{w} & = \tmatstar{u}{w}  \tmatstar{v}{u}  \mmat{v}^{\tau},  & \tmatstar{w}{v} & = \mmat{v}^{- \tau}  \tmatstar{u}{v}  \tmatstar{w}{u},
\end{align*}
and therefore, using~\eqref{eq:tmattotmat*},
\begin{align*}
     \tmat{u}{v} & = \mmat{v}^{\tau - 1}  \tmat{w}{v}  \tmat{u}{w}, & \mmat{u}^{- 1}  \tmat{v}{u} & = \mmat{u}^{- 1}  \tmat{w}{u}  \tmat{v}{w}  \mmat{v}^{- \tau},\\
     \mmat{u}^{- 1}  \tmat{w}{u} & = \mmat{u}^{- 1}  \tmat{v}{u}  \mmat{v}^{\tau - 1}  \tmat{w}{v},  & \tmat{u}{w} & = \mmat{w}^{- 1}  \tmat{v}{w}  \mmat{v}^{- \tau}  \tmat{u}{v},\\
     \tmat{v}{w} & = \tmat{u}{w}  \mmat{u}^{- 1}  \tmat{v}{u}  \mmat{v}^{\tau},  & \mmat{v}^{- 1}  \tmat{w}{v} & = \mmat{v}^{- \tau}  \tmat{u}{v}  \mmat{u}^{- 1}  \tmat{w}{u} .
\end{align*}
The expressions \eqref{eq:Vp}--\eqref{eq:Vr} follow by discussing according to the matching between $(u, v, w)$ and $(p, q, r)$. For example, in the case of~$\mathbf{V}_p$:
\begin{itemize}
  \item when $(p, r) = (v, w)$, that is, $q \bttlt p \bttlt r$, one has $\mathbf{V}_p = \mmat{v}^{\tau} = \mmat{p}^{\tau}$ and $\ensuremath{\operatorname{orient}} (u, v, w) = -\ensuremath{\operatorname{orient}} (p, q, r)$;
  
  \item when $(p, r) = (v, u)$, i.e., $r \bttlt p \bttlt q$, one has $\mathbf{V}_p = \mmat{p}^{- \tau}$, and in this case $\ensuremath{\operatorname{orient}} (u, v, w) =\ensuremath{\operatorname{orient}} (p, q, r)$;
  
  \item in all other cases, one has $\mathbf{V}_p =\mathbf{I}$. \qedhere
\end{itemize}
\end{proof}

We can deduce $\tmat{q}{p}$ from $\tmat{p}{q}$ using similar arguments.

\begin{lemma}
  \label{lem:inverse-connection}For two points $p \bttlt q$ such that $(p, q) \cap \Sigma = \varnothing$, one has $\tmat{q}{p} = \mmat{p}  \tmat{p}{q}^{- 1}$.
\end{lemma}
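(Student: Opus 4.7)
The cleanest route I would take is to reuse the auxiliary connection matrices $\tmatstar{\alpha}{\beta}$ introduced in the proof of Lemma~\ref{lem:close-triangle}. They are defined whenever $\ensuremath{\operatorname{Im}}(\alpha) \neq \ensuremath{\operatorname{Im}}(\beta)$ and $(\alpha, \beta) \cap \Sigma = \varnothing$ by analytic continuation along a segment that depends only on the unordered pair $\{\alpha, \beta\}$, which forces the symmetric identity $\tmatstar{\beta}{\alpha} = (\tmatstar{\alpha}{\beta})^{-1}$. Their relation to the usual $\tmat{}{}$ matrices is recorded in formula~\eqref{eq:tmattotmat*}.

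Specialising \eqref{eq:tmattotmat*} to our setting with $p \bttlt q$ gives $\tmatstar{p}{q} = \tmat{p}{q}$ on the one hand, and $\tmatstar{q}{p} = \mmat{p}^{-1}\,\tmat{q}{p}$ on the other (the latter being the ``else'' branch of~\eqref{eq:tmattotmat*}, corresponding precisely to our assumption $p \bttlt q$). Inverting the first identity and comparing with the second yields $\tmat{p}{q}^{-1} = \mmat{p}^{-1}\,\tmat{q}{p}$, which rearranges to the claimed $\tmat{q}{p} = \mmat{p}\,\tmat{p}{q}^{-1}$.

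As a more self-contained alternative, one could concatenate the two straight-line paths defining $\tmat{p}{q}$ and~$\tmat{q}{p}$, closing them into a loop~$L$ by short arcs in the slit disks at~$p$ and~$q$, along which analytic continuation is trivial because $\localbasis{p}$ and $\localbasis{q}$ are single-valued there. The monodromy of~$L$ read in the basis~$\localbasis{p}$ at the starting point is then $\tmat{q}{p}\,\tmat{p}{q}$, and the remaining task is a winding-number computation: thanks to the asymmetric perturbation $\eta - i\eta^2$ combined with $p \bttlt q$, the loop~$L$ winds once positively around~$p$ and not at all around~$q$, so, in view of $(p, q) \cap \Sigma = \varnothing$, it represents a positive generator around~$p$ in $\pi_1(\mathbb{C} \setminus \Sigma)$ and therefore has monodromy~$\mmat{p}$. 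The main technical nuisance of this direct approach is arranging the closing arcs so that the resulting loop is simple enough for a clean winding computation (the most natural circular arcs self-intersect the straight portions); this is essentially the same bookkeeping handled once and for all in the derivation of~\eqref{eq:tmattotmat*}, which is why the $\tmatstar{}{}$ route is preferable.
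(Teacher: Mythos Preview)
Your argument is correct. Your \emph{secondary} ``self-contained alternative'' is exactly the paper's proof: it closes the two defining paths into a loop by short arcs at $p$ and $q$ and observes that this loop winds once counterclockwise around~$p$ and not around~$q$. Your \emph{primary} approach is different but equally valid: you simply read off the result from the identity $\tmatstar{\beta}{\alpha} = (\tmatstar{\alpha}{\beta})^{-1}$ together with \eqref{eq:tmattotmat*}, thereby recycling the bookkeeping already carried out in the proof of Lemma~\ref{lem:close-triangle} rather than redoing it. This is a perfectly clean shortcut; the only caveat is that the paper states the definition of $\tmatstar{\alpha}{\beta}$ under the hypothesis $\ensuremath{\operatorname{Im}}(\alpha)\neq\ensuremath{\operatorname{Im}}(\beta)$, whereas Lemma~\ref{lem:inverse-connection} also covers the horizontal case $\ensuremath{\operatorname{Im}}(p)=\ensuremath{\operatorname{Im}}(q)$ with $\ensuremath{\operatorname{Re}}(p)>\ensuremath{\operatorname{Re}}(q)$. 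The derivation of \eqref{eq:tmattotmat*} goes through unchanged in that case (the $+i\eta^2$ shift is there precisely to handle it), so this is a cosmetic point rather than a gap, but it would be worth a sentence.
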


\begin{proof}
The loop obtained by connecting the paths~$\gamma$ associated with $\tmat{p}{q}$ and $\tmat{q}{p}$ (in this order) by small arcs around at $p$~and~$q$ not passing to the left of the corresponding points goes around~$p$ once, counterclockwise.
\end{proof}

Finally, it will be useful to handle the case where~$q$ lies between $p$~and~$r$. The corresponding rule is a direct consequence of the definition of $\tmat{\alpha}{\beta}$.

\begin{lemma}
  \label{lem:compose-flat}Let $p \, q \, r$ be a flat triangle with $q \in (p, r)$. Then one has $\tmat{p}{r} = \tmat{q}{r}  \tmat{p}{q}$.
\end{lemma}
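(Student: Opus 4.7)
The result follows from the path-independence of analytic continuation up to homotopy in $\mathbb{C}\setminus\Sigma$. Write $\theta = \arg(r-p)$, so that, since $q \in (p,r)$, one also has $\theta = \arg(q-p) = \arg(r-q)$. Let $\gamma_{pq}, \gamma_{qr}, \gamma_{pr}$ be the straight-line paths associated with $\tmat{p}{q}$, $\tmat{q}{r}$, and $\tmat{p}{r}$ respectively, as prescribed in Section~\ref{sec:matrices}. Because the three paths are constructed with the same common direction $\theta$ (up to sign), they all lie in the same thin strip obtained from $[p,r]$ by the translation $(\eta - \rmi\eta^{2})(r-p)$, for a small enough common $\eta > 0$.

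The first step is to concatenate $\gamma_{pq}$ and $\gamma_{qr}$. Their respective endpoint and starting point are
\[
  q + (\eta-\rmi\eta^{2})(q-p)
  \quad\text{and}\quad
  q + (\eta-\rmi\eta^{2})(r-q),
\]
which are both of the form $q + t\, \rme^{\rmi\theta}(1-\rmi\eta)$ with $t>0$ small (using that $(q-p)/|q-p| = (r-q)/|r-q| = \rme^{\rmi\theta}$). For small $\eta$, the two points therefore lie in a disc around $q$ on which $\localbasis{q}$ is analytic, and on the \emph{same} side of the local branch cut $q + \mathbb{R}_{\leq 0}$: indeed, the perturbation factor $1 - \rmi\eta$ places them strictly below the line through $p,q,r$, which ensures that the short arc from one to the other can be taken inside the slit disc. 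I would therefore glue $\gamma_{pq}$ and $\gamma_{qr}$ by such an arc to obtain a continuous path $\tilde\gamma$ from near $p$ to near $r$ along which the analytic continuation realises $\tmat{q}{r}\,\tmat{p}{q}$.

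The heart of the proof is then a homotopy argument: for all $\eta$ small enough, both $\tilde\gamma$ and $\gamma_{pr}$ lie in the same thin strip to the right of $[p,r]$, which contains no point of $\Sigma$ other than $p, q, r$ and those singular points lying on the open segment $(p,r)$. Both paths pass to the right of every singular point of $\Delta$ on $(p,r)$: this is true of $\gamma_{pr}$ by definition of $\tmat{p}{r}$, and of $\tilde\gamma$ because each of $\gamma_{pq}$ and $\gamma_{qr}$ passes to the right of singular points on $(p,q)$ and $(q,r)$ respectively, while the connecting arc at $q$ passes to the right of $q$ itself. It follows that $\tilde\gamma$ and $\gamma_{pr}$ are homotopic in $\mathbb{C}\setminus\Sigma$ with fixed endpoints (which can be taken equal after an innocuous extension by arcs inside the analyticity discs of $\localbasis{p}$ and $\localbasis{r}$), so the two analytic continuations agree, and $\tmat{p}{r} = \tmat{q}{r}\,\tmat{p}{q}$.

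The only real subtlety to check carefully is that the small arc at $q$ used to connect $\gamma_{pq}$ and $\gamma_{qr}$ does not cross the branch cut of $\localbasis{q}$; this is what the sign of the imaginary perturbation $-\rmi\eta^{2}$ in the definition of the paths guarantees, as both endpoints have strictly negative imaginary part when expressed in the local coordinate $\zeta = \xi - q$ rotated by $-\theta$.
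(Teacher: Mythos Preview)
Your proof is correct and is precisely the detailed unpacking of what the paper dismisses in one line (``a direct consequence of the definition of~$\tmat{\alpha}{\beta}$''). One small simplification: once you have observed that the endpoint of $\gamma_{pq}$ and the starting point of $\gamma_{qr}$ both lie on the \emph{same ray} $q + \mathbb{R}_{>0}\,\rme^{\rmi\theta}(1-\rmi\eta)$ from~$q$, the connecting segment along that ray automatically avoids the branch cut $q+\mathbb{R}_{\leq 0}$, so the rotated-coordinate argument in your final paragraph is unnecessary (and does not by itself settle the question, since the branch cut is not horizontal in the rotated frame).
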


\paragraph{Algorithm.}The above discussion leads us to Algorithm~\ref{algo:connect-all} for computing all connection matrices simultaneously.

\begin{algorithm}
  Connection matrices
  \begin{description}
    \label{algo:connect-all}\item[Input] An operator $\Delta \in \mathbb{K} [\zeta] [ \ddx{\zeta} ]$ with regular singular points.
    
    \item[Output] A family $(\mathbf{T}_{\alpha, \beta})_{\alpha, \beta \in \Sigma}$ of matrices $\mathbf{T}_{\alpha, \beta} \in \CComp^{\nu \times \nu}$, where $\Sigma$~is the set of singular points of~$\Delta$ and
    $\nu =\ensuremath{\operatorname{ord}} (\Delta)$.
  \end{description}
  \begin{enumerate}
    \item \label{step:spanning-tree}Compute a Euclidean minimum spanning tree $\mathcal{S}$ of the singular points of~$\Delta$.
    
    \item Initialize a mapping $\left( \tmat{\alpha}{\beta} \right)_{\alpha, \beta \in \Sigma}$ by setting $\tmat{\alpha}{\beta} = \bot$ for $\alpha \neq \beta$ and $\tmat{\alpha}{\alpha} =\mathbf{I}$ for all $\alpha$.
    
    \item For each $\alpha \in \Sigma$, compute the matrix $\mmat{\alpha}$ using~\eqref{eq:local-monodromy}.
    
    \item \label{step:numeric-tmat}For each edge $\{ \alpha, \beta \} \in \mathcal{S}$, with $\alpha \bttlt \beta$:
    
    \begin{enumerate*}
      \item \label{step:numeric-tmat-direct}Compute $\tmat{\alpha}{\beta}$ by solving the equation $\Delta (\hat{y}) = 0$ numerically.
      
      \item \label{step:invert}Set $\tmat{\beta}{\alpha} = \mmat{\alpha}  \tmat{\alpha}{\beta}^{- 1}$.
    \end{enumerate*}
    
    \item \label{step:connect-all:nonempty-triangles}Compute the set $\mathcal{V}$ of triples $\{ p, q, r \} \subseteq \Sigma$ such that the triangle $p \, q \, r$ is void (Definition~\ref{def:vacant}) and the set $\mathcal{F}$ of triples $\{ p, q, r \} \subseteq \Sigma$ such that $p, q, r$ are aligned and all distinct. Set $\mathcal{T}=\mathcal{V} \cup \mathcal{F}$.
    
    \item \label{step:loop}While there exist $\alpha, \beta \in \Sigma$ such that $\tmat{\alpha}{\beta} = \bot$:
    
    \begin{enumerate*}
      \item \label{step:choose-triangle}Choose a triple $t = \{ p, q, r \} \in \mathcal{T}$ such that $\tmat{p}{q} \neq \bot$ and $\tmat{q}{r} \neq \bot$.
      
      \item If $t \in \mathcal{F}$:
      
      \begin{enumerate*}
        \item \label{step:compose-flat}If $q \in (p, r)$, set $\tmat{p}{r} = \tmat{q}{r}  \tmat{p}{q}$ and $\tmat{r}{p} = \tmat{q}{p}  \tmat{r}{q}$.
        
        \item If $p \in (q, r)$, set $\tmat{p}{r} = \tmat{q}{r}  \tmat{q}{p}^{- 1}$ and $\tmat{r}{p} = \tmat{p}{q}^{- 1}  \tmat{r}{q}$.
        
        \item \label{step:compose-flat-end}If $r \in (p, q)$, set $\tmat{r}{p} = \tmat{q}{p}  \tmat{q}{r}^{- 1}$ and $\tmat{p}{r} = \tmat{r}{q}^{- 1}  \tmat{p}{q}$.
      \end{enumerate*}
      
      \item Else:
      
      \begin{enumerate*}
        \item Compare $p, q, r$ according to $\bttlt$.
        
        \item Compute $\ensuremath{\operatorname{orient}} (p, q, r)$.
        
        \item \label{step:close-triangle}Compute $\tmat{p}{r}$ and $\tmat{r}{p}$ using~\eqref{eq:compose-tmat}.
      \end{enumerate*}
      
      \item \label{step:update-queue}Remove from $\mathcal{T}$ all triples $\{ p, q', r \}$ such that $\tmat{p}{q'} \neq \bot$ and $\tmat{q'}{r} \neq \bot$.
    \end{enumerate*}
    
    \item Return $(\mathbf{T}_{\alpha, \beta})_{\alpha, \beta \in \Sigma}$.
  \end{enumerate}
\end{algorithm}

\begin{example}[Continued from Example~\ref{ex:running-intro}]
  In the running example introduced in Sec.~\ref{sec:running-example}, our implementation chooses for~$\mathcal{S}$ the spanning tree depicted on Figure~\ref{fig:stokes-values}. The six core connection matrices associated with the edges of the spanning tree oriented in the bottom-to-top order are computed by calling the ODE solver from \codestar{ore\_algebra} with a tolerance slightly lower than the one given on input to \codestar{stokes\_dict}. For instance, the computation of $\tmat{6}{0}$ can be reproduced separately with the command
\begin{alltt}
sage: diffop.borel_transform().numerical_transition_matrix(
....:                                         [6,0], 1e-50/26)
\end{alltt}
  \noindent and yields a $6 \times 6$ matrix whose first two columns are
  \[
    \tiny
    \begin{matrix}
       {}[0.207 \ldots 79 \pm 9.31 \cdot 10^{- 56}] + [- 0.007 \ldots 24 \pm 8.03 \cdot 10^{- 57}] \mathrm{i}                & [126.0 \ldots 22 \pm 2.50 \cdot 10^{- 52}]   \\
       {}[- 0.455 \ldots 77 \pm 3.68 \cdot 10^{- 56}] + [\pm 9.06 \cdot 10^{- 58}] \mathrm{i}                                & [- 200.0 \ldots 33 \pm 3.34 \cdot 10^{- 52}] \\
       {}[- 0.192 \ldots 97 \pm 1.99 \cdot 10^{- 55}] + [- 0.018 \ldots 49 \pm 2.45 \cdot 10^{- 56}] \mathrm{i}              & [- 62.47 \ldots 78 \pm 2.65 \cdot 10^{- 53}] \\
       {}[0.243 \ldots 32 \pm 3.94 \cdot 10^{- 55}] + [0.024 \ldots 86 \pm 3.26 \cdot 10^{- 56}] \mathrm{i}                  & [96.43 \ldots 32 \pm 4.14 \cdot 10^{- 53}]   \\
       {}[0.034 \ldots 24 \pm 1.02 \cdot 10^{- 56}] + [\pm 3.38 \cdot 10^{- 60}] \mathrm{i}                                  & [9.263 \ldots 28 \pm 3.96 \cdot 10^{- 54}]   \\
       {}[- 0.001 \ldots 52 \pm 4.45 \cdot 10^{- 57}] + [4.037 \ldots 94 \cdot 10^{- 5} \pm 5.46 \cdot 10^{- 60}] \mathrm{i} & [- 0.336 \ldots 15 \pm 1.74 \cdot 10^{- 55}] \\
     \end{matrix} \]
  We note that, in this special case, we could have used the fact that the operator~$\Delta$ is invariant by $\xi \mapsto \mathrm{e}^{\mathrm{i} \pi / 3} \xi$ to obtain a full set of core connection matrices from a single one instead of computing them independently. As a similar but more widely applicable optimisation, since $\Delta$ has coefficients in~$\mathbb{R} [\xi]$, we could have deduced the connection matrices associated with the edges of~$\mathcal{S}$ leading to Stokes values in the open upper half plane from the others by complex conjugation. Our code currently performs neither of these reductions automatically.
  
  The connection matrices along the edges of~$\mathcal{S}$ with the reverse orientation are computed using Lemma~\ref{lem:inverse-connection}. For instance, $\tmat{0}{6}$ is computed as $\mmat{6}  \tmat{6}{0}^{- 1}$ where $\mmat{6}$ is identity except for the first column which reads $(1, - 67 \mathrm{i} \pi / 17496, - 9347 \mathrm{i} \pi / 2519424, \ldots)^{\mathrm{T}}$, and in the first row of the product we find the coefficient $a = [- 141.140 \ldots 09 \pm 9.36 \cdot 10^{- 41}]$ which made an appearance in Example~\ref{ex:running-stokes-block}. We note in passing a significant loss of accuracy due to the inversion of~$\tmat{6}{0}$ in interval arithmetic. Similarly, $\tmat{- 6}{0}$ is computed as $\mmat{0}  \tmat{0}{- 6}^{- 1}$ where $\mmat{0}$ is identity except for a subdiagonal entry equal to~$2 \pi \mathrm{i}$ in the third column.
  
  The remaining connection matrices are computed in the main loop, using Lemmas \ref{lem:close-triangle} and~\ref{lem:compose-flat}.
  Let $\theta = \rme^{\rmi \pi / 3}$.
  At first one can compute, among others,
  \begin{itemize}
    \item $\tmat{- 6}{6} = \tmat{0}{6}  \tmat{- 6, 0}{}$ and $\tmat{6}{- 6} = \tmat{0}{- 6}  \tmat{6}{0}$ by Lemma~\ref{lem:compose-flat},
    
    \item $\tmat{0}{6 \theta} = \tmat{6}{6 \theta^{-1}}  \tmat{0}{6}$ by Lemma~\ref{lem:close-triangle} in the case $r \bttlt q \bttlt p$ and $\ensuremath{\operatorname{orient}} (p, q, r) = - 1$,
    
    \item $\tmat{6 \theta^{-1}}{0} = \tmat{6}{0}  \mmat{6}^{- 1}  \tmat{6 \theta^{-1}}{6}$ by Lemma~\ref{lem:close-triangle} with $p \bttlt q \bttlt r$ and $\ensuremath{\operatorname{orient}} (p, q, r) = + 1$,
    
    \item $\tmat{6 \theta}{6 \theta^{-1}} = \tmat{6}{6 \theta^{-1}}  \tmat{6 \theta}{6}$ and $\tmat{6 \theta^{-1}}{6 \theta} = \tmat{6}{6 \theta}  \mmat{6}^{- 1}  \tmat{6 \theta^{-1}}{6}$ by the same two rules.
  \end{itemize}
  Once $\tmat{6 \theta^{-1}}{0}$ is known, one can use it to compute, for instance
  \[ \tmat{6 \theta^{-1}}{6 \theta^{-2}} = \mmat{6 \theta^{-2}}^{- 1}  \tmat{0}{6 \theta^{-2}}  \tmat{6 \theta^{-1}}{0} \]
  by the case $p \bttlt r \bttlt q$ and $\ensuremath{\operatorname{orient}} (p, q, r) = + 1$ of Lemma~\ref{lem:close-triangle}. The algorithm continues in this way until all $42$ possible connection matrices are known.
\end{example}

\begin{proposition}
  \label{prop:algo-connect-all}Assume that the algorithm used at step~\ref{step:numeric-tmat-direct} computes~$\tmat{\alpha}{\beta} \in \CComp^{\nu \times \nu}$ given $\alpha$, $\beta$, and~$\Delta$. Then Algorithm~\ref{algo:connect-all}, when called on the Borel transform~$\Delta$ of the operator~$D$, computes the family $(\mathbf{T}_{\alpha, \beta})_{\alpha, \beta \in \Sigma}$ of connection matrices defined in Sec.~\ref{sec:matrices}.
\end{proposition}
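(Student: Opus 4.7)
The plan splits into three parts: correctness of each matrix the algorithm assigns, termination of the main loop, and completeness, i.e., the fact that the loop cannot exit with some entry still equal to $\bot$. Correctness proceeds by induction on the order of assignments. The initial $\tmat{\alpha}{\alpha}=\mathbf{I}$ is trivial; step~\ref{step:numeric-tmat-direct} is correct by assumption on the numerical subroutine; step~\ref{step:invert} applies Lemma~\ref{lem:inverse-connection}, whose hypothesis $(\alpha,\beta) \cap \Sigma = \varnothing$ holds because any singular value strictly between two endpoints of a Euclidean MST edge would split that edge into two strictly shorter ones, contradicting the MST property; steps~\ref{step:compose-flat}--\ref{step:compose-flat-end} combine Lemma~\ref{lem:compose-flat} with Lemma~\ref{lem:inverse-connection} for the reverse direction, the three branches matching the three positions of the middle vertex on the line; and step~\ref{step:close-triangle} applies Lemma~\ref{lem:close-triangle}, whose voidness hypothesis is built into membership in $\mathcal{V}$. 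Termination is immediate: the chosen triple at step~\ref{step:choose-triangle} always satisfies the removal criterion of step~\ref{step:update-queue}, so $|\mathcal{T}|$ strictly decreases at every iteration and the loop runs at most $|\mathcal{T}|$ times.

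The main obstacle is completeness. Let $K \subseteq \binom{\Sigma}{2}$ denote the symmetric set of currently known pairs: $K$ starts as the edge set of~$\mathcal{S}$ and grows monotonically. The core geometric claim to establish is that if $K$ contains a spanning tree of~$\Sigma$ and $K \neq \binom{\Sigma}{2}$, then there exists a triple $\{p,q,r\} \in \mathcal{T}$ with $\{p,q\},\{q,r\} \in K$ and $\{p,r\} \notin K$, which then gives step~\ref{step:choose-triangle} a usable triple. To prove it, pick a pair $\{p,r\} \notin K$ minimal for the lexicographic order on $(|\Sigma \cap [p,r]|,\,|p-r|)$. If some $q \in (p,r) \cap \Sigma$ exists, then $\{p,q,r\} \in \mathcal{F} \subseteq \mathcal{T}$; the closed sub-segments $[p,q]$ and $[q,r]$ are strictly smaller in the ordering, so by minimality both $\{p,q\}$ and $\{q,r\}$ already lie in $K$. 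Otherwise $(p,r) \cap \Sigma = \varnothing$; pick $s \in \Sigma \setminus \{p,r\}$ minimising the distance from $s$ to the line through $p$~and~$r$. The triangle $p\,s\,r$ is then void, because any other singular value in its closure would be strictly closer to that line, so $\{p,s,r\} \in \mathcal{V} \subseteq \mathcal{T}$.

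The hardest point is then to guarantee that $\{p,s\}$ and $\{s,r\}$ are already in $K$ in this second case, since the lexicographic measure above only decreases on the collinear side of the argument. The plan is to refine the well-founded measure into a combined quantity such as a lexicographic triple whose first components count singular values in the closed segment and in the open half-plane delimited by the line $pr$ on the side of~$s$, followed by Euclidean distance as a tie-breaker, so that passing from $\{p,r\}$ to $\{p,s\}$ or $\{s,r\}$ strictly decreases it in both Case~1 and Case~2. A careful design of this measure, together with an induction to handle the degenerate possibility that $s$ is equidistant to the line on both sides, should suffice. Once the geometric lemma holds, step~\ref{step:choose-triangle} always finds a usable triple as long as some pair is unknown, so the loop exits only when every $\tmat{\alpha}{\beta}$ has been assigned, and by the correctness part each such entry equals the connection matrix of Definition~\ref{def:Tmat}.
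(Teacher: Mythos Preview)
Your treatment of correctness and termination is sound, and the observation that a Euclidean MST edge cannot contain a third singular point in its interior is a nice detail that the paper leaves implicit. The gap is in completeness. You correctly isolate the difficulty in Case~2: the closest point~$s$ to the line $pr$ may satisfy $|p-s|,|s-r|>|p-r|$, and the segments $[p,s]$, $[s,r]$ may contain more singular points than $[p,r]$, so your initial lexicographic measure does not decrease. Your proposed fix---adjoining a count of singular points in the open half-plane on the side of~$s$---does not obviously work either: when you pass from $\{p,r\}$ to $\{p,s\}$ the supporting line changes, there is no inclusion between the two half-planes, and the count need not drop. You acknowledge this by writing that ``a careful design of this measure\ldots\ should suffice,'' but that is precisely the heart of the argument and cannot be left as a promissory note.

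The paper avoids any well-founded measure entirely by using triangulation combinatorics. It completes the spanning tree~$\mathcal{S}$ to a full triangulation~$\mathcal{C}$ of~$\Sigma$ and argues in three stages: (i)~if some edge of~$\mathcal{C}$ is still unknown, a postfix traversal of the tree dual to $\mathcal{C}\setminus\mathcal{S}$ locates a face of~$\mathcal{C}$ with exactly one unknown edge; (ii)~once every edge of~$\mathcal{C}$ is known, any void triangle with an unknown edge is reached from~$\mathcal{C}$ by a sequence of flips, each flip creating a usable triple; (iii)~remaining unknown pairs have a third singular point on their segment and are handled through~$\mathcal{F}$. The connectedness of the flip graph of triangulations replaces the decreasing invariant you are looking for.
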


\begin{proof}
Step~\ref{step:numeric-tmat-direct} makes sense and can be implemented as sketched above because $\Delta$~has regular singular points. The fact that the matrices $\tmat{\alpha}{\beta}$ computed at steps~\ref{step:numeric-tmat-direct}, \ref{step:invert}, \ref{step:compose-flat}--\ref{step:compose-flat-end}, and~\ref{step:close-triangle} are correct follows, respectively, from the assumption, Lemma~\ref{lem:inverse-connection}, Lemma~\ref{lem:compose-flat}, and Lemma~\ref{lem:close-triangle}. It remains to prove that the algorithm computes~$\tmat{\alpha}{\beta}$ for \emph{all}~($\alpha, \beta$) .

Let $\mathcal{T}_0 =\mathcal{V} \cup \mathcal{F}$ be the initial contents of the set~$\mathcal{T}$. If $| \Sigma | \leqslant 2$, then~$\mathcal{T}_0 = \varnothing$ and the algorithm terminates without going through the main loop. In this case all~$\tmat{\alpha}{\beta}$ have been computed. Assume from now on that $| \Sigma | \geqslant 3$. To start with, observe that every time the algorithm tests the loop condition at step~\ref{step:loop}:
\begin{itemize}
  \item for all $\alpha, \beta$, one has $\tmat{\alpha}{\beta} = \bot$ if and only if $\tmat{\beta}{\alpha} = \bot$, and the pairs $\{ \alpha, \beta \}$ such that $\tmat{\alpha}{\beta} \neq \bot$ are the edges of an undirected connected graph~$\mathcal{G}$;
  
  \item $\mathcal{T}$ is equal to the set of triples $\{ p, q, r \} \in \mathcal{T}_0$ such that at least one of $\tmat{p}{q}$, $\tmat{q}{r}$, and $\tmat{p}{r}$ is~$\bot$.
\end{itemize}
If the loop terminates without error, then $\mathcal{G}$~is the complete graph on~$\Sigma$ and the algorithm's output is correct. Thus it suffices to prove that every time step~\ref{step:choose-triangle} is reached, there exists a triangle~$\{ p, q, r \} \in \mathcal{T}_0$ with exactly two edges in~$\mathcal{G}$.
Since the edges of the Euclidean spanning tree~$\mathcal{S}$ do not cross, we can complete~$\mathcal{S}$ into a full triangulation\footnote{Here by \emph{full triangulation} of~$\Sigma$ we mean a simplicial complex whose vertices are exactly the elements of~$\Sigma$ and whose reunion is the convex hull of~$\Sigma$~\cite[Definition~2.2.1, 1.0.2]{DeLoeraRambauSantos2010}. For example, the only full triangulation of three distinct colinear points consists of the two segments joining the middle one to the other two, the three points themselves, and the empty face.}~$\mathcal{C}$ of~$\Sigma$ \cite[Lemma 3.1.2]{DeLoeraRambauSantos2010}.

If $\mathcal{C}$~contains an edge that is not yet in~$\mathcal{G}$, let~$\mathcal{U}$ be the graph dual to $\mathcal{C}\backslash\mathcal{S}$, that is, the graph whose vertices are the faces of the triangulation~$\mathcal{C}$ and where two vertices are adjacent when the corresponding faces are separated by an edge of the triangulation that is not part of the spanning tree. The graph~$\mathcal{U}$ is connected since $\mathcal{S}$~is acyclic and acyclic since $\mathcal{S}$~is connected, so $\mathcal{U}$~is a tree. Now consider the faces of~$\mathcal{C}$ in a postfix traversal of~$\mathcal{U}$ rooted on the face at infinity. The first face whose boundary is not contained in~$\mathcal{G}$ is a void triangle with exactly one edge not in~$\mathcal{G}$.

If now all edges of~$\mathcal{C}$ are in~$\mathcal{G}$ but there exists an element of~$\mathcal{V}$ with an edge not in~$\mathcal{G}$, consider a triangulation~$\mathcal{C}'$ obtained from~$\mathcal{C}$ by a \emph{flip}, that is, by replacing the common edge of two adjacent triangles by the other diagonal of the convex quadrilateral consisting of the reunion of the two triangles. The new faces created by the flip are elements of~$\mathcal{T}_0$ with at least two edges in~$\mathcal{G}$. If the remaining edge is not in~$\mathcal{G}$, we are done; otherwise, we can iterate the argument starting from~$\mathcal{C}'$. Since any element of~$\mathcal{V}$ can be completed into a triangulation and any triangulation can be connected to any other by a sequence of flips~\cite[Sec.~3.4.1]{DeLoeraRambauSantos2010}, we will eventually reach a triangle with exactly two edges in~$\mathcal{G}$.

Finally, if the edges of elements of~$\mathcal{V}$ are all in~$\mathcal{G}$, then the only ${\{ \alpha, \beta \} \not\in \mathcal{G}}$ with $\alpha \neq \beta$ in~$\Sigma$ are such that the segment $[\alpha, \beta]$ contains a third point of~$\Sigma$. Indeed, consider a segment $[\alpha, \beta]$ on which there is no other point of~$\Sigma$. Either the elements of~$\Sigma$ are all colinear, in which case $\{ \alpha, \beta \} \in \mathcal{S} \subseteq \mathcal{G}$, or there exists a point $\beta' \in \Sigma$ closest to the line through $\alpha$~and~$\beta$ but not on this line, and $\{ \alpha, \beta, \beta' \} \in \mathcal{T}$. Then for any closest $\alpha, \beta$ with $\{ \alpha, \beta \} \not\in \mathcal{G}$, there is a point $\beta' \in [\alpha, \beta]$ such that $\{ \alpha, \beta, \beta' \} \in \mathcal{F}$ has exactly two edges in~$\mathcal{G}$.
\end{proof}

\begin{remark}
  \label{rk:connect-all-details}
  \begin{enumerate}
    \item Any plane spanning tree of~$\Sigma$ can be used as the spine~$\mathcal{S}$. Taking a Euclidean minimum spanning tree helps limit the cost of the computation, but it would be even better to weight the edges by a more precise estimate of the cost of computing the corresponding $\tmat{\alpha}{\beta}$, possibly also taking into account its estimated condition number.
    
    \item At steps~\ref{step:compose-flat} and~\ref{step:close-triangle}, we compute $\tmat{p}{r}$ and $\tmat{r}{p}$ separately rather than deducing the latter from the former using Lemma~\ref{lem:inverse-connection} in order to limit the loss of numerical accuracy from repeated matrix inverses. Similary, adapting the algorithm used at step~\ref{step:numeric-tmat-direct} to compute~$\tmat{\alpha}{\beta}^{- 1}$ as the same time as~$\tmat{\alpha}{\beta}$ may yield more accurate results. Processing the triples $\{ p, q, r \}$ by decreasing accuracy of $\left( \tmat{p}{q}, \tmat{q}{r} \right)$ may also help.
    
    \item At step~\ref{step:connect-all:nonempty-triangles}, we can restrict ourselves to triangles that \emph{can be checked} not to contain any other singular points using interval arithmetic, provided that we verify that the number of matrices computed is equal to $| \Sigma |  (| \Sigma | - 1) / 2$ before exiting the algorithm and restart the whole computation with a higher working precision if not.
    
    \item As noted in Remark~\ref{rk:alignments}, the computation of~$\mathcal{F}$ (which involves exact operations in~$\mathbb{K}$) can be combined with that of the sets $\mathcal{A} \left( \dir \right)$ from Algorithm~\ref{algo:stokes}.
    
    \item While we only have $\mathrm{O} (n^2)$ matrices to compute, the cardinality of~$\mathcal{V}$ can reach $\Theta (n^3)$, computing it (step~\ref{step:connect-all:nonempty-triangles}) in the obvious way can take up to~$\Theta (n^4)$ operations, and keeping~$\mathcal{T}$ up to date (step~\ref{step:update-queue}) up to $\Theta (n^3)$ operations in total. (It is not too hard to arrange that step~\ref{step:choose-triangle} executes in constant time by modifying step~\ref{step:update-queue} to maintain a partition of~$\mathcal{T}$ according to the number of `known' edges.) These naïve estimates can likely be improved using of geometric data structures; however, estimating the complexity of the algorithm by assigning a unit cost to operations in~$\mathbb{K}$ or in $\CComp$ is highly unrealistic.
  \end{enumerate}
\end{remark}

\begin{remark}
  (Complexity with respect to the precision) \label{rk:complexity}When the operator~$D$ is fixed, the output of Algorithm~\ref{algo:stokes}, viewed as a family of matrices over~$\CComp$, is entirely determined, but it makes sense to ask about the cost of computing $2^{- p}$-approximations of the entries of these matrices as~$p$ tends to infinity. In this setting, the algorithm reduces to a fixed sequence of operations on computable complex numbers. In addition, for each intermediate result~$x_i$, there exists a constant~$c_i > 0$ such that, when $p$~is large enough, $x_i$ can be evaluated within an error bounded by~$2^{- p}$ starting from $2^{- p - c_i}$-approximations of $x_0, \ldots, x_{i - 1}$.
  
  Assume $\mathbb{K}= \bar{\mathbb{Q}}$, that is, the coefficients of the polynomial coefficients of the differential equation $D y = 0$ are algebraic numbers. Step~\ref{step:numeric-tmat-direct} of Algorithm~\ref{algo:connect-all} can then be performed in $\mathrm{O} (\mathrm{M} (p \log (p)^2))$ bit operations, where $\mathrm{M} (p)$ denotes the complexity of multiplying $p$-bit integers~\cite{vdH2001}. The remaining steps of Algorithm~\ref{algo:stokes} that depend on~$p$ boil down to additions and multiplications of complex numbers, evaluations of exponentials and logarithms, and evaluations of derivatives of $1 / \Gamma$ at points in~$\mathbb{K}$, all of which can be performed within the same complexity bound~\cite{Brent1978}. The complexity of Algorithm~\ref{algo:stokes} with respect to the target precision alone is thus $\mathrm{O} (\mathrm{M} (p \log (p)^2))$, or $\mathrm{O} (p \log (p)^3)$ using the bound $\mathrm{M} (p) = \mathrm{O} (p \log p)$ \cite{HarveyHoeven2021}. This is better by a factor $\asymp \log (p)$ than the best published complexity bound in this case, due to van der Hoeven~\cite{vdH2007}. However, van der Hoeven's algorithm applies to singularities of arbitrary multilevel.
  
  For a more general computable~$\mathbb{K}$, classical algorithms for the required basic operations~\cite{Borwein1987,ChudnovskyChudnovsky1988} yield a bound of $\mathrm{O} (p^{3 / 2} \log (p)^{\mathrm{O} (1)})$, not counting the time needed to compute numerical approximations of the coefficients of~$D$. This holds both for our method and for van der Hoeven's.
\end{remark}

\section{Further examples}\label{sec:more-examples}

We conclude with a few additional examples highlighting various aspects of the algorithm and implementation: one illustrating some degeneracies that can happen and the role of exponents on a simple hypergeometric equation, one where the formal solutions in the Laplace plane already involve logarithms, and one where the numerical analytic continuation step is more challenging. We also provide some data on the performance of our implementation on some of these examples. This is intended to give the reader an impression of the problems that are within reach with no serious optimisation effort, not as a thorough analysis. All measurements were performed on a laptop equipped with an Intel i7-10810U CPU, running SageMath version~10.7 and \codestar{ore\_algebra} revision \codestar{07a1adbb}.

\begin{example}[Variations on the confluent hypergeometric equation $D_{2, 1}$]
  \label{ex:hgeom} Let us consider again the operator~\eqref{eq:Dqp} (p.~\pageref{eq:Dqp}), now in the simple case $q = 2$, $p = 1$. In terms of $x = z^{- 1}$ and~$\partial$, the operator reads
  \[ D = x^{- 2} \partial^2 - ((\nu_1 + \nu_2 - 1) x^{- 1} + x^{- 2}) \partial + ((\nu_1 - 1)  (\nu_2 - 1) + \mu x^{- 1}) . \]
  The Stokes values are $0$~and~$1$ and our standard formal fundamental solution (Sec.~\ref{sec:local-bases}) is given by
  \begin{align*}
    y_1 & = x^{\mu}  (1 + (\mu - \nu_1 + 1)  (\mu - \nu_2 + 1) x + \cdots),\\
    y_2 & = \mathrm{e}^{- 1 / x} x^{\nu_1 + \nu_2 - \mu - 1}  (1 - (\mu - \nu_1)  (\mu - \nu_2) x + \cdots) .
  \end{align*}
  It is known that the associated Stokes matrix in the direction $\dir = 0$ is equal to
  \[ \left(\begin{array}{cc}
       1 & 0\\
       c & 1
     \end{array}\right), \quad c = \frac{2 \pi \mathrm{i}}{\Gamma (1 + \mu - \nu_1) \Gamma (1 + \mu - \nu_2)} . \]
  The Borel transform of $D$ for general parameters is
  \begin{equation}
    \Delta = \xi (\xi - 1)  \diff{\xi}{2} + ((- \nu_1 - \nu_2 + 5) \xi + \mu - 2)  \diff{\xi}{} + (\nu_1 - 2)  (\nu_2 - 2), \label{eq:D21-borel}
  \end{equation}
  with exponents $(0, \mu - 1)$ at~$0$ and $(0, \nu_1 + \nu_2 - \mu - 2)$ at~$1$. When $\nu_1 = 1$, one can replace~$D$ by $xD$ while preserving both the solution space and the fact that coefficients are polynomials in~$x^{- 1}$. The Borel transform then simplifies to
  \begin{equation}
    \Delta = \xi (\xi - 1)  \ddx{\xi} + (- \nu_2 + 2) \xi + \mu - 1 \label{eq:D21-borel-1}
  \end{equation}
  and only the free exponents~$\lambda_{[0]} = \mu - 1$ at~$0$ and $\lambda_{[1]} = \nu_2 - \mu - 1$ at~$1$ remain. Our implementation automatically performs this simplification.

    First consider the case $\mu = 1$, $\nu_1 = 1$, $\nu_2 = 3 / 2$. After simplification, one has $\Delta = x (x - 1)  \diff{\xi}{} + \frac{1}{2} x$, $\lambda_{[0]} = 0$, $\lambda_{[1]} = - 1 / 2$. It turns out that $\localbasis{1}$ coincides with the analytic continuation of~$- \localbasis{0}$, though the algorithm only detects the coincidence numerically, by computing $\tmat{0}{1} = [- 1.00 \ldots]$. Moreover, $\localbasis{0}$~is just $[\mathcal{B} (y_0)]$, that is, $\bmat{0} =\mathbf{I}_1$, and one has
    \[ \bfmat{1} = [2 \pi \mathrm{i} \mathrm{e}^{\mathrm{i} \pi (\lambda_{[1]} + 1)} \Gamma (\lambda_{[1]})^{- 1}] = \left[ 2 \sqrt{\pi} \right], \]
    so that the Stokes multiplier in the direction $\dir = 0$ is obtained as the product $1 \times (- 1.00 \ldots) \times 2 \sqrt{\pi} = - 3.54 \ldots$. In the direction $\dir = \pi$, the Stokes matrix is trivial since all solutions of~$\Delta$ are analytic at the origin, leading in the algorithm to $\bfmat{0} = 0$.
    
    Now suppose that $\mu = 1$, $\nu_1 = 4 / 3$, $\nu_2 = 5 / 3$. Then, both at~$0$ and at~$1$, the Borel transform~\eqref{eq:D21-borel} has a double exponent equal to~$0$. The associated bases of solutions are
    \begin{align*}
      \localbasis{0} (\zeta) & = \left( 1 + \frac{2}{9} \zeta \log \zeta + \frac{5}{9} \zeta + \cdots, \hspace{1em} 1 + \frac{2}{9} \zeta + \cdots \right),\\
      \localbasis{1} (\zeta) & = \left( 1 - \frac{2}{9}  (\zeta - 1) \log (\zeta - 1) - \frac{5}{9}  (\zeta - 1) + \cdots, \hspace{1em} 1 - \frac{2}{9}  (\zeta - 1) + \cdots \right) .
    \end{align*}
    The Borel transform matrices both send the relevant solution of~$D$ to the element of~$\localbasis{\alpha}$ free of logarithms, that is, $\bmat{0} = \bmat{1} = [0, 1]^{\mathrm{T}}$. As discussed in Sec.~\ref{BorelSol}, the operator~$\Delta$ has `extra' solutions that can be intepreted as the Borel transforms of solutions of $D (y) = a$ for $a \in \mathbb{C}$. The algorithm computes the connection matrix
    \[ \tmat{0}{1} = \left( \begin{array}{rr}
         0.90854 \ldots + 0.00000 \ldots \mathrm{i} & 0.27566 \ldots + 0.00000 \ldots \mathrm{i}\\
         0.63318 \ldots - 2.85427 \ldots \mathrm{i} & - 0.90854 \ldots - 0.86602 \ldots \mathrm{i}
       \end{array} \right) \]
    and deduces $\tmat{1}{0} = \tmat{0}{1}^{- 1} = \tmat{0}{1}^{- 1} + 0.86602 \ldots \mathrm{i}$. One has $\bfmat{0} = \bfmat{1} = (2 \pi \mathrm{i}, 0)$, leading to $c = - 1.73205 \ldots \mathrm{i}$, and in this case the Stokes matrix in the direction~$\dir = \pi$ is the transpose of that in the direction~$\dir = 0$.
    
    Finally set $\mu = 0$, $\nu_1 = 0$, $\nu_2 = 1 / 2$. This leads to an `unprepared' equation where neither of the exponents $\mu = 0$ of $y_1 (x) = 1 + \frac{1}{2} x + \cdots$ and $\nu_1 + \nu_2 - \mu - 1 = - 1 / 2$ of~$y_2 = \mathrm{e}^{- 1 / x} x^{- 1 / 2}$ lies in the open right-hand plane. The formal Borel transform
    \[ \mathcal{B} (y_1) (\zeta) = \delta_0 (\zeta) + \frac{1}{2} + \frac{3}{4} \zeta + \cdots \]
    contains a Dirac term, while
    \[ \mathcal{B} (\mathrm{e}^{1 / x} y_2) (\zeta) = - \frac{\zeta^{- 3 / 2}}{2 \sqrt{\pi}} \]
    is not integrable at the origin. One has
    \[ \Delta = \zeta (\zeta - 1)  \diff{\zeta}{2} + \left( \frac{9}{2} \zeta - 2 \right)  \diff{\zeta}{} + 3, \]
    and
    \begin{align*}
      \localbasis{0} (\zeta) & = \left( \zeta^{- 1} + \frac{1}{2} \log \zeta + \frac{3}{4} \zeta \log \zeta + \cdots, 1 + \frac{3}{2} \zeta + \cdots \right),\\
      \localbasis{1} (\zeta) & = \left( (\zeta - 1)^{- 3 / 2}, 1 - \frac{6}{5}  (\zeta - 1) + \cdots \right) .
    \end{align*}
    From these truncated expansions one computes $\bmat{0} = [0, 1 / 2]^{\mathrm{T}}$, mapping~$y_1$ to its Borel transform minus the Dirac terms. The connection matrix is
    \[ \tmat{0}{1} = \left( \begin{array}{rr}
         0.69314 \ldots \mathrm{i} & - 1.00000 \ldots\\
         2.00000 \ldots & 0.00000 \ldots
       \end{array} \right) \]
    and one has $\bfmat{0} = (2 \pi \mathrm{i} \mathrm{e}^{\mathrm{i} \pi (\lambda_{[0]} + 1)} \Gamma (\lambda_{[0]})^{- 1}, 0) = \left( - 4 \sqrt{\pi}, 0 \right)$, leading to the value $c = - 3.54490 \ldots \mathrm{i}$. In the other direction, one has $\bmat{1} = \left[ - 1 / \left( 2 \sqrt{\pi} \right), 0 \right]^{\mathrm{T}}$ and $\bfmat{0} = (2 \pi \mathrm{i}, 0)$, so the Stokes matrix is trivial.
\end{example}

\begin{example}[Pólya walks]
  \label{ex:polya} Denote by $w^{(d)}_{\ell}$ the number of nearest-neighbor walks on~$\mathbb{Z}^d$ starting from the origin and returning to it after $\ell$~steps. For $3 \leqslant d \leqslant 15$, consider `the' differential operator~$\Delta^{(d)}$ of minimum order that annihilates the generating series $\sum_{\ell = 0}^{\infty} w_{\ell}^{(d)}$ and let~$D^{(d)}$ be its inverse Borel transform.
  
  In the case $d = 3$, one has
  \begin{align*}
    \Delta^{(d)} & = \xi^2  (4 \xi^2 - 1)  (36 \xi^2 - 1)  \diff{\xi}{3} + (1296 \xi^5 - 240 \xi^3 + 3 \xi) \diff{\xi}{2}\\
    &   + (2592 \xi^4 - 288 \xi^2 + 1) \ddx{\xi} + 864 \xi^3 - 48 \xi,\\
    D^{(d)} & = 144 x^{- 3} \partial^6 - 1296 x^{- 2} \partial^5 + (- 40 x^{- 3} + 2592 x^{- 1}) \partial^4 + (240 x^{- 2} - 864) \partial^3\\
    &   + (x^{- 3} - 288 x^{- 1}) \partial^2 + (- 3 x^{- 2} + 48) \partial + x^{- 1},
  \end{align*}
  and the components of the standard fundamental solution of~$D^{(d)}$ are
  \begin{align*}
    y_1 (x), y_6 (x) & = \mathrm{e}^{\pm 1 / (6 x)} x^{3 / 2}  \left( 1 \pm \tfrac{27}{4} x + \cdots \right),\\
    y_2 (x), y_5 (x) & = \mathrm{e}^{\pm 1 / (2 x)} x^{3 / 2}  \left( 1 \pm \tfrac{17}{4} x + \cdots \right),\\
    y_3 (x) & = x \ln x + 12 x^3 \ln x + 32 x^3 + \cdots,\\
    y_4 (x) & = x + 12 x^3 + \cdots .
  \end{align*}
  The operator~$\Delta^{(d)}$ has exponents $(0, 0, 0)$ at~$0$ and $(0, 1 / 2, 1)$ at each of the other four singular points. At~$0$, our implementation uses the local basis
  \begin{align*}
    \localbasis{0} (\zeta) & = \left( \tfrac{1}{2} \ln^2 \zeta + 3 \zeta^2 \ln^2 \zeta + 7 \zeta^2 \ln \zeta + \mathrm{O} (\zeta^4 \ln^2 \zeta) \right.,\\
    &  \ln \zeta + 6 \zeta^2 \ln \zeta + 7 \zeta^2 + \mathrm{O} (\zeta^4 \ln^2 \zeta),\\
    &   1 + 6 \zeta^2 + \mathrm{O} (\zeta^4)) .
  \end{align*}
  Since the solutions $y_3(x)$ and $y_4(x)$ of~$D$ attached to the Stokes value~$0$ only involve logarithms linearly and, by Lemma~\ref{dico}, the Borel transform cannot increase the degree in $\log(x)$, we already know that the first coordinate in this basis of their Borel transforms will be zero.
  Writing
  \begin{align*}
    \mathcal{B} (y_3) (\zeta) & = \ln \zeta + \gamma + \frac{\zeta^2}{\Gamma (3)}  (12 \ln \zeta + 12 \gamma - 18 + 32) + \cdots,\\
    \mathcal{B} (y_4) (\zeta) & = \zeta + \frac{12 \zeta^2}{\Gamma (3)} + \cdots
  \end{align*}
  we obtain
  \[ \bmat{0} = \left[\begin{array}{cc}
       0 & 0\\
       1 & 0\\
       \gamma & 1
     \end{array}\right] . \]
  Similarly, the expansion of the integrals
  \begin{align*}
    \int_{\Gamma_{0}} (\ln \zeta + 6 \zeta^2 \ln \zeta + 7 \zeta^2 + \cdots) \mathrm{e}^{- \zeta / x} \mathrm{d} \zeta & = 2 \pi \mathrm{i} x + 24 \pi \mathrm{i} x^3 + \cdots\\
    \int_{\Gamma_{0}} \left( \dfrac{1}{2} \ln^2 \zeta + \cdots \right) \mathrm{e}^{- \zeta / x} \mathrm{d} \zeta & = \frac{1}{2}  [2 \pi \mathrm{i} x (2 \ln x - 2 \gamma - 2 \pi \mathrm{i})] + \cdots
  \end{align*}
  yields
  \[ \bfmat{0} = \left[\begin{array}{ccc}
       2 \pi \mathrm{i} & 0 & 0\\
       2 \pi^2 - 2 \pi \mathrm{i} \gamma & 2 \pi \mathrm{i} & 0
     \end{array}\right] . \]
  The calculation of $\bmat{\alpha}$~and~$\bfmat{\alpha}$ for the other Stokes values~$\alpha$ is closer to the previous examples and we omit it. After computing the connection matrices over each of the four segments delimited by two consecutive Stokes values and forming the appropriate combinations, we eventually obtain the following Stokes matrix in the direction~$\dir = 0$:
  \[ \begin{split}
     \small{\left[\begin{array}{cccccc}
       1                                & 0                             & 0                               & \cdots \\
       0.000 \ldots - 18.000 \ldots i   & 1                             & 0                               & \cdots \\
       13.540 \ldots + 0.000 \ldots i   & 0.000 \ldots + 1.504 \ldots i & 1                               & \cdots \\
       - 7.815 \ldots - 14.179 \ldots i & 0.000 \ldots - 0.868 \ldots i & 0                               & \cdots \\
       36.000 \ldots + 0.000 \ldots i   & 0.000 \ldots + 8.000 \ldots i & 0.000 \ldots - 2.930 \ldots i   & \cdots \\
       0.000 \ldots - 4.000 \ldots i    & 1.333 \ldots - 0.000 \ldots i & - 0.976 \ldots + 1.772 \ldots i & \cdots 
     \end{array}\right.} \\
     \small{\left.\begin{array}{ccccc}
       0                               & 0                             & 0 \\
       0                               & 0                             & 0 \\
       0                               & 0                             & 0 \\
       1                               & 0                             & 0 \\
       0.000 \ldots - 5.077 \ldots i   & 1.                            & 0 \\
       - 1.692 \ldots + 0.000 \ldots i & 0.000 \ldots - 0.666 \ldots i & 1
     \end{array}\right]}
     \end{split} \]
  (along with a second, upper triangular one in the direction $\dir = \pi$).

  \begin{table}
    \begin{tabular}{rrrrr}
      \toprule
      $\nu$ & $n$ & $\varepsilon = 10^{- 10}$ & $\varepsilon = 10^{- 100}$ & $\varepsilon = 10^{- 1000}$\\
      \midrule
      3 & 6 & 0.3 & 0.9 & 4.3\\
      4 & 7 & 0.6 & 0.5 & 6.8\\
      5 & 10 & 1.3 & 2.0 & 7.8\\
      6 & 11 & 1.4 & 2.2 & 9.3\\
      7 & 14 & 3.5 & 5.6 & 23.5\\
      8 & 15 & 7.7 & 6.9 & 26.5\\
      9 & 18 & 23.9 & 18.8 & 62.5\\
      10 & 19 & 29.7 & 24.0 & 81.9\\
      11 & 22 & 53.8 & 38.0 & 146.0\\
      12 & 23 & --- & 55.2 & 199.3\\
      13 & 26 & 129.2 & 70.0 & 306.2\\
      14 & 27 & --- & 96.7 & 404.1\\
      15 & 30 & 261.0 & 115.3 & 558.8\\
      \bottomrule
    \end{tabular}
    \medskip
    \caption{\label{table:polya}Time in seconds to compute all the Stokes matrices of~$D^{(d)}$
    (which has order~$n$ and degree $\nu=d$ in $1/x$)
    with an absolute error on each entry not exceeding~$\varepsilon$. Missing entries correspond to cases where some intermediate computation failed due to insufficient precision.}
  \end{table}

  More generally, $D^{(d)}$ is an operator of order $n = 2 d$ when $d$~is odd and $n = 2 d - 1$ when $d$~is even, with coefficients of degree~$\nu = d$ with our usual conventions. Its Stokes values include $0$~with multiplicity~$d - 1$ and $\lfloor (d + 3) / 2 \rfloor$ pairs of opposite rational numbers; in particular, they all lie on the real axis. The Stokes value~$0$ has multiplicity~$d - 1$ and a unique associated exponent equal to~$0$, so that the corresponding solutions of~$D^{(d)}$ involve logarithms up to the power~$d - 2$. All other Stokes values have multiplicity one and an associated exponent equal to $d / 2 - 1$. Correspondingly, the solutions of~$\Delta^{(d)}$ at~$0$ involve logarithms up to the power~$d - 1$, while, at other singular points, there is a basis of solutions consisting of~$d - 1$ power series and one solution with either half-integer exponents (for odd~$d$) or logarithmic terms (for even~$d$).
  
  Table~\ref{table:polya} shows the time taken by our implementation to compute the Stokes matrices of~$D^{(d)}$ to various accuracies. In each case, we first call the \codestar{stokes\_dict} function with the indicated~$\varepsilon$. If the maximum radius~$\delta$ of the computed intervals exceed~$\varepsilon$, we repeat the computation with $\varepsilon$ replaced by $\varepsilon^{2.1} / \delta$, and so on until all Stokes multipliers are known with an error of less than the initial~$\varepsilon$. The reported running time corresponds to the total of all runs. Low-precision computations are slower than intermediate-precision ones for large instances because this procedure can require more iterations when the initial precision is small. More generally, we observe a very slow (less than linear in the number of digits!) growth of running times as the error tolerance decreases, suggesting a significant loss of accuracy over the course of the computation and weaknesses in automatic working precision management.
\end{example}

\begin{example}[A tunnel]
  \label{ex:tunnel} For $\eta > 0$, we consider the operator
  \[ D = x^{- 2} \partial^4 + (1 - 4 x^{- 2}) \partial^3 + (\eta^2 x^{- 1} + (5 + \eta^2) x^{- 2}) \partial + 2 (1 + \eta^2) x^{- 2} . \]
  The Stokes values are $0$, $2$, and $1 \pm \eta \mathrm{i}$. The transformed operator $\Delta$ has exponents $(0, 0)$ at the origin, $(- 4, 1)$ at~$2$, and $(1, \mp (2 / \eta - \eta / 2) \mathrm{i})$ at $1 \pm \eta \mathrm{i}$. In the direction $\dir = 0$, there is a single nontrivial Stokes multiplier, corresponding to the pair of Stokes values $(0, 2)$. 
  The numerical computation of this Stokes multiplier is challenging for small~$\eta$ because the line segment connecting these Stokes values passes between $1\pm\eta\rmi$, where large imaginary exponents are present.
  
  To test the behaviour of our implementation in this situation, we use the following SageMath code, which computes the Stokes matrices of~$D$ and extracts the Stokes multiplier of interest:
\begin{alltt}
sage: d = x^2*Dx
sage: dop = (d^4 + (x^2 - 4)*d^3 + (eta^2*x + 5 + eta^2)*d^2
....:         + (2*x - 2 - 2*eta^2)*d + (2 + 2*eta^2)*x)
sage: stokes = stokes_dict(dop, 1e-50)
sage: stokes[1][3,0]
\end{alltt}
  \noindent Table~\ref{table:tunnel} shows the output for various values of~$\eta$ along with the corresponding running times. We observe that the code forgoes computing the result with an absolute error matching the tolerance of $10^{- 50}$ passed as input but does return a result with a \emph{relative} error of this order of magnitude. This outcome is somewhat sensitive to the choice of the tolerance. With $\eta = 10^{- 4}$, it becomes necessary to decrease the tolerance for the computation to succeed at all.

  \begin{table}
    \footnotesize
    \begin{tabular}{cr@{$\,+\,$}rc}
      \toprule
      $\eta$ & \multicolumn{2}{c}{computed Stokes constant}& time (s)\\
      \midrule
      $10^{- 1}$ & $[\pm 1.75 \cdot 10^{11}]  $ & $[- 3.34009\dots{}269 \cdot 10^{52} \pm 3.58 \cdot 10^{11}] \,\mathrm{i}$ & 0.4\\
      $10^{- 2}$ & $[\pm 1.32 \cdot 10^{490}] $ & $[- 6.39027\dots{}429 \cdot 10^{543} \pm 3.01 \cdot 10^{489}] \,\mathrm{i}$ & 3.5\\
      $10^{- 3}$ & $[\pm 4.92 \cdot 10^{5403}]$ & $[- 3.73685\dots{}673 \cdot 10^{5455} \pm 6.22 \cdot 10^{5403}] \,\mathrm{i}$ & 100 \\
      \bottomrule
    \end{tabular}
    \medskip
    \caption{\label{table:tunnel}Computed Stokes constants and timings for Example~\ref{ex:tunnel}.}
  \end{table}
\end{example}

\bibliography{LMReq}

\end{document}